\documentclass{article}
\usepackage[margin=1in]{geometry}
\usepackage{cite}
\usepackage{amsmath,amsthm,amssymb,amsfonts}
\usepackage{graphicx}
\usepackage{textcomp}
 \usepackage[table,xcdraw]{xcolor}
\def\BibTeX{{\rm B\kern-.05em{\sc i\kern-.025em b}\kern-.08em
    T\kern-.1667em\lower.7ex\hbox{E}\kern-.125emX}}

\usepackage{subcaption}
\usepackage[noend,ruled,lined,linesnumbered]{algorithm2e}
 \SetAlFnt{\small}
\usepackage{comment}
\usepackage{hyperref}

\usepackage{multirow}
\usepackage{makecell}
\usepackage[normalem]{ulem}
\useunder{\uline}{\ul}{}
\usepackage{longtable}
\usepackage{multirow}
\usepackage{hhline}  
\usepackage{array}
\usepackage{enumitem} 
\usepackage{pifont}

\usepackage{tablefootnote}

\usepackage[framemethod=tikz]{mdframed}
\newmdenv[backgroundcolor=cyan!10,
topline=false,
bottomline=false,
rightline=false,
skipabove=\topsep,
skipbelow=\topsep
]{siderules}

\usepackage{wrapfig}   
\newcommand\ceil[1]{\lceil#1\rceil}

\newcommand{\activee}{\textit{active}}

\newcommand{\osci}{\textit{oscillating}}
\newcommand{\lead}{\textit{leader}}

\newcommand{\portent}{\textit{pent}}
\newcommand{\portext}{\textit{pext}}
\newcommand{\forward}{\textit{forward}}
\newcommand{\backtrack}{\textit{backtrack}}

\newcommand{\nextnode}{\textit{next\_node}}

\newcommand{\Fr}{\textit{$F_r$}}
\newcommand{\groupsize}{\textit{gr\_size}}
\newcommand{\groupno}{\textit{gr\_num}}
\newcommand{\grsource}{\textit{gr\_source}}
\newcommand{\sourcenode}{\textit{Source}}
\newcommand{\required}{\textit{required}}
\newcommand{\helper}{\textit{helper}}

\newcommand{\yes}{\texttt{YES}}
\newcommand{\no}{\texttt{NO}}
\newcommand{\done}{\textsc{Done}}

\newcommand{\cd}{\textsc{Location-Aware Dispersion}}
\newcommand{\dpn}{\textsc{Dispersion}}

\newtheorem{theorem}{Theorem}

\newtheorem{lemma}{Lemma}
\newtheorem{corollary}{Corollary}

\begin{document}



\title{Location-Aware Dispersion on Anonymous Graphs}

\author{
Himani$^1$, Supantha Pandit$^1$, 
Gokarna Sharma$^2$ \\\\
   $^1$Dhirubhai Ambani University, Gandhinagar, Gujarat, India\\
  {\it 202221003@dau.ac.in, pantha.pandit@gmail.com}\\
 $^2$Kent State University, Kent, Ohio, USA\\
{\it gsharma2@kent.edu} 
}

\date{}
\maketitle

\begin{abstract}
The well-studied \dpn~problem is a fundamental coordination problem in distributed robotics, where a set of mobile robots must relocate so that each occupies a distinct node of a network. \dpn~assumes that a robot can settle at any node as long as no other robot settles on that node.  In this work, we introduce \cd, a novel generalization of \dpn~that incorporates location awareness: Let $G = (V, E)$ be an anonymous, connected, undirected graph with $n = |V|$ nodes, each labeled with a color $\mathsf{col}(v) \in \mathcal{C} = \{c_1, \dots, c_t\}, t\leq n$. A set $\mathcal{R} = \{r_1, \dots, r_k\}$ of $k \leq n$ mobile robots is given, where each robot $r_i$ has an associated color $\mathsf{col}(r_i) \in \mathcal{C}$. Initially placed arbitrarily on the graph, the goal is to relocate the robots so that each occupies a distinct node of the same color. When $|\mathcal{C}|=1$, \cd~reduces to  \dpn. There is a solution to \dpn~in graphs with any $k\leq n$ without knowing $k,n$. 

Like \dpn, the goal is to solve \cd~minimizing both time and memory requirement at each agent. We develop several deterministic algorithms with guaranteed bounds on both time and memory requirement. We also give an impossibility and a lower bound for any deterministic algorithm for \cd. To the best of our knowledge, the presented results collectively establish the algorithmic feasibility of \cd~in anonymous networks and also highlight the challenges on getting an efficient solution compared to the solutions for \dpn.

\end{abstract}


\section{Introduction}\label{sec:introduction}
The well-studied \dpn~problem, first proposed by Augustine and Moses Jr.~\cite{augustine2018dispersion},  asks $k \leq n$ mobile robots, initially placed on the nodes of an $n$-node graph $G=(V,E)$, to autonomously reposition so that each node contains at most one robot. 
In this paper, we introduce and study the \cd~problem, a generalization of \dpn~that incorporates location-awareness. 

\begin{siderules}
{\color{red!70!blue} \cd.}
Let $G = (V, E)$ be a connected, undirected, anonymous graph with $n$ nodes. Each node $v \in V$ has a color $\mathsf{col}(v) \in \mathcal{C} = \{c_1, \dots, c_t\}$, where multiple nodes may share a same color. A set $\mathcal{R} = \{r_1, \dots, r_k\}$ of $k \leq n$ mobile robots is given, with each robot $r_i \in \mathcal{R}$ having a color $\mathsf{col}(r_i) \in \mathcal{C}$. Initially, robots are arbitrarily placed on $G$. The goal is to relocate the robots so that each one occupies a distinct node matching its color, with no two robots sharing a node.
\end{siderules}


This variant models scenarios where robots must be assigned to category-specific tasks or resources. For example, in a multi-company charging infrastructure, nodes represent charging stations belonging to different providers, and robots (or electric vehicles) must be routed to stations of the corresponding provider. More generally, \cd~captures coordination in logistics, transportation, and multi-robot systems, where robots are constrained by category-specific requirements. When $|\mathcal{C}|=1$, \cd~reduces to \dpn, as a robot can settle at any node as long as no other robot settles there.

The objective in \dpn~is to place the robots on distinct nodes; once a collision-free (dispersed) configuration is achieved, the problem is considered solved. In contrast, a dispersed configuration alone does not suffice for \cd, i.e., even when all robots occupy distinct nodes, the configuration may still be invalid if a robot is placed on a node whose color does not match its own. Therefore, for \cd, a dispersed configuration, which is a solution to \dpn, is only a necessary condition, but not a sufficient one. 
Additionally, it has been shown that \dpn~is always solvable on any arbitrary $n$-node graph for any number of robots $k \leq n$, without prior knowledge of $k$ or $n$. In contrast, we will show that \cd~is not always solvable. This means \cd~is a more difficult problem than \dpn. 
Consequently, \cd~introduces additional challenges compared to \dpn, since robots must not only avoid collisions but also respect color constraints. As in \dpn, two complexity measures are central: {\em time complexity}, the number of rounds, 
and {\em memory complexity}, the amount of persistent memory per robot. The graph nodes are anonymous and memory-less. 
The objective is to design both time and memory efficient deterministic algorithms for \cd~as well as establish impossibility and lower bounds, 
that characterize the minimal requirements for solving the problem.

\paragraph{Model.}
%
%
We consider an undirected, unweighted graph $G=(V,E)$ with $n=|V|$ nodes, $m=|E|$ edges. A node $v \in V$ is assigned a color $c_i$, $\mathcal{C}=\{c_1,c_2,..., c_t\}, t\leq n,$ is the set of all colors used for the nodes. Each node $v\in V$ is memory-less, meaning it does not store any information. Each edge incident to a  $v$ is labeled using the port number $[1,\delta_v]$, where $\delta_v$ is the degree of node $v$. Hence, an edge connecting two different vertices is assigned a port number at each end. There is no relation between the port labels at the two ends of an edge.
We consider $k\leq n$ robots placed at the node of $G$ denoted by $\mathcal{R}=\{r_1,r_2.....r_k\}$. The initial arrangement of the robots on $G$ corresponds to: (i) the \emph{rooted configuration}, where all $k \leq n$ robots are collocated at a single source node; (ii) the \emph{dispersed configuration}, where each node initially hosts at most one robot; and (iii) the \emph{general configuration}, where the initial placement is neither rooted or dispersed. Each robot has a unique ID in the range $[1,k^{O(1)}]$. If the robots are aware of the type of their initial configuration (rooted, dispersed, or general), the configuration is said to be \emph{known}; otherwise, it is referred to as an \emph{unknown} configuration.
A robot $r \in \mathcal{R}$ is assigned a color $c_i$, where $\mathcal{C}=\{c_1,c_2,..., c_t\}$ is the set of all colors used for the robots, and $t\leq k$. Robots can move between vertices via edges; however, they can only occupy vertices and are not allowed to stay on edges.
%
We consider the {\em local communication} model (aka the face-2-face model), i.e.,  
where the robots located at a node $v \in V$ can communicate and can share their data and ID with only the other collocated robots at $v$. 

\begin{table*}[ht!]
\centering
 \resizebox{\textwidth}{!}{
\begin{tabular}{|p{1.8cm}|p{0.65cm}|>{\centering\arraybackslash}m{3.6cm}|
>{\centering\arraybackslash}m{4.8cm}|>{\centering\arraybackslash}m{6.1cm}|}
\hline

\textbf{Cases} & \textbf{Kwn} & \textbf{Rooted} & \textbf{General} & \textbf{Dispersed} \\ 
 \hline

\multicolumn{5}{|c|}{\cellcolor{blue!7}\textsc{\textbf{Special Cases}}} \\ 
\hline

\hline
{\centering $k \leq n$: Tree, Path, Ring}
&
{\centering Nil}
&
{\centering T: $O(n)$,  M: $O(\log (k+\Delta))$\\ Corollary \ref{cor:tree_n}}
&
{\centering T: $O(n)$, M: $O(\log (k+\Delta))$\\ Corollary \ref{cor:tree_n}}
&
{\centering T: $O(n)$, M: $O(\log (k+\Delta))$\\ Corollary \ref{cor:tree_n}}
\\
\hline


\multirow{2}{*}{\centering $k=1$}
& 

{\centering $k,n$}
& 
{\centering T: $\tilde{O}(n^5)$, M: $O(M^{*})$ Thm: \ref{the:single case}}
& 
- 
& 
{\centering T: $\tilde{O}(n^5)$, M: $O(M^{*})$ Thm: \ref{the:single case}}
\\ \cline{2-5}

& 
{\centering $k$}
& 
{\centering Not solvable, Thm: \ref{the:single case}}
& 
- 
& 
{\centering Not solvable, Thm: \ref{the:single case}} \\

\hline

{\centering $k=n$}
& 
{\centering $k,n$}
& 
{\centering T: $O(m)$, M: $O(\log k)$ \\ Thm: \ref{the:rootedgeneral}} 
& 
{\centering  T: $O(n/k\cdot m)$, M: $O(n/k\cdot \log k)$  \\Thm: \ref{thm:general-k=n}} 
& 
{\centering T: $O(\log k + n/k\cdot m)$,\\ M: $O(n/k\cdot \log k)$, Thm: \ref{thm:dispersed_k=n}} \\ 
\hline

\multicolumn{5}{|c|}{\cellcolor{blue!7}\textsc{\textbf{For Known Configuration}}} \\ 
\hline


\multirow{2}{*}{\centering $2\leq k\leq n$}
& 
{\centering $n$}
& 
{\centering T: $O(n/k \cdot m)$ \\ M: $O(n/k \cdot \log(k+\Delta))$ \\  Thm: \ref{thm:rooted-known}} 
&
{\centering T: $O(\log(n/k)\cdot n/k \cdot m)$ \\ 
M: If $\exists$ source with $>2$ robots: $O(n/k\cdot \log(k+\Delta))$, else: $O(n \cdot \log(k+\Delta))$ \\ 
Thm: \ref{thm:general-unknown}}  
& 
{\centering 
T: (i) $O(n^3 + n/k\cdot m)$, $i=1,2$ \\
(ii) $O(n^i \cdot \log  n+ n/k\cdot m)$, $i={3,4,5}$ \\
(iii) Else: $\tilde{O}(n^{5})$ \\
M: $O(\texttt{$M^{*}$}+n/k \cdot \log(k+\Delta))$\\
Thm: \ref{thm:dispersed_k<n}} \\ \cline{2-5}

& 
{\centering Nil}
& 
{\centering T: $O(\log(n/k)\cdot n/k \cdot m)$, M: $O(n/k \cdot \log(k+\Delta))$, Thm: \ref{thm:rooted-unknown}} 
& 
{\centering Thm: \ref{thm:general-unknown}} 
& 
{\centering Open problem} \\ 
\hline

\multicolumn{5}{|c|}{\cellcolor{blue!7}\textsc{\textbf{For Unknown Configuration}}} \\ 
\hline

\multicolumn{1}{|l|}{\multirow{4}{*}{$2\leq k\leq n$}} 
&
\multicolumn{1}{l|}{\multirow{2}{*}{$n$}} 
& 
\multicolumn{1}{c|}{\makecell[c]{\bf If $k$ known:} Thm: \ref{thm:rooted-known}} 
&
\multicolumn{1}{c|}{\multirow{2}{*}{\makecell[c]{T: (i) $k \ge \lfloor n/2\rfloor+1$: $O(n^3 + n/k\cdot m)$,\\ 
(ii) $k \ge \lfloor n/3\rfloor+1$: $O(n^4 \log n + n/k\cdot m)$,\\ (iii) Else: $\tilde{O}(n^{5})$ \\ 
M: $O(\texttt{$M^{*}$}+n/k \cdot \log(k+\Delta))$\\ Thm: \ref{thm:unknown-config}}}} 
&
\multirow{2}{*}{%
\makecell[c]{%
T: (i) $k \ge \lfloor n/2\rfloor+1$: $O(n^3 + n/k\cdot m)$\\
(ii) $k \ge \lfloor n/3\rfloor+1$: $O(n^4 \log n + n/k\cdot m)$\\
(iii) Else: $\tilde{O}(n^{5})$\\
M: $O(M^{*}+n/k \cdot \log(k+\Delta))$\\ Thm: \ref{thm:unknown-config}
}} \\ \cline{3-3}

\multicolumn{1}{|l|}{}                                 
&
\multicolumn{1}{l|}{}                     
&
\multicolumn{1}{l|}{{\makecell[c]{\textbf{If $k$ unknown}\\  
T: $\tilde{O}(n^{5})$, \\M: $O(\texttt{$M^{*}$}+n/k \cdot \log(k+\Delta))$ \\ 
Thm: \ref{thm:unknown-config}}}} 
&
\multicolumn{1}{l|}{}                      &                       \\ \cline{2-5} 
\multicolumn{1}{|l|}{}                                 & \multicolumn{1}{l|}{\multirow{2}{*}{Nil}} & \multicolumn{1}{l|}{{\makecell[c]{\textbf{If $k$ known:} Thm: \ref{thm:rooted-known}}}  }                                          & 
\multicolumn{1}{c|}{\multirow{2}{*}{\makecell[c]{Open problem}}} 
&
\multirow{2}{*}{\makecell[c]{Open problem}} \\ \cline{3-3}
\multicolumn{1}{|l|}{}                                 
&
\multicolumn{1}{l|}{}                     
& 
\multicolumn{1}{c|}{{\centering \textbf{If $k$ unknown:} Open problem} }                                          
& 
\multicolumn{1}{l|}{}                      &                       \\ \hline
\multicolumn{5}{|c|}{\cellcolor{blue!7}\textsc{\textbf{Lower Bound}}}                                                                                                                                                                                                            \\ \hline
\multicolumn{1}{|l|}{Lower Bound}                                 & \multicolumn{4}{c|}{T: $\Omega(\min\{n^2/k, m\})$, M: $\Omega(n/k\cdot \log k)$, Thm: \ref{thm:lb}} \\ \hline 

\end{tabular}
}
\caption{Proposed results on \cd. \texttt{$M^{*}$} denotes the memory required for the Universal Exploration Sequence (UXS) \cite{molla2023fast}. $i$ is the least distance between robots, $n~(k)$ is the number of nodes (robots), and $\Delta$ is the maximum degree. `-' indicates an impossible configuration and the `Kwn' column indicates whether the robots have prior knowledge of $k$ or $n$; an unspecified entry means that such knowledge may or may not be available, while `Nil' denotes no prior knowledge. In the bounds, `T:' and `M:' denote time and memory per agent, where $\tilde{O}$ suppresses polylogarithmic factors.
}
\label{table:table-result}
\end{table*}

The robots operate in {\em Communicate-Compute-Move} (CCM) cycles; each CCM cycle constitutes a round.  Therefore, in each round, a robot $r\in \mathcal{R}$ at node $v$ will perform these steps: 

{-\em Communicate:} $r$ communicates with the collocated robots at node $v$. 

{-\em Compute:} $r$ performs local computation based on the information shared during the communication phase and decide on whether to leave $v$, which port to use to leave, and what information to write at the robots staying at $v$. 

{-\em Move:} In this phase, $r$ moves through the computed port at $v$.

\begin{table}[ht!]
\begin{center}

\begin{tabular}{|p{2.4cm}|>{\centering\arraybackslash}m{3.3cm}|p{4.6cm}|p{4.5cm}|}
\hline 

\multicolumn{1}{|l|}{\textbf{Configuration}} & \multicolumn{1}{l|}{\textbf{Paper}} & \multicolumn{1}{l|}{\textbf{Time Complexity}} & {\textbf{Memory Complexity}}   \\ \hline

\multicolumn{1}{|l|}{\multirow{3}{*}{Rooted}}  & \cite{sudo2023near} & $O(k)$ & $O(\log k+\Delta)$ \\ \cline{2-4} 
                        
& \cite{sudo2023near}    & $O(k\log k)$    & $O(\log(k+\Delta))$ \\ \cline{2-4} 
                        
& \cite{KshemkalyaniKMP25-SPAA}    & $O(k)$  & $O(\log(k+\Delta))$ \\ \hline

\multicolumn{1}{|l|}{\multirow{2}{*}{General}} & \cite{sudo2023near}  & $O(k\log^2 k)$  & $O(\log(k+\Delta))$ \\ \cline{2-4}

& \cite{KshemkalyaniKMP25-SPAA}    & $O(k)$   & $O(\log(k+\Delta))$ \\ \hline 
Lower bound                 & \cite{augustine2018dispersion}
                                    & $\Omega(k)$              & $\Omega(\log k)$                                  \\ \hline 
                                       
\end{tabular}
\end{center}
\caption{State-of-the-art on \dpn.}
\label{table:dispersion}
\end{table}

\paragraph{Our contributions.}Our main results on \cd~are summarized in Table~\ref{table:table-result}, while a comparison with the state-of-the-art results on \dpn~is provided in Table~\ref{table:dispersion}. We highlight that the lower bound for \cd~is strictly higher than that for \dpn, and moreover, \cd~is impossible to solve when $k=1$ and $n$ is unknown. This stands in contrast to \dpn, which remains solvable under these conditions (see Section~\ref{sec:lower-bound-paragraph}).

We first present a deterministic algorithm for the rooted configuration assuming prior knowledge of both $n$ and $k$. We then extend our results to configurations that are known in advance, namely rooted, dispersed, and general configurations, considering both settings where $n$ and $k$ are known and where such knowledge is unavailable (Sections~\ref{sec:rooted-known} to~\ref{sec:dispersed-known}). Finally, in Section~\ref{sec:unknown-confi}, we study scenarios in which the initial configuration is unknown. Taken together, these results provide a nearly tight
characterization of the complexity of \cd~in anonymous port-labeled memory-less networks, establishing both its algorithmic feasibility and its fundamental limitations.

\paragraph{Related work.} \dpn~is the most closely related problem to \cd.  \dpn~was first introduced by Augustine and Moses Jr. in \cite{augustine2018dispersion}. Later, the problem was studied in more detail for the case $k \leq n$ in \cite{kshemkalyani2019efficient, kshemkalyani2019fast, kshemkalyani2025near, KshemkalyaniKMP25-SPAA, sudo2023near}, considering general graphs when robots have prior knowledge of some parameters, while \cite{shintaku2020efficient} studied the case without such knowledge in the local communication model, where robots at the same node can communicate. The problem has also been explored in other settings, such as dynamic graphs \cite{kshemkalyani2020efficient, agarwalla2018deterministic}, asynchronous models \cite{kshemkalyani2019efficient, agarwalla2018deterministic}, and the global communication model \cite{kshemkalyani2020dispersiona, kshemkalyani2020dispersionb}. Furthermore, dispersion has been studied on square grids \cite{kshemkalyani2020dispersionb, banerjee2025optimal} and triangular grids \cite{himani2024optimal}. 

Compared to \dpn, \cd~is more challenging, as some cases become unsolvable. For the solvable cases, the lower bounds are higher than the \dpn~lower bounds. Furthermore, parameter knowledge as well as initial configuration knowledge plays a significant role on what kind of solutions and guarantees can be achieved. Recently, Sudo {\it et al.}~\cite{sudo2023near} established an important result in the synchronous setting by presenting two algorithms for general initial configurations: (i) a time-optimal $O(k)$ algorithm with memory $O(\Delta + \log k)$, and (ii) an $O(k \log^2 k)$-time algorithm with memory $O(\log(k+\Delta))$. While the first result achieves optimal time, its memory requirement is prohibitively large, and while the second achieves near-optimal memory, it sacrifices time optimality. Kshemkalyani {\it et al.}~\cite{KshemkalyaniKMP25-SPAA} removed these limitations by presenting the first synchronous algorithm that simultaneously achieves both objectives: runs in optimal $O(k)$ rounds while requiring only $O(\log(k+\Delta))$ bits of memory per robot. This yields an $O(\log^2 k)$ factor reduction in time compared to best-known time-optimal result of~\cite{sudo2023near}, thereby closing the gap between time and memory optimality.

 The results of this work are summarized in Table~\ref{table:dispersion}. We study the \cd~problem under rooted, dispersed, and general configurations, considering different assumptions on the robots knowledge of $n$ and $k$. We further analyze \cd~when the initial configuration is unknown.

\section{Impossibility, Lower Bound, and Special Cases}\label{sec:lower-bound-paragraph}

In this section, we investigate fundamental limitations and performance guarantees of the \cd~problem under deterministic robot models. We begin by establishing an impossibility result, showing that in certain settings the problem cannot be solved regardless of the amount of local memory available to the robots. We then derive a lower bound on the time and memory requirements that any deterministic solution must satisfy in the worst case. Finally, we identify and analyze several special cases, both in terms of graph structure and initial configurations, where these negative results can be circumvented, and efficient deterministic algorithms become possible. In particular, we present optimal or near-optimal solutions for trees, paths, and rings, as well as for rooted configurations in arbitrary graphs.


\paragraph{Impossibility} We show that under certain assumptions, the \cd~problem admits no deterministic solution.

\begin{theorem}[{\bf Impossibility}]
\label{the:single case}
Let $G=(V,E)$ be an anonymous, port-labeled, undirected graph with $n$ nodes. For a single robot ($k=1$),  if $n$ is unknown, the \cd~problem is unsolvable, even with unbounded local memory. But, if $n$ is known, the \cd~problem is solvable in $\tilde{O}(n^5)$ time.  
\end{theorem}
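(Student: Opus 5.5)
The proof splits into an impossibility part (unknown $n$) and an algorithmic part (known $n$), and I would treat them separately.

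\textbf{Impossibility.} I would use an indistinguishability argument. Fix a deterministic algorithm $\mathcal{A}$ for a single robot $r$ with color $c_1$.

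\begin{enumerate}
\item Consider a ring (or path) $G_1$ whose nodes are all colored $c_2$ except for one node of color $c_1$. Place $r$ at a $c_2$-colored node. In $G_1$ the instance is solvable, so $\mathcal{A}$ must eventually find the $c_1$ node.
\item Now consider the ring $G_0$ of the same local structure, with consistent port labels (port $1$ clockwise, port $2$ counterclockwise), in which every node has color $c_2$. Here no valid placement exists. A correct algorithm must never terminate claiming success. If it is required to detect unsolvability, it must eventually halt reporting failure; otherwise it must run forever.
\item Run $\mathcal{A}$ on $G_0$ and distinguish the two behaviours.
\begin{itemize}
\item If $\mathcal{A}$ halts after $T$ rounds, then $r$ has visited at most $T+1$ nodes and has seen only $c_2$ colors and degree-2 nodes with consistent ports. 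Build a ring $G_1$ with more than $2T+2$ nodes in which the $c_1$ node lies at distance greater than $T$ from the start. The view of $r$ is identical for $T$ rounds, so $\mathcal{A}$ also halts on $G_1$ without settling, which is a contradiction.
\item If $\mathcal{A}$ never halts on $G_0$, the problem is not solved there in the sense of termination. Either way $\mathcal{A}$ fails on some instance.
\end{itemize}
\end{enumerate}

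Unbounded memory does not help, because the argument relies only on the robot's view being identical. The main subtlety is pinning down what ``solvable'' demands, namely correct termination or reporting on every instance including those without a $c_1$ node. Should the paper instead assume a matching node always exists, I would use the same construction differently: an algorithm that does not know $n$ cannot certify that it has finished. Even then it reaches the goal node eventually, so the impossibility has to rest on termination, and the proof should state this explicitly.

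\textbf{Solvability with known $n$.} The robot computes a universal exploration sequence (UXS) for $n$-node graphs, as in \cite{molla2023fast}. It has length $\tilde{O}(n^5)$ (Reingold-type/UXS bounds, or $\tilde O(n^3)$ for the randomized-existence version) and is constructible with $M^*$ memory. The robot follows the sequence from its start node and settles at the first node whose color equals its own. If no such node appears before the sequence ends, it correctly reports that the instance is unsolvable.

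Correctness follows from the universality of the sequence: it visits every node of any connected $n$-node port-labeled graph from any start node. Since only one robot is present, a matching node cannot be occupied by anyone else. The time bound is the length of the UXS, namely $\tilde{O}(n^5)$, and the memory is $O(M^*)$, which matches the table.
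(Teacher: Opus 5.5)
Your known-$n$ half is the paper's argument: follow a UXS of length $\tilde{O}(n^5)$ and settle at the first node of your color. For unknown $n$ you take a different route. The paper only invokes the impossibility of single-agent exploration without knowledge of $n$, and asserts that the robot ``may be trapped in an infinite loop''. You instead give a self-contained indistinguishability argument on consistently labeled rings. Your argument is correct, and more transparent than the paper's, under the hypothesis you isolate: instances in which no node carries the robot's color are admissible, and the robot must terminate with a correct verdict on every instance. Under that hypothesis, halting on the all-$c_2$ ring $G_0$ at time $T$ forces a wrong halt on a ring $G_1$ whose $c_1$ node is at distance greater than $T$. Not halting on $G_0$ violates termination.

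What you must fix is your fallback for the feasible case. Suppose a node of the robot's color is guaranteed to exist. The paper itself assumes this elsewhere, e.g.\ in the proof of Lemma~\ref{lem:5uniquedes}, and its model suggests it, since robot colors are drawn from the node palette. Then the robot can run universal exploration sequences for increasing guesses $N$ of the graph size, stop at the first node whose color equals its own, and halt. Once $N\ge n$, the current sequence visits every node. Arriving at a matching node is itself a certificate of success, because no other robot exists. So the claim that an algorithm not knowing $n$ ``cannot certify that it has finished'' is false in that setting. No construction can rescue the impossibility there: the statement itself fails.

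The paper's own proof has the same hidden dependency. With unbounded memory, perpetual exploration is possible; only exploration with stop is impossible, and stopping matters only to certify that no matching node exists. Drop the fallback and state explicitly, as part of the theorem, that infeasible instances are allowed and must be recognized with termination. With that hypothesis, your $G_0$/$G_1$ argument is a complete proof.
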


\begin{proof}
If the robot knows $n$ \cite{molla2023fast}, it can compute and execute a Universal Exploration Sequence of length $\tilde{O}(n^5)$, which guarantees complete exploration of any anonymous, port-labeled graph with $n$ nodes, allowing it to solve \cd. If $n$ is unknown \cite{sudo2010agent}, a deterministic single robot in an anonymous graph cannot distinguish certain graph configurations and may be trapped in an infinite loop, repeatedly visiting the same nodes while unexplored nodes exist elsewhere. Consequently, it cannot guarantee complete exploration, even with unbounded local memory and the problem is unsolvable without prior knowledge of $n$.
\end{proof}

\paragraph{Time and memory lower bounds of \cd} We now establish our lower bound.

\begin{theorem}[{\bf Lower bound}]\label{thm:lb}
There is an initial configuration of $k\leq n$ robots on a $n$-node graph for which any deterministic algorithm for \cd~requires 
$\Omega(\min\{n^2/k, m\})$ rounds and $\Omega(n/k \cdot \log k)$ bits per robot.
\end{theorem}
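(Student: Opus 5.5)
Both bounds will come from one adversarial family. Put all $k$ robots at a single source node of an $n$-node graph, and give them the same color $c$. Only $k$ nodes in the whole graph carry color $c$, and the adversary decides where they are. Every other node gets a second color $c'$. The robots must then \emph{locate} the $k$ target nodes, and no robot can stop until each has been assigned a distinct one. Because nodes are anonymous and memory-less, the only information available is what robots have physically seen and carried with them.

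\textbf{Time bound.} I will use two constructions, one for each term of $\min\{n^2/k,m\}$.
\begin{itemize}[leftmargin=*]
\item \emph{The $m$ term.} Take a dense graph, e.g.\ a clique-like gadget on $\Theta(n)$ nodes with $\Theta(m)$ edges. Hide the colored targets behind edges so that only traversing a particular edge reveals a target. The standard exploration adversary argument then applies. As long as fewer than $\Theta(m)$ edge traversals have been made in total, some unexplored port still exists, and the adversary can rewire it to lead to the remaining target(s) without changing anything the robots have observed.
\item \emph{The $n^2/k$ term.} Take a path (or a lollipop) of length $\Theta(n)$ with the source at one end. Place the $k$ targets so that the robots have to cover distances whose sum is $\Omega(n)$ per robot. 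A simple choice is to put every target at distance $\geq n/2$ and spread out. Moving the robots already costs $\Omega(n)$ rounds, which matches $n^2/k$ only when $k=\Theta(n)$. So the gadget must be refined. Let $n/k$ blocks, each of size $k$, be candidate locations for the targets. The robots do not know which block holds them. By a counting/indistinguishability argument, the robots then have to sweep collectively through $\Omega(n)$ nodes, and carrying information between a group of $k$ robots and the discovery sites forces roughly $n/k$ repeated trips of length $\Omega(n)$, which gives $n\cdot n/k$.
\end{itemize}
\emph{This is the step I expect to be the main obstacle.} Making the $n^2/k$ term rigorous needs a careful adversary that delays revealing the targets until each ``wave'' has gone out and come back. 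My first attempt will be a potential argument on the set of nodes visited, with the adversary placing targets only in regions not yet visited.

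\textbf{Memory bound.} For $\Omega(\frac{n}{k}\log k)$ bits I will argue by indistinguishability, counting the distinct states a robot must be able to distinguish. When the robots have to inspect $\Theta(n/k)$ candidate blocks, a robot that is exploring must remember, for each of the $\Theta(n/k)$ blocks it has seen, which of $k$ possibilities it observed there; a natural choice is a port/offset among $\Theta(k)$ options that identifies the target in that block. That amounts to $\Theta(n/k)$ items of $\log k$ bits each. Alternatively, it must remember the identity of the robot that settled there, since nodes store nothing. If a robot's memory were smaller, two inputs that differ in only one block would drive it into the same state. It would then return to, or direct robots toward, the same node in both instances, and in one of the two instances the placement would be wrong. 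Putting this together with the time constructions proves the theorem.
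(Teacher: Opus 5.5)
Both halves of your proposal have genuine gaps.

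\textbf{Time.} Your ``$m$ term'' argument counts total edge traversals. But your $k$ co-located robots can fan out and traverse up to $k$ edges per round, so $\Theta(m)$ forced traversals give only $\Omega(m/k)$ rounds, not $\Omega(m)$.

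Your ``$n^2/k$ term'' via a path cannot work at all. On a path the robots can walk together, each settling at the first free node of its colour, and finish in $O(n)$ rounds (cf.\ Corollary~\ref{cor:tree_n}). So no ``$n/k$ repeated trips'' are forced, and that claim was never justified anyway.

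The pairing should be reversed. Since the statement is existential, a single dense instance suffices:
\begin{itemize}
\item Take $K_{n-1}$ with every node of colour $c'$, and place one robot of colour $c$ and $k-1$ robots of colour $c'$ at one clique node.
\item Subdivide one clique edge by the unique node of colour $c$, keeping the port numbers at its endpoints.
\item Until some robot takes that edge's port, the execution is identical to the one on the unsubdivided clique. So the adversary can choose the subdivided edge among edges not yet traversed in that execution.
\item If $kT<\binom{n-1}{2}$, such an edge exists and the $c$-robot is still unsettled at round $T$.
\end{itemize}
This gives $\Omega(n^2/k)$ rounds on a graph with $m=\Theta(n^2)$, which is exactly $\Omega(\min\{n^2/k,m\})$. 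A plain clique would not do, since every node is adjacent to the source and is found in $O(n)$ moves by trying ports in turn; hiding the target inside an edge is what makes the adversary work, and it is the only reason a lollipop would help. The paper's own proof does not supply this: it gives the chained-clique construction but no time argument.

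\textbf{Memory.} Your argument fails at its premise. Nothing forces a robot to return to, or route others toward, a node it saw earlier: it can settle the moment it stands on an unoccupied node of its colour, and settled robots are the only markers needed.

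In fact, when $n$ is known (a setting the paper's table covers), each robot can independently follow a universal exploration sequence and settle at the first free node of its colour, breaking ties by ID. If some robot of colour $c$ never settled, then every node of colour $c$ would be held by another robot of colour $c$, contradicting solvability. This uses $O(M^*+\log n)$ bits per robot, which is $O(\log n)$ with Reingold's log-space UXS construction. By contrast, $\frac{n}{k}\log k=\Omega(n)$ already for $k=2$.

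So no indistinguishability argument can yield the claimed $\Omega(\frac{n}{k}\log k)$ bits in that model. The paper's proof does not reach it either: its inter-group and intra-group counts sum to $\log(n/k)+\log k=\log n$ bits.
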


\begin{proof} Consider a graph $G$ with $n$ nodes constructed as follows. Partition the nodes into $n/k$ groups (assume $n/k$ is an integer), each containing $k$ nodes. Denote the nodes in group $\ell$ as $v_{\ell k + 1}, v_{\ell k + 2}, \dots, v_{\ell k + k}$, for $0 \le \ell \le n/k - 1$. Each group forms a complete graph on $k$ nodes. Within each group, remove the edges $(\ell k + 1, \ell k + k)$ and $(\ell k + 2, \ell k + k - 1)$. Then, connect consecutive groups using bridge edges $(\ell k + k, \ell k + k + 1)$ and $(\ell k + k - 1, \ell k + k + 2)$.

Assume there is a single red robot at $v_1$, and all nodes are blue except $v_n$, which is red. To achieve correct colorful dispersion, the robot must distinguish between different groups and track which nodes have already been visited.  

Since there are $n/k$ groups, each group must be uniquely identifiable, which requires at least $\log(n/k)$ bits to encode the group identifiers. Within each group, the robot must deterministically explore the $k$ nodes to ensure correct placement and avoid revisiting nodes unnecessarily. At each step, the robot must select the correct port to reach unvisited nodes and eventually navigate the bridge edges to the next group. To encode its current position and make deterministic decisions within a group of $k$ nodes, the robot requires an additional $\log k$ bits of memory.

Combining these inter-group and intra-group requirements, each robot needs at least 
$\Omega\bigl((n/k) \cdot \log k\bigr)$ bits of memory to correctly maintain group information and ensure deterministic colorful dispersion on arbitrary graphs.
\end{proof}





\paragraph{Efficient algorithms for special cases}
We now discuss two special cases B1 and B2. B1 deals with special graphs: tree, path, and rings, whereas B2 deals with rooted initial configuration in arbitrary graphs with known $k,n$ and $k=n$. 

\begin{description}[leftmargin=0em]
    \item[B1:] {\it Tree, Path, Ring: Rooted, General or Dispersed} Configuration, $k\leq n$ (Algorithm \ref{alg:tree_path}):\label{algo:tree} We run a DFS-based strategy. Each robot maintains two states: \textit{explore} and \textit{settle}. At each node $v$, the minimum ID robot whose color matches the node color transitions to the \textit{settle} state, while all other robots in the \textit{explore} state exit the node through port-$(\portent + 1) \bmod \delta_v$, where $\portent$ is the port by which the robot entered $v$ and $\delta_v$ is the degree of $v$. This ensures systematic traversal without revisits, yielding a solution in $O(n)$ rounds. Initially, all robots start at one or more source nodes. In general, and in dispersed configurations, each group of robots, or a single robot, performs its DFS exploration independently. Robots that have settled on a node remain there, preventing conflicts, while other robots ignore occupied nodes and continue their exploration. This guarantees that at most one robot settles per node, achieving \cd. Each robot needs $O(\log (k+\Delta))$ bits to store its state, color, and port.

\begin{corollary}
\label{cor:tree_n}
For any arbitrary initial configuration in a tree, path, and ring, \cd~can be solved in $O(n)$ rounds using $O(\log (k+\Delta))$ bits per robot.
\end{corollary}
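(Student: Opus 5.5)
The plan is to analyze the B1 strategy as a rotor-style traversal in which every robot, on entering a node $v$ through port $\portent$, leaves through port $(\portent+1)\bmod\delta_v$. I will reindex ports to $0,\dots,\delta_v-1$ so the rule is well defined. The key observation is that on a tree this rule is exactly the \emph{right-hand} Euler tour of the tree. Every edge is traversed once in each direction, so after $2(n-1)$ moves the robot returns to its start having visited every node. I will prove this by induction on the subtree hanging off each port: entering a subtree through its root edge, the walk traverses the whole subtree and exits back along the same edge before the next port is used.

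For paths (trees) this gives that any robot in state \textit{explore} visits every node within $2(n-1)=O(n)$ rounds from its start. For a ring every node has degree $2$, so the rule is "continue straight". The robot circles the ring and visits all $n$ nodes in $n$ rounds. A robot at its initial node has no $\portent$; I fix the convention that it leaves via port $0$, and then the tour statement still holds from that start.

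\textbf{Correctness in the rooted case.} Assume $k\le n$ and that a valid assignment exists, i.e., for each color $c$ there are at least as many $c$-nodes as $c$-robots. Consider a color $c$ with $k_c$ robots. All robots move together along the same tour. At each newly visited node of color $c$ that is still free, the minimum-ID explore robot of color $c$ settles; I will break ties so that exactly one robot settles. Since the tour visits all nodes, there are at least $k_c$ nodes of color $c$, and each first visit to a free $c$-node absorbs one $c$-robot while any remain. Hence all robots settle within one tour.

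\textbf{General and dispersed configurations.} This is the main obstacle. Groups tour independently, so a group may arrive at a $c$-node already taken by another group. Counting suffices here. Each unsettled $c$-robot continues its own full tour, and every node of color $c$ is occupied either by a settled robot or is free. If a $c$-robot completes a full tour without settling, then every $c$-node is occupied by some settled $c$-robot, so there are at least as many settled $c$-robots as $c$-nodes. That count is at least $k_c$, which contradicts this robot being unsettled. So every robot settles within one tour of its own, i.e., $O(n)$ rounds. Robots that meet simply co-move and follow the same rule, which causes no harm. The settlement rule "a node holds at most one settled robot" is enforced locally, since settled robots are visible to arrivals.

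\textbf{Memory.} Each robot stores its state (1 bit), its color ($O(\log k)$, since $t\le k$), its ID ($O(\log k)$), and the current $\portent$ ($O(\log\Delta)$). A round counter is unnecessary because termination is by settling. This totals $O(\log(k+\Delta))$ bits, which proves the corollary.
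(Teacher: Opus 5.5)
Your proof is correct. It analyzes the same algorithm (B1) as the paper, but justifies it by a different and more careful argument.

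The paper only appeals to ``standard DFS'': each source runs its own DFS, robots skip already visited nodes, and the run takes $O(m)$ rounds, which is $O(n)$ since $m=n-1$ or $m=n$. It never shows why the memoryless rule ``exit via $(\portent+1)\bmod\delta_v$'' is a complete traversal. Nor does it show why a traversal whose matching nodes were taken by another traversal still places all its robots.

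You fill both gaps. First, you identify the rule as the right-hand Euler tour on a tree, with each edge used once in each direction and $2(n-1)$ moves, proved by induction on port subtrees. On a ring it is a straight walk of $n$ moves. This also explains why no parent/child pointers are needed, and hence why $O(\log(k+\Delta))$ bits suffice. Second, you handle multiple sources by noting that an unsettled robot's trajectory depends only on its own $\portent$, so every unsettled robot sweeps the whole graph within one tour. A pigeonhole count then gives a contradiction: if a $c$-robot finishes a tour unsettled, every $c$-node is held by a settled $c$-robot, so there are at least $k_c$ settled $c$-robots.

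What your route buys is explicit round bounds and a genuine correctness proof for the general and dispersed cases. It also makes explicit the feasibility hypothesis the result needs (at least as many $c$-nodes as $c$-robots for every color $c$), which the paper leaves implicit. The paper's framing buys brevity and uniformity with Case B2, nothing more.

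One small inaccuracy: robots that meet do not in general co-move. Robots arriving through different ports have different $\portent$ values and leave through different ports. Your counting argument never uses co-movement, only that each robot follows its own walk, so nothing breaks.
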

\begin{proof}
    The proof directly follows from the standard DFS traversal. At each node $v$, a robot $r$ settles if and only if $\mathsf{col}(r) = \mathsf{col}(v)$ and $r$ has the minimum ID among the eligible robots. For a multi-source configuration, each source initiates its own DFS traversal. When a traversal encounters an already visited node, it continues its DFS without settling. If the node is unvisited, the robot with the minimum ID whose color matches the node settles there, and the remaining robots continue the traversal.
    
    Consequently, the dispersion process completes in $O(m)$ rounds. For a tree or a path, $m = n-1 = O(n)$, and for a ring, $m = n = O(n)$. Thus, in all these cases, the problem can be solved in $O(n)$ rounds. The memory used by the robots to store the variable is $O(\log (k+\Delta))$ bits.
\end{proof}

\begin{algorithm}[H]
\label{alg:tree_path}
\caption{\textsc{\cd\_Tree\_Path\_Ring($r$)} // For $k\leq n$ robots}
    \tcp{$R[v]$ = robots located at node $v$}
    \tcp{Initially $r.p = -1$, where p=\portent} 
    \If{$r.state=explore$}
    {
        \If{$v$ is empty and $\mathsf{col}(v)=c_i$}
        {
            \If{$r \in R[v]$ is the minimum ID robot with $\mathsf{col}(r)=c_i$}
            {
                $r.state=settle$
            }
        }
        \Else{exit through $(p+1)\mod \delta_v$}
    }
\end{algorithm}

    \item[B2:] \textit{Arbitrary Graph, rooted} Configuration, known $n,k$, $k=n$ (Algorithm \ref{alg:rooted_k=n}): \label{algo:arbitrary} For $k=n$ robots starting from a rooted configuration, a DFS-based traversal ensures that each node receives exactly one robot. Each robot maintains a state (\textit{explore} or \textit{settle}) and a phase (\textit{forward} or \textit{backtrack}). In the \forward~phase, robots in the \textit{explore} state move to unvisited nodes. At each node, the minimum-ID robot whose color matches the node color transitions to \textit{settle}, while the others exit via $(\portent + 1) \bmod \delta_v$. When all children of a node are explored, robots enter the \backtrack~phase and return along the entry port, enabling systematic coverage of the graph. Settled robots store parent and child pointers to support forward and backtrack movement. This guarantees a solution to \cd~in $O(m)$ using $O(\log (k+\Delta))$ bits per robot.

\begin{theorem}\label{the:rootedgeneral}
For any rooted initial configuration of $k=n$ robots  in any arbitrary graph with $k,n$ known to robots, \cd~can be solved in 
$O(m)$ rounds using $O(\log k)$ bits per robot.
\end{theorem}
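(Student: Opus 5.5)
The plan is to prove Theorem~\ref{the:rootedgeneral} by analysing the DFS procedure B2: the robot group performs a single port-ordered DFS from the source, and we bound its length by $O(m)$. The key structural fact is that when $k=n$, every node must end up with exactly one robot of its own color. So for each color $c$, the number of robots of color $c$ equals the number of nodes of color $c$; otherwise no valid final configuration exists, and we restrict attention to solvable instances. Consequently, whenever the group reaches an unvisited node $v$, some robot of color $\mathsf{col}(v)$ is still unsettled. All robots of that color settle only on nodes of that color, so there are never more such settlements than such nodes. Hence the minimum-ID robot with matching color always exists at first visit, and the settling rule of B2 never fails.

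\textbf{Correctness.} I would argue by induction on DFS steps that three invariants hold: (i) every visited node hosts exactly one settled robot of matching color; (ii) the unsettled robots all travel together as one group; and (iii) the settled robot at each visited node stores its parent port and the next child port to try. Invariant (iii) lets the group distinguish visited nodes (occupied by a settler) from unvisited ones (empty). When the group arrives at an already occupied node via a non-tree edge, it immediately returns through the same port. Backtracking follows the stored parent pointer once all ports of the current node have been tried. Since $G$ is connected, the DFS visits all $n$ nodes. Each first visit consumes exactly one robot, so after the last node is reached, all $k=n$ robots are settled. Knowledge of $k$ and $n$ lets the final robot recognise termination; this is not needed for correctness, only for stopping.

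\textbf{Time.} Each edge is traversed at most a constant number of times: twice as a tree edge (forward and backtrack), and twice as a probed non-tree edge from each endpoint. This gives $O(m)$ rounds. For rooted $k=n$, all robots move together, so no waiting or collision handling is needed.

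\textbf{Memory.} Each robot stores its ID and color ($O(\log k)$ bits, with $t\le k$), its state and phase ($O(1)$ bits), and, once settled, the parent port and next-port pointer. The pointers are the main obstacle, since they naively cost $O(\log\Delta)$ bits while the claim is $O(\log k)$. Here $k=n$ saves us: $\Delta\le n-1<k$, so $\log\Delta=O(\log k)$ and the bound $O(\log(k+\Delta))$ collapses to $O(\log k)$. The moving group also needs only $O(\log k)$ bits to carry the entry port $\portent$. I would state this reduction explicitly as the final step.
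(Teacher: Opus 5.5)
Your proposal is correct and follows essentially the same route as the paper's proof. Both run a single DFS from the source in which the minimum-ID color-matching robot settles at each newly visited node, derive correctness from the per-color equality of robot and node counts when $k=n$, and get the $O(m)$ bound because each tree edge is traversed at most twice and each non-tree edge at most four times. You also make explicit two points the paper only asserts: why an unsettled robot of the right color always exists at a first visit, and why the port pointers fit in $O(\log k)$ bits (because $\Delta\le n-1<k$ in a simple graph).
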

\begin{proof}

By the algorithm, the robots collectively perform a depth-first search (DFS) traversal starting from the source node $S$. At each visited node $v$, if the robot whose color matches $v$ is present, it settles permanently at $v$, while the remaining robots continue the DFS traversal following the rules. Since DFS visits every node of $G$, each node will eventually be reached. Moreover, as $k=n$ and for every color $c_i$ the number of robots with color $c_i$ equals the number of nodes of color $c_i$. Hence, every node is eventually occupied by exactly one robot, and the \cd~problem is solved.  

The DFS explores each tree edge at most twice (once forward and once backtracking) and each non-tree edge at most four times. Thus, the total number of rounds is $O(m)$. Each robot only requires $O(\log k)$ bits to store the variables and its own ID.  
\end{proof}

\begin{algorithm} 

    \caption{\textsc{CD\_Arbitrary\_Rooted\_Graph} // For $k=n$ robots}
    \label{alg:rooted_k=n}

    \tcp{$R[v]$ = robots located at node $v$}

    \tcp{Initially $r.p = -1$, where $p = \portent$}
    
    \If{$r.state=explore$}
    {
    \If{$r.phase=\forward$}
        {
            \If{$v$ is empty and $\mathsf{col}(v)=c_i$}
            {
                \If{$r \in R[v]$ is the minimum ID robot with $\mathsf{col}(r)=c_i$}
                {
                    $r.state$=$settle$ \\
                    store values of $parent(v)$ and $child(v)$
                    
                }
            }
            \Else{$r.phase=backtrack$ and exit through \portent}
        }
        \If{$r.phase=backtrack$}
        {
            \If{$(p+1)\mod \delta_v\neq parent(v)$}
            {
                $r.phase=forward$
            }
        }
            exit through port-$(p+1)\mod \delta_v$
    }
\end{algorithm}

\end{description}

\section{Arbitrary Graph: Rooted, \texorpdfstring{$k\leq n$}{1<k<n}, Known  \texorpdfstring{$n$}{n}}\label{sec:rooted-known}

In this section, we consider the case where $k < n$, with all $k$ robots initially located at a single source node $S$. A traditional implementation of the DFS strategy from Section~\ref{sec:lower-bound-paragraph} (Cases B1 and B2) can cause robots to become trapped in cycles. The robots are constrained by limited memory and cannot store the complete path from the source to their intended destination nodes. During traversal, at each unvisited node, the unsettled robot with the smallest ID whose color matches the node settles there, while the remaining robots continue exploring. If no robot matches the node’s color, all proceed further. In graphs containing cycles of nodes whose colors do not correspond to any robot, the robots may circulate indefinitely, repeatedly revisiting the same nodes without reaching their target nodes. Consequently, a standard DFS strategy under such memory constraints does not guarantee dispersion and may result in deadlocks. To overcome this limitation, we propose a novel DFS-based procedure organized into {\it three} systematic phases, ensuring that each robot eventually reaches a node of matching color and terminates solving \cd. 

We assume $k=\ceil{n/c}$ with $c\leq n$ a constant. Since, in the rooted setting, all robots are collocated at a single source node, which allows them to identify the total number of robots $k$. Thus, whether $k$ is initially known or unknown does not affect the model in the rooted configuration. 

\begin{description}[leftmargin=0em]
    \item[Phase 1: Graph exploration with groups.] The graph nodes are organized into ``\emph{groups}'' of designated sizes, and one robot in each group is designated as the ``\emph{guard}'' (referred to as \lead~or \osci~in algorithm) of that group. The group sizes are not fixed, but are bounded above and below, and are determined based on $k$ and  $n$. Once the DFS traversal is completed, the guard robot at the source node $S$ detects its completion. The algorithm also identifies a set of edges connecting the groups called ``{\it inter-group edges}''. Each group records its parent port (the port leading to its parent group) and child ports (the ports leading to its child groups) w.r.t. these edges. These inter-group edges are guaranteed to be cycle-free. Treating each group as a ``\emph{supernode}'', the groups and inter-group edges collectively form an oriented rooted tree ${\cal T}$, rooted at the supernode containing $S$ (``\emph{root supernode}''). The DFS traversal completes once the root supernode contains at least two robots.

\item[Phase 2: Complete connectivity information gathering.] After Phase 1, the goal is to aggregate robots at $S$ to get the complete connectivity of each node of the graph. The guard robot at root supernode initiates a DFS on ${\cal T}$. Along this traversal, it brings other robots with it, and eventually all robots return to $S$. Each robot stores a portion of the graph structure, and collectively they have the complete connectivity information of nodes in the graph at the source $S$.

\item[Phase 3: Memory-efficient dispersion.] The final task is dispersion. Since robots have connectivity information of each node, that locally determine their destination supernodes (and the specific nodes within them). Each robot must eventually travel from $S$ to its assigned supernode. However, due to memory constraints, we cannot provide each robot with the full path information. To address this, we propose a two-phase mechanism: (i) in the first phase, the path from each target supernode to the root supernode $S$ is established by keeping one robot per supernode; (ii) in the second phase, the required number of robots move toward their designated supernodes along these paths, before finally reaching their exact destination nodes.
\end{description}


\subsection{Overview, Key Challenges, and Techniques}
Several key questions arise in each phase, each presenting challenges. We proceed to address them one by one:

\subsubsection{Phase 1: Graph exploration with groups}

\begin{description}[leftmargin=0em]

\item[Q1:] How does the DFS traversal operate, with groups being formed during the process? (refer to Fig. \ref{fig:example})

In \cd, the strategy follows DFS, with the additional requirement of forming groups of nodes of specified sizes (see Q2) during the traversal. Groups are formed incrementally as the DFS progresses. In the \forward~phase, robots move collectively along unexplored edges, assigning the lowest-ID robot to settle within each newly created group (see Q3), 
and includes the unvisited nodes in the group as the traversal proceeds (see Q2). As exploration proceeds, the leader records parent–child relationships and expands the group until its size reaches valid threshold (see Q3). Once a group is formed, the robot corresponding to that group remains in the group, while remaining robots continue traversal to create subsequent groups.

During the traversal, if a robot encounters a visited node i.e., belongs to an existing group, it either \backtrack s~along explored edges to continue the DFS, or the existing group and current group may dynamically shrink or expand according to the exploration process (see Q5). Whenever a group reaches the required size, a new group is initiated (see Q4). The traversal completes once all nodes  have been explored and at least one robot has returned to the source node (see Q7).

\item[Q2:] What is the group structure?

Each group consists of two types of nodes: \required~and \helper: \required~nodes form the core of the group, as they must ultimately be occupied by robots of matching color in the final dispersion, and \helper~nodes are additional nodes included only to preserve the connectivity of the group; they typically lie along simple paths connecting pairs of \required~nodes. Each group designates a \required~node as its source node. The subgraph induced by a group is always connected, and can be represented as a tree rooted at the group’s source node. The guard robot associated with a group maintains the list of \required~and \helper~nodes. A crucial constraint is that each \required~node belongs uniquely to exactly one group, whereas \helper~nodes serve as \required~nodes for some other group, ensuring non-overlapping coverage of \required~nodes across groups. Additionally, each group maintains information about its inter-group edges, which define the parent–child relationships among groups (see Q6). During Phases 2 and 3, each group can be abstracted as a single \emph{supernode}\footnote{With slight abuse of terminology, we use the terms \emph{supernode} and \emph{group} interchangeably, as they denote the same object with different interpretations. The terminology inter-group edges remains unchanged.}. Consequently, the structure formed by treating groups (or supernodes) as vertices and inter-group edges as connections between them constitutes an oriented rooted tree ${\cal T}$.

\item[Q3:] What is the size of a group?

Formally, let $c = \Theta(n/k)$, and assume $k= \lceil n/c \rceil$ robots with limited memory are initially placed at the source $S$. If at most $\lceil n/c \rceil$ distinct groups are formed, each of size between $2c$ and $4c$ \required~nodes, and exactly one robot is assigned as the guard of each group, then complete exploration of the graph is guaranteed: every node of $G$ belongs to exactly one group as a \required~node, and the collective memory of the robots suffices to store connectivity information about all $n$ nodes. 

If group sizes are restricted to $[c, 2c]$, at most $k$ distinct groups can form; however, no robot may return to the source, preventing DFS completion (see Q8). To avoid this, we consider groups of size between $2c$ and $4c$. A group becomes valid upon reaching $2c$ \required~nodes, ensuring enough robots to guard all nodes and maintain exploration invariants, while exceeding $4c$ triggers a split or restructuring. Since each group has between $2c$ and $4c$ \required~nodes, at most $2c-1$ \helper~nodes suffice to maintain its tree structure. Thus, each group contains at most $6c$ nodes in total. This guarantees that traversal within a group by its guard robot takes only $O(c)$ rounds (see Q4).





\item[Q4:] How does a robot determine, upon reaching a node, whether the node has already been explored? (refer to Fig. \ref{fig:d})

Each group is a rooted tree and can have at most $6c$ nodes (at most $4c$ \required~and at most $2c-1$ \helper~nodes). The guard robot oscillates between the nodes within its group, and thus it requires at most $12c$ rounds to visit every node in that group at least once. Consequently, any robot $r$ that is not a guard robot at a node $v$ waits for $12c$ rounds. After $12c$ rounds, if the robot $r$ encounters the guard robot corresponding to the node $v$ (\required~node for the guard robot) or $v$ is \required~node of $r$, it learns that $v$ has already been visited; otherwise, $r$ determines that it is visiting $v$ for the first time.

\item[Q5:] How does the group size shrink or expand during the traversal? (refer to Fig. \ref{fig:d} $\longrightarrow$ Fig. \ref{fig:e})

A guard robot $r$ expands its group during DFS traversal until it contains $4c$ \required~nodes. Except for a special case, once the group size reaches at least $2c$ and $r$ encounters an already visited node, it stops expanding, as the group is considered sufficient. Robot $r$ shrinks its group when, during the \backtrack~phase (of DFS), it meets another guard robot $r'$ at a \required~node $v$ of its group and its group size is less than $2c$. If all children of $v$ have been explored, robot $r$ reduces its group size and transfers information about the \required~nodes in the subtree rooted at $v$. Based on its own group size, robot $r'$ stores information about nodes farther from $v$.

\item[Q6:] How are inter-group edges identified? (refer to Fig. \ref{fig:b} $\rightarrow$ Fig.~\ref{fig:c})

An \emph{inter-group edge} is defined as an edge $(u, v) \in E$ where two \required~nodes $u$ and $v$ are from two different groups. These edges are identified naturally during DFS: when robots attempt to move along an edge into a node that is already claimed by another group by a robot that recently became a guard robot of that group (as in Q1), they mark that edge as an inter-group edge. Each group records the port numbers of such edges as part of its local data. Collectively, these inter-group edges form the parent-child relationships between groups.


\item[Q7:] How do robots detect the end of DFS? (refer to Fig. \ref{fig:g} $\rightarrow$ Fig.~\ref{fig:h})

The completion of the traversal is detected at the source group (the group that contains the source node). The guard robot at the source maintains a counter that records the number of completed explorations through each incident port, which is incremented by 1 when an incident port is explored. Since each port is explored and returned through exactly twice (forward and backward), the guard concludes that the DFS traversal is complete once its counter reaches $\delta_S$, where $\delta_S$ is the degree of the source node. At this point, all groups have been created, all inter-group edges identified, and all nodes of the graph visited, ensuring that the traversal is complete.

\item[Q8:] How do robots maintain synchrony during DFS traversal?

During DFS traversal, robots are divided into three types: (i) those that continue moving along the DFS path, (ii) the guard robot, which oscillates within its group, and (iii) the recent guard robot, which oscillates within its group and is responsible for forming a new group. The recent guard robot also moves with the DFS-traversing robots, as it expands its group by visiting unexplored nodes (maintaining the oscillation as well). Time is divided into intermediate rounds and movement rounds to maintain synchrony. In an intermediate round, robots wait at their current node to verify whether it has been visited (see Q4). During this period, the guard robot (both previous and recent ones) oscillates; if it encounters another robot, it shares information and either returns to its group’s source node or waits at the current node, depending on the type of communication. Other robots continue to wait and perform computations if they encounter an oscillating guard robot. In the movement round, robots proceed with the DFS traversal to the next node. This structured division ensures that all robots remain synchronized during each node visit.
\end{description}

\begin{figure}[ht!]
     \centering
     \begin{subfigure}[b]{0.32\textwidth}
         \centering
         \includegraphics[width=\textwidth]{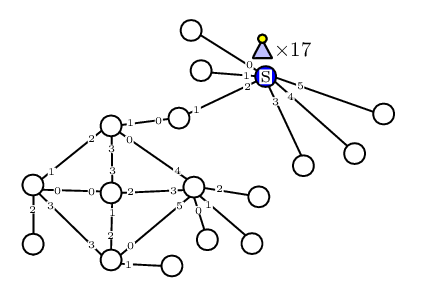}
        
         \caption{}
          \label{fig:a}
     \end{subfigure}
     \hfill
     \begin{subfigure}[b]{0.32\textwidth}
         \centering
         \includegraphics[width=\textwidth]{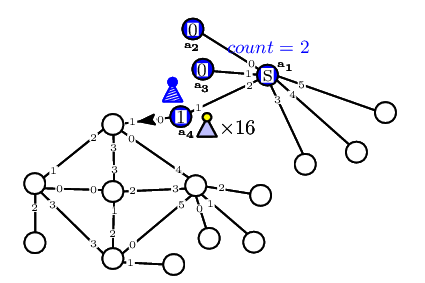}
         
         \caption{}
         \label{fig:b}
     \end{subfigure}
     \hfill      
    \begin{subfigure}[b]{0.32\textwidth}
         \centering
         \includegraphics[width=\textwidth]{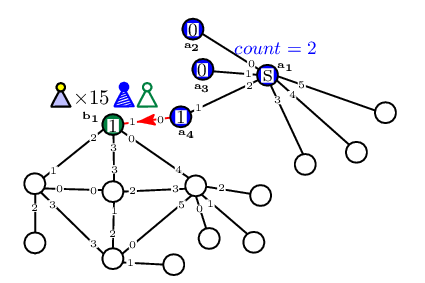}
         
         \caption{}
         \label{fig:c}
        
     \end{subfigure}

     \begin{subfigure}[b]{0.32\textwidth}
         \centering
         \includegraphics[width=\textwidth]{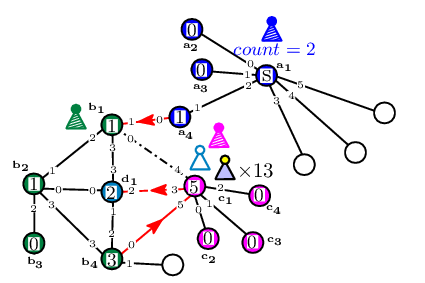}
         
         \caption{}
         \label{fig:d}
     \end{subfigure}
     \hfill      
    \begin{subfigure}[b]{0.32\textwidth}
         \centering
         \includegraphics[width=\textwidth]{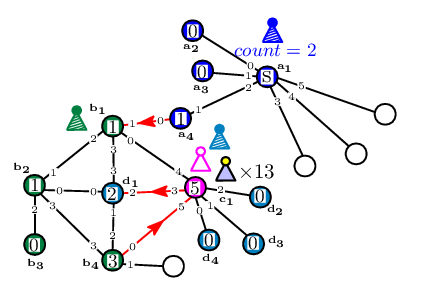}
        
         \caption{}
          \label{fig:e}
     \end{subfigure}
     \hfill
     \begin{subfigure}[b]{0.32\textwidth}
         \centering
         \includegraphics[width=\textwidth]{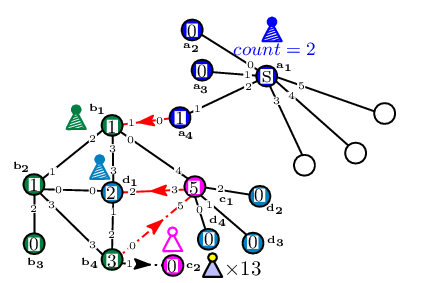}
        
         \caption{}
          \label{fig:f}
     \end{subfigure}
     
    \begin{subfigure}[b]{0.32\textwidth}
         \centering
         \includegraphics[width=\textwidth]{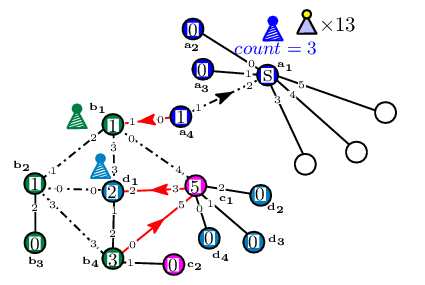}
        
         \caption{}
          \label{fig:g}
         
    \end{subfigure}
     \hspace{1cm}
     \begin{subfigure}[b]{0.32\textwidth}
         \centering
         \includegraphics[width=\textwidth]{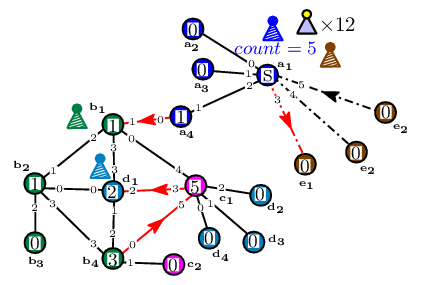}
        
         \caption{}
          \label{fig:h}
     \end{subfigure}

   \caption{Graph $G$, where $hollow$ nodes represent unvisited and $colored$ nodes with labels ($parent$ port) represent visited nodes. Robots with $yellow$ headed denote \activee, $solid$ headed denote \osci, and $hollow$ headed denote \lead. (a) Initial configuration, and begin DFS traversal with \groupsize~between 2 to 4. (b) During traversal, $r_a$ becomes \lead~and expands its group ($a_i$ filled blue, $a_i\in F_{r_a}, i\leq4$), where the black arrow denotes the exit port (including $r_a$, exits its group to store inter-group edges). (c) At $b_1$, \lead~$r_b$ is elected, where red solid line represents inter-group edges with direction from parent to child group. (d) \lead~$r_d$ with $\groupsize<2$, visits an visited node $c_1$, where dotted lines indicate the traversal. (e) $r_d$ expands its group including nodes of subtree at $c_1$, while $r_c$ shrinks, switches to \lead~as $\groupsize<2$. (f) $r_c$ includes a \helper~node $b_4$ and expands its group. (g) \activee~robot traverses back to $S$. (h) End of DFS traversal.}
\label{fig:example}

\end{figure}

\subsubsection{Phase 2: Complete  connectivity information gathering}

Note that after Phase 1, the set of supernodes together with the inter-group edges forms an oriented rooted tree ${\cal T}$, rooted at the supernode containing the source node $S$. Since robots outside the source supernode do not know when Phase 1 terminates, Phase 2 proceeds differently. This phase employs a classical DFS traversal on ${\cal T}$, with two main challenges:

\begin{description}[leftmargin=0em, noitemsep, topsep=0pt]
    \item[(i) Collecting robots at the source:] The guard robot $r$ at $S$ initiates the traversal. During the \forward~phase, $r$ explores new supernodes, guided by the guard robots at those supernodes. When the DFS backtracks from a supernode $W$, the guard robot at $W$ joins the traversal and moves along the backtracking path toward its parent. By the time the traversal completes the subtree rooted at $W$, all robots within that subtree have already been collected and are en route to $S$.

\item[(ii) Maintaining synchronization:] Synchronization is maintained in the same manner as in Phase 1. Specifically, when $r$ reaches a supernode, it spends sufficient time there to ensure that it encounters the guard robot located at that supernode, thereby aligning the traversal across the tree.
\end{description}

\subsubsection{Phase 3: Memory-efficient dispersion} \label{ques:phas3}

After Phase 2, all robots gather at $S$, possessing complete connectivity of all supernodes and nodes therein. Using this information, they collectively perform a local computation followed by a two-stage procedure that ensures every robot reaches its target.

\vspace{1mm}
\textbf{Local computation:} Each robot $r$ determines its {\it destination supernode}, denoted by $dsg\_gr$ (initially \textsc{Null}), and identifies the node within that supernode whose color matches its own. The supernodes $W_1, W_2, \ldots$ are uniquely numbered according to their corresponding group numbers, where $W_i$ represents group $i$. The lowest-ID robot $r$ is assigned to the smallest index $i$ such that $W_i$ contains an unassigned \required~node of the same color, and it sets $r.dsg\_gr = W_i$. This assignment process is then repeated for all robots in increasing order of their IDs. Each robot $r$ also sets a variable $r.d = |R_i|$, where $R_i$ denotes the set of robots whose destination supernode is $W_i$. Once the robot-to-supernode assignment is complete, robots perform subsequent local computations. For each supernode, at most one robot is temporarily assigned to guide the others toward their respective destination supernodes.

Refer to Fig.~\ref{fig:dispersion} for a visual illustration of the computation. The placement of robots is carried out recursively, starting from the leaf supernodes and proceeding toward the source supernode $W_1$ of $\mathcal{T}$, where a robot $r$ assigned to supernode $W_i$ stores its corresponding $r$.{\it level} defined as distance (number of edges) from $W_1$. Each robot $r$ also maintains 3 variables: $stmp\_gr$ (initially \textsc{Null}), which stores the supernode lying on the path from a destination supernode $W_i$ to $W_1$ that contributes to satisfying the demand of $W_i$; $tmp\_gr$ (initially \textsc{Null}), which stores a supernode on the same path used only for traversal purposes; and $ttmp\_gr$ (initially \textsc{Null}), which stores the supernode along the path from $W_i$ to $W_1$ that helps maintain the demand of a node situated on this path. After Stage~1, any robot $r$ with $r.tmp\_gr \neq \textsc{Null}$ returns to $S$.

Let $W_\eta$ be a supernode with child supernodes $W_{\eta_1}, W_{\eta_2}, \dots, W_{\eta_\delta}$. Define $R_\eta$ as the set of robots satisfying $r.ttmp\_gr = W_{\eta_i}$ for some $1 \leq i \leq \delta$; if no such robot exists, then $R_\eta = \emptyset$. Two cases arise:

\begin{enumerate}[noitemsep,leftmargin=1em]
    \item \textbf{$W_\eta$ is a destination supernode:} At least one robot $r$ has $r.dsg\_gr = W_\eta$. The minimum-ID robot $r_m$ among them sets $r_m.stmp\_gr = W_\eta$ and $r_m.ctr = |R_\eta| + r_m.d$, while all others set $r.ttmp\_gr = W_\eta$.

    \item \textbf{$W_\eta$ is not a destination supernode:} If $R_\eta \neq \emptyset$, the minimum-ID robot $r_m$ among those with $r.ttmp\_gr = W_\eta$ sets $r_m.ctr = |R_\eta| - 1$, and all others set $r.ttmp\_gr = W_\eta$. Otherwise, no robot settles at $W_\eta$.
\end{enumerate}

After initial placement, a supernode $W_\eta$ may have no robot $r$ with $r.stmp\_gr = W_\eta$, yet $W_\eta$ lies on the path from one or more destination supernodes $W_i$ to $W_1$. Since the number of robots exceeds the number of supernodes, sufficient robots are available at $W_1$. For each such $W_\eta$ (in order of increasing indices), the lowest-ID robot $r_m$ with $r_m.ttmp\_gr = W_1$ sets $r_m.tmp\_gr = W_\eta$ and resets $r_m.ttmp\_gr = \textsc{Null}$.


Once this computation is completed, each robot $r$ holds information about its current supernode $W_i$, i.e., the non-empty value of either $r.stmp\_gr$ or $r.tmp\_gr$. This is achieved by exchanging information with the robot that originally stored $W_i$'s data. Further, each robot $r$ at a supernode $W_i$ with non-empty $stmp\_gr$ maintains a data structure $D$ that records the demand of robots required by $W_i$ and its child supernodes. Specifically, $D$ tracks the number and colors of robots needed for each child supernode, along with the corresponding group numbers (\groupno) requesting additional robots.

\vspace{1mm}
{\bf Stages:} The robots follow a two-stage cooperative strategy. In Stage 1, they perform a DFS traversal guided by the values stored in $stmp\_gr$ and $tmp\_gr$. Robots remain at nodes corresponding to their $stmp\_gr$, while those with non-empty $tmp\_gr$ return to $W_1$. In Stage 2, starting from the root supernode, robots move according to the data structure $D$ maintained by the robots at each node, which specifies the number and colors of robots required by each child supernode.

\subsection{The Algorithm}

 We begin by defining some key terms and variables that will be used in the description and analysis of the algorithm.
\begin{description}[leftmargin=0em]

 
    \item[Robot types:] There are 3 robot types, based on states: \activee, \lead, and \osci. Initially, all robots are \activee~at source $S$, exploring the graph via DFS. An \activee~robot switches to \lead~to assist in group formation. During this process, the \lead~oscillates within its group. Once the group is formed, it transitions to \osci~and continues oscillating within the group. Furthermore, an \osci~robot may revert to \lead~during group compression.
    \begin{figure}[ht!]
     \centering
         \includegraphics[width=.45\linewidth]{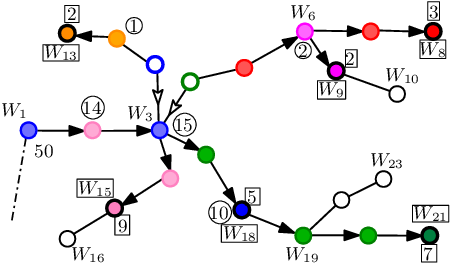} 
        \caption{$W_1$ is the source supernode. Square boxes labeled $W$ denote destination supernodes; the integer inside indicates demand $d$, while circles show the $ctr$ value. Hollow arrows represent the robot’s return to its parent after Stage~1, and solid arrows denote movement from parent to children in Stage~2.}
        

       \label{fig:dispersion}
    \end{figure}

    \item[Group structure:] Let $r$ be a robot and $F_r$ the group it forms. The group consists of 2 types of nodes: \required, key nodes explicitly stored by $r$ in the set $Q_r$, and \helper, which lie along paths connecting \required~nodes and are maintained in the set $H_r$. The subgraph induced by $F_r$ forms a rooted tree with its root at the group’s source node, where $r$ is initially placed.

    Each robot $r$ maintains a data structure $F$ to store information about nodes in $F_r$. The function \textsc{SAVE($r,v$)} (Algorithm \ref{alg:save}) maintains information for a node $v \in F_r$. If $v$ is not already present, a new record is created with default attribute values; otherwise, the existing record is updated. Each record stores the variables mentioned in Table \ref{table:variable}. 
    
    
\begin{table}[h]
\centering
\begin{tabular}{|p{2.6cm}|p{9.7cm}|p{1.3cm}|}
\hline
\textbf{Variable} & \textbf{Description} & \textbf{Initial Value} \\ \hline
$\portent(v)$ & Port through which the robot $r$ entered node $v$ & $-1$ \\ \hline
$\portext(v)$ & Port through which the robot $r$ exited node $v$ & NULL \\ \hline
$parent(v)$ & port through which robot $r$ first visits node $v$, & NULL \\ \hline
$child[v]$ & Stores the list of child ports of node $v$  & $\emptyset$ \\ \hline
$required\_node(v)$ & \yes~if $v$ is a required node; otherwise \no. & \no \\ \hline
$helper\_node(v)$ & \yes~if $v$ is a helper node; otherwise \no. & \no \\ \hline
$node\_num(v)$ & Store the assigned number of node $v$ in the group. It increments by $1$ when a \required~node is added in $F_r$. & $0$ \\ \hline
$Source(v)$ & \yes~if $v$ is a source node of the given graph $G$; otherwise \no. & \no \\ \hline
$gr\_source(v)$ & \yes~if $v$ is the source node of its group; otherwise \no. & \no \\ \hline
$dist(v)$ & Store the distance of $v$ from the group source node; incremented by $1$ in each \forward~step and decremented by $1$ in each \backtrack~step. & $0$ \\ \hline
$gr\_Parent(v)$ & Stores the port of node $v$ through which robot $r$ can reach the parent group according to the traversal & NULL \\ \hline
$gr\_Child[v]$ & Stores the list of ports of node $v$ through which robot $r$ can reach the child groups connected to $F_r$ according to the traversal  & NULL \\ \hline
$\nextnode$ & stores the next node $v$ to be visited (in particular, stored by the \lead~robot), along with the corresponding exit port from the current node $v$, recorded in $\portext(v)$ & NULL \\ \hline
\end{tabular}
\caption{Variables stored in $F$ by a robot $r$ corresponding to a node $v$}
\label{table:variable}
\end{table}

 \begin{algorithm}
    \caption{\textsc{SAVE($r,v$)}}
    \label{alg:save}
    \label{alg:SAVE}
    \tcp{Let robot $r$ is at $v$ and $v \in F_r$}
    \If{$v \notin F[r]$}
    {
        create a new record entry for $v$ in $F[r]$
    }
    \Else
    {
        update attributes of $v$ in $F[r]$
    }
    
\end{algorithm}

\end{description}

\subsubsection{Phase 1: Graph exploration with groups}

The objective of this phase is to explore the entire graph by systematically forming groups such that $k$ robots can visit all the graph nodes, and some robots return to their source after exploring the graph. We will describe the algorithm briefly according to the robot's states: \activee, \lead, \osci.

{\bf \ding{228} Algorithm of \activee~robot (Algorithm \ref{algo:active}):}
    The procedure $Active(r)$ (Algorithm \ref{algo:active}) governs the movement of an \activee~robot $r$ during the exploration phase. Each robot maintains a local variable \groupno~(initially 0), which maintains the number of groups formed by robots, incremented by one whenever a robot initiates a new group formation.  

\begin{description}[leftmargin=0em, noitemsep, topsep=0pt]
    \item[{A.} Initialization (Lines \ref{alg:act1}-\ref{alg:act6}):] In the first round, each \activee~robot increments its \groupno~by 1 (Line \ref{alg:act2}). Among all robots located at the source node $S$, the one with the minimum ID remains at $S$ and changes its state to \lead~(Line \ref{alg:act3}). This robot acts as the ``guard'' of the first group. All remaining robots leave the source node through port-$(p+1)\bmod \delta_S$ (Line \ref{alg:act6}), where $p$ is \portent\footnote{For brevity, $r.\portext(v)$, $r.\portent(v)$ are denoted by $\portext$ and $\portent$, respectively.}~(initially -1) and $\delta_S$ is the degree of the source node. Thus, the exploration begins with a depth-first traversal.

    \item[B. Intermediate rounds (Lines \ref{alg:act7}-\ref{alg:act31})] This last for $(dT+2)$-th to $((d+1)T)$-th rounds, for $d=0, 1,\ldots$, $T = 24c+2$.

      Let the \activee~robot $r_a$ reaches at a node $v$ either in \forward~or \backtrack~phase in the $(dT+1)$-th round. After reaching a node, say $v$, the robot $r_a$ stays at $v$ for the $24c+2$ rounds. During this interval, the robot evaluates its state based on the type of robot encountered at $v$. To determine whether $v$ has been previously visited, $r_a$ waits for $12c$ rounds, allowing the robot (either \osci~or \lead) for which $v$ is a \required~node to arrive at $v$ (as in Q5). Notably, $r_a$ has already waited $12c$ rounds before reaching the decision point, i.e., by the $((dT+2)+12c-1)$-th round it can detect any encounter. The following three cases are possible.

\begin{description}[leftmargin=0em]
\item[(1.)] \textbf{Robot $r_a$ encounters an \osci~robot $r_o$ with $v \in Q_{r_o}$ (Lines \ref{alg:act8}-\ref{alg:act16}):} During the first $12c$ waiting rounds ($(dT+2)$-th to $((dT+2)+12c-1)$-th), if $r_a$ encounters an \osci~robot $r_o$ with $v \in Q_{r_o}$, it means that $v$ is already visited, and its information is stored by $r_o$. There are two possible cases.

\begin{description}[leftmargin=0em]
    \item[(i)] {\bf $r_o.\sourcenode(v)=\yes$ (Lines \ref{alg:act9}–\ref{alg:act10} and \ref{alg:act13}–\ref{alg:act16}):} This case occurs when $r_a$ backtracks or reaches $S$ during traversal. There are two cases: (i) when $r_a$ returns to $S$ after exploring the entire graph (Lines \ref{alg:act9}–\ref{alg:act10}) to complete its task; it stores that DFS is done. (ii) when $r_a$ returns before finishing exploration (Lines \ref{alg:act13}–\ref{alg:act16}) due to the algorithm or graph structure. The \activee~robot then continues exploring the remaining graph.

    \item[(ii)] {\bf $r_o.\sourcenode(v)=\no$ (Lines \ref{alg:act11}-\ref{alg:act16}):} The \activee~robot continues exploration. If $r_a$ is in the \forward~phase, it switches to \backtrack~and sets $\portext=\portent$ (Lines \ref{alg:act11}–\ref{alg:act12}). If $r_a$ is in the \backtrack~phase, it updates $\portext$~(Lines \ref{alg:act13}–\ref{alg:act14}). If $\portext$ value differs from $parent(v)$, indicating that not all neighbors of $v$ have been explored, $r_a$ switches back to \forward~(Lines \ref{alg:act15}–\ref{alg:act16}); otherwise, it remains in \backtrack~phase.
    \end{description}

    \item[(2.)] \textbf{Robot $r_a$ does not encounter any \osci~robot $r_o$ with $v \notin Q_{r_0}$ (Lines \ref{alg:act18}-\ref{alg:act31}):} In this case, there are two cases based on whether a leader $r_l$ is located at $v$ or not:

     \begin{description}[leftmargin=0em]
         \item[(i)] {\bf \lead~$r_l$ is located at $v$ (Lines \ref{alg:act18}-\ref{alg:nact27}):} Two cases arise based on whether the \lead~stores information about $v$ and the phase of the \activee~robot. If the \activee~robot $r_a$ is in the \forward~phase, it performs no action and backtracks from $v$ (Lines~\ref{alg:act18}–\ref{alg:nact20}). If $r_a$ is in the \forward~phase and the leader $r_l$ stores information about $v$, where $v$ is the source node with $r_l.count=\delta_S-1$, then upon returning to $S$ after exploring the graph, $r_a$ sets $r_a.DFS=\done$ (Lines~\ref{alg:nact22}–\ref{alg:nact23}). Otherwise, if the traversal is incomplete or $v$ is not the source, $r_a$ remains idle and stores \portext~(Line~\ref{alg:nact25}). If some ports of $v$ remain unvisited, i.e., $(p+1)\bmod \delta_v \neq parent(v)$, $r_a$ switches to the \forward~phase (Lines~\ref{alg:nact26}–\ref{alg:nact27}); otherwise, it backtracks.

            
    

        \item[(ii)] {\bf no \lead~is at $v$ (Lines \ref{alg:act26}-\ref{alg:act31}):} This occurs when the last \lead~robot completes its group formation and switches to \osci. The \activee~robot $r_{a_m}$ with minimum ID, becomes \lead, records $v$, and initiates a new group (Lines \ref{alg:act27}–\ref{alg:act29}). The others update $\portext$ and increment \groupno. If all ports $v$ are explored, the robots switch to \backtrack~(Lines \ref{alg:act30}–\ref{alg:act31}).

\end{description}
    \end{description}

    \item[C. Waiting and exit (Lines \ref{alg:act32}-\ref{alg:act38}):] The \activee~robot $r_a$ performs computation at $v$ in $((dT+1)+12c+1)$-th round. Other time, i.e., from  $(dT+2)$-th to $((dT+2)+12c-1)$-th and from $((dT+2)+12c)$-th to $((d+1)T)$-th round, the robot waits at $v$ (Line \ref{alg:act38}).  Otherwise, at the conclusion of its stay at node $v$, specifically in the $((d+1)T+1)$-th round, the robot departs through the exit port stored in \portext~(Line \ref{alg:act36}).
    \item[D. End of traversal (Lines \ref{alg:act9}-\ref{alg:act10}, \ref{alg:nact22}-\ref{alg:nact23}, and \ref{alg:act33}):] The robot responsible for the source node $S$ keeps a counter of encounters with \activee~robots. When the counter reaches $\delta_S$, the degree of $S$, all ports of $S$ have been traversed and all returning robots have revisited $S$. At this point, the DFS traversal is complete, and the exploration ends ($r.DFS=\done$). This end is explicitly captured in (Lines \ref{alg:act9}-\ref{alg:act10} and \ref{alg:nact22}-\ref{alg:nact23}). The robots wait at the node and in the $((d+1)T+1)$-th round, change their state to $passive$ and run the next phase (Line \ref{alg:act33}).
\end{description}

       \begin{algorithm}[ht!]
    \caption{$Active(r)$}
    \label{algo:active}
    \tcp{$R[v]$ = robots located at node $v$}
    \tcp{Initially $r.p = -1$, where $p = \portent$ }

    \If{$round==1$ \label{alg:act1}}
    {
         $r.\groupno= r.\groupno+1$ \label{alg:act2}\\
        \lIf{$r$ is the minimum ID robot \label{alg:act3}}
        {
            $r.state=leader$ \label{alg:act4}     }
        \lElse
        {
            exit through $\portext=(p+1)\bmod \delta_S$ \label{alg:act6}
        }
    }

    \If{$round== (dT+1)+12c+1$ \label{alg:act7}}
    {
        \If{$\exists$ \osci~robot $r_o\in R[v]$ with $v \in Q_{r_o}$ \label{alg:act8}}
        {
            \If{$r_o.\sourcenode(v)=\yes$ in $r_o.F$ and $r_o.count=\delta_v-1$ \label{alg:act9}}
            {
                $r.DFS=\done$  \label{alg:act10}
            }
            \ElseIf{$r.phase= \forward$ \label{alg:act11}}
            {
                $r.phase=\backtrack$ and $\portext= p$ \label{alg:act12}
            }
            \ElseIf{$r.phase= \backtrack$ \label{alg:act13}}
            {
                $\portext= (p+1)\bmod \delta _v$ \label{alg:act14}\\
                \If{$(p+1)\bmod \delta _v \neq parent(v)$ \label{alg:act15}}
                {
                    $r.phase=\forward$ \label{alg:act16}
                }
               
            }
        }
        \Else 
        {
            
            \If{$\exists$ \lead~robot $r_l \in R[v]$ with $v \in Q_{r_l}$ \label{alg:act18}}
            {
                \If{$r.phase=\forward$ \label{alg:nact19}}
                {$r.phase=\backtrack$ \label{alg:nact20}}
                \Else{\If{$r_l.\sourcenode(v)=\yes$ in $r_l.F$ and $r_l.count=\delta_v-1$ \label{alg:nact22}}
                {
                    $r.DFS=\done$ \label{alg:nact23}
                }
                \Else
                {
                    $\portext= (p+1)\bmod \delta_v$ \label{alg:nact25}\\
                    \If{$(p+1)\bmod \delta_v \neq parent(v)$ \label{alg:nact26}}
                    {
                         $r.phase=\forward$ \label{alg:nact27}\\
                    }
                   
                }}  
            }
            \Else
            {
                $r.\groupno=r.\groupno+1$ \label{alg:act26}\\
                \If{$r \in R[v]$ is the minimum ID robot \label{alg:act27}}
                {
                    \textsc{Save($r,v$)} and store $gr\_Parent(v)=p'$, where $p'$ is \portent~of \osci~robot with $recent=true$ 
                    and $r.state=leader$  \label{alg:ac28}
                 }
                 $\portext= (p+1)\bmod \delta_v$ \label{alg:act29}\\
                \If{$(p+1)\bmod \delta_v= parent(v)$ \label{alg:act30}}
                {
                     $r.phase=\backtrack$ \label{alg:act31}\\
                }
           
            }
        }
    }
    \If{$round == (d+1)T+1$ \label{alg:act32}}  
    {
        \If{$r.DFS=\done$ \label{alg:act33}}
        {
            $r.state=passive$ and run Phase 2 \label{alg:act34}
        }
        \lElse
        {
            exit through $\portext$ \label{alg:act36}
        }
    }
    \lElse{wait at the current node \label{alg:act38}
    }
\end{algorithm}

{\bf \ding{228} Algorithm of \lead~robot:} When the robot $r$ becomes \lead, it executes $Leader(r)$ (Algorithm~\ref{algo:leader}). It increments the group number ($r.\groupno = r.\groupno + 1$), maintains $r.F$, $Q_r$, $H_r$, and a counter $count$. For each node $v$ added to $Q_r$, it increments \groupsize. The leader $r$ expands the group through the same port (\portext) as the \activee~robot, while \activee~robots continue exploring unvisited nodes to support group formation.

\begin{description}[leftmargin=0em]
   \item[A. Initialization (Lines \ref{alg:lead1}-\ref{alg:lead4}):] In the first round, the robot $r_l$ with the minimum ID becomes \lead~and stores the source node $S$ in $r_l.F$ (Line \ref{alg:lead2}). It increments \groupsize~by 1 (Line \ref{alg:lead4}) and stores the information of the next node to be visited by exiting $S$ through port-$(\portent+1) \mod \delta_S$, where \portent~is the port through which an \activee~robot $r_a$ entered $S$ (Line \ref{alg:lead4}).

    \item[B. Intermediate rounds (Lines \ref{alg:nlead5}-\ref{alg:lead47}):] These last from $(dT+2)$-th to $((d+1)T)$-th rounds, for $d=0,1,\ldots$, $T=24c+2$. It is divided into three parts. From $(dT+2)$-th to $(dT + 2)+12c$-th rounds, the \lead~robot $r_l$ oscillates within its group, and the next upcoming node, stored in the variable $\nextnode$. In $(dT+1)+12c+1$-th round, the \lead~robot do the computation and expand its group. From $(dT+1)+12c+2$-th to $((d+1)T+1)$-th, the \lead~robot oscillates back to its group source node using the stored information in $r.F$.

    If it encounters an \activee~robot, it stops and waits there (Lines \ref{alg:nlead5}-\ref{alg:nlead8}). 

    Let the \lead~robot $r_l$ encounters an \activee~robot at the node $v$ while oscillating during $(dT+2)$-th to $((d+1)T)$-th rounds (Lines \ref{alg:nlead5}-\ref{alg:nlead8}). The robot $r_l$ wait at $v$. In $((dT+1)+12c+1)$-th round, at $v$, following cases arise at $v$:

    \begin{description}[leftmargin=0em]
\item[(1.)] {\bf $r_l$ does not encounter an \osci~robot $r_o$ with $v \in Q_{r_o}$ (Lines \ref{alg:nlead9}-\ref{alg:nlead30}):} There are two cases based on the phase of an \activee~robot. If the \activee~robot is in \backtrack~phase, then there are three possible cases: (i) If $v$ is a source node with $v \in Q_{r_l}$ and $r_l.count = \delta_v - 1$, i.e., $r_l$ has completed the DFS traversal and returns to $S$, it sets $r_l.DFS = \done$ and changes its state to \osci~(Lines \ref{alg:nlead12}-\ref{alg:nlead13}). (ii) Otherwise, if $v$ is a source node and $r_l.count \neq \delta_v - 1$ (Lines \ref{alg:nlead14}-\ref{alg:nlead15}), $r_l$ increments $count$ and records the information of the next node to be visited, along with the exit port from the current node $v$. (iii) If $v$ is not a source node and $v \in Q_r$ (Lines \ref{alg:nlead16}-\ref{alg:nlead17}), $r_l$  records the information of the next node to be visited, along with the exit port from the current node $v$. 

If the \activee~robot is in the \forward~phase, there are two cases. If $v \in Q_{r_l}$ (Line \ref{alg:nlead19}), then $r_l$ does nothing as the \activee~robot backtracks from $v$. If $v \notin Q_{r_l}$ (Lines \ref{alg:nlead22}–\ref{alg:nlead30}), $r_l$ adds the newly visited node $v$ to $F_r$. Now, if $\groupsize < 4c$, $r_l$ continues group expansion; otherwise, when $\groupsize = 4c$, it switches to the \osci~state, and updates \portext~(to store parent-child relationship) based on whether the \activee~robot is in the \forward~or \backtrack~phase.

\item[(2.)] {\bf $r_l$ encounters an \osci~robot $r_o$ with $v \in Q_{r_o}$ (Lines \ref{alg:lead24}-\ref{alg:lead46}):} This occurs when $v$ has already been visited. Let $g$ denote the current \groupsize~of $r_l$, and $g'$ the updated size after expansion. If $v$ is $S$ with $r_o.count=\delta_v-1$ (Lines \ref{alg:lead25}-\ref{alg:lead27}), $r_l$ meets the termination condition, switches to \osci, sets $recent=false$, returns to its group in $((d+1)T+12c+1)$-th round. If $2c \leq g < 4c$ (Lines \ref{alg:lead25}-\ref{alg:lead27}), $r_l$ has enough \required~nodes, it switches to \osci, sets $recent=false$, records the exit port, sets $r.move=1$. If $g < 2c$ (Lines \ref{alg:lead32}-\ref{alg:lead47}), the action depends on the \activee~robot’s phase: in the \forward~phase (Lines \ref{alg:lead32}-\ref{alg:lead33}), $r_l$ store the information of the next upcoming visiting node and exit port as $p$; while in the \backtrack~phase, if an unvisited child exists (Lines \ref{alg:lead35}-\ref{alg:lead36}), it explores that child (stores information in \nextnode, i.e., the next node to be visited) and records $v$ as a \helper~node. Otherwise (Lines \ref{alg:lead38}-\ref{alg:lead46}), it expands the group using descendant information from $r_o$, and maintains $g' \leq 4c$. When $2c \leq g' \leq 4c$, it switches to \osci, otherwise, it adds $v$ in $r_l.F$ and continues expansion until $g' \geq 2c$.

\end{description}

From rounds $((dT+2)+12c+1$)-th to $((d+1)T$)-th (Lines \ref{alg:nlead51}-\ref{alg:nlead52}), $r_l$ oscillates back to its group source node.

\item[C. Waiting and exit  (Lines \ref{alg:lead47}–\ref{alg:lead49}):] At $((d+1)T+1)$-th round, if DFS is complete, it begins the next phase.

\end{description}

{\bf \ding{228} Algorithm of \osci~robot:}  
A robot $r$ in the \osci~state follows $Oscillating(r)$ (Algorithm~\ref{algo:osci}). The behavior depends on the variable $recent$. If $recent = true$, robot $r$ moves with an \activee~robot to collect information about inter-group edges; if $recent = false$, it oscillates within its current group.

\begin{description}[leftmargin=0em]

\item[A. Initialization (Line~\ref{alg:osci1}):]  
In the $(dT+1)$-th round, the robot does nothing.

\item[B. Intermediate rounds (Lines~\ref{alg:osci3}-\ref{alg:osci41}):]  
It is divided into three parts. From the $(dT+2)$-th to the $((dT+2)+12c-1)$-th rounds, if $recent = false$, the \osci~robot $r_o$ oscillates within its group, pausing upon encountering an \activee~or a \lead~robot; if $recent = true$, it stays at its current node. In the $((dT+2)+12c)$-th round, it performs computation through communication with collocated robots. Finally, from the $((dT+2)+12c+1)$-th to the $((d+1)T+1)$-th rounds, the \osci~robot continues oscillation and returns to its group source node.

Let an \osci~robot $r_o$, with $recent = false$, encounter an \active~robot at node $v$ during the $(dT+2)$-th to the $((dT+2)+12c-1)$-th rounds (Lines~\ref{alg:osci3}-\ref{alg:osci7}). The robot $r_o$ waits at the current node $v$, and in round $(dT+2)+12c$ (Lines~\ref{alg:osci10}-\ref{alg:osci36}), the following cases arise:

\begin{itemize}[leftmargin=.5em]

\item $recent = false$, $move = 0$:  
If $v$ is the source node and $r_o.count = \delta_v - 1$ (Lines~\ref{alg:osci12}-\ref{alg:osci13}), i.e., the robot returns after exploring the whole graph, the DFS terminates and $r_o$ sets $DFS = \done$. If $r_o.count \neq \delta_v - 1$, i.e., the robot returns to the source node after exploring a port $p$ (the entered port) due to the structure of the graph, then $r_o$ increments $count$ (Line~\ref{alg:osci15}).

If a \lead~robot $r_l$ is located at $v$, then $r_o$ shares the information of its group. Based on the computation of $r_l$, if it expands its group using the nodes stored by $r_o$, then $r_o$ updates its group size $g$ and updates its state accordingly (Lines~\ref{alg:osci19}-\ref{alg:osci27}). If $g < 2c$, $r_o$ becomes a \lead~robot and stores the information of the next node to be visited (Lines~\ref{alg:osci19}-\ref{alg:osci21}). If the group size satisfies $2c \le g \le 4c$, then $r_o$ remains in the \osci~state and moves with an \activee~robot to a node that is not in its group in order to record the parent-child relationship between groups (Lines~\ref{alg:osci23}-\ref{alg:osci25}). If a \lead~robot is located at $v$, then $r_o$ moves with an \activee~robot in the \forward~phase or stays if the \activee~robot is in the \backtrack~phase (Lines~\ref{alg:osci29}-\ref{alg:osci30}).

\item $recent = true$ (Lines~\ref{alg:osci31}-\ref{alg:osci34}):  
If no \osci~robot is located at node $v$ with $v$ marked as a \required~node, then $r_o$ records the entry and exit ports for new groups (inter-group edges) and sets $recent = false$.

\item $recent = false$, $move = 1$ (Lines~\ref{alg:osci35}-\ref{alg:osci36}):  
The robot $r_o$ exits via port \portext.

\end{itemize}

From rounds $((dT+2)+12c+1)$-th to $(d+1)T$-th (Lines~\ref{alg:osci37}-\ref{alg:osci41}), $r_o$ oscillates back to its group source node if $recent = false$; otherwise, it remains at its current node.

\item[C. Waiting and exit (Lines~\ref{alg:osci42}-\ref{alg:osci46}):]  
The robot $r_o$ with $recent = true$ exits the current node to record inter-group edges (Line~\ref{alg:osci43}). Once $r_o.DFS = \done$, the robot proceeds to the next phase of the algorithm (Line~\ref{alg:osci45}).

\end{description}

 \subsubsection{Phase 2: Complete  connectivity information gathering} \label{des:gathering}

The algorithm begins with \osci~robot $r$ at source node $S$ (or, equivalently, at the source supernode). Robot $r$ transitions to $osci\_head$ state and initiates a DFS traversal on groups. Since every \osci~robot knows its parent, child groups, traversal across groups is realized via inter-group edges.


\begin{algorithm}[ht!]
    \caption{$Leader(r)$}
    \label{algo:leader}
     \tcp{$p'=(p_a+1)\bmod \delta_v$, where $p_a=r_a.\portent$}

    \If{$round==1$ and $r \in R[v]$ is the minimum ID robot \label{alg:lead1}}
    {
        $r.state=\lead$,  and \textsc{Save($r,S$)} \label{alg:lead2}\\
         $r.\groupsize=\groupsize+1$, $r.\nextnode=v$, $\portext(S)=p'$ \label{alg:lead4}
    }
    \If{$round= ((dT + 1), (dT + 1) + 12c]$ \label{alg:nlead5}}
    {
        oscillates within its group nodes and node $v$\\
        \lIf{encounters an \activee~robot $r_a$}
        {
            wait \label{alg:nlead8}
        }
    }

    \If{$round == (dT + 1) + 12c + 1$ \label{alg:nlead9}}       
    {      
        \If{$\nexists$ \osci~robot $r_o \in R[v]$ with $v \in Q_{r_o}$ \label{alg:nlead10}}
        {
            \If{\activee~robot $r_a \in R[v]$ is in \backtrack~phase\label{alg:nlead11} }
            {\If{$r.\sourcenode(v)=\yes$ in $r.F$ and $r.count=\delta_v-1$ with $v \in Q_{r}$  \label{alg:nlead12}}
            {
                $r.count=r.count+1$, $r.state=\osci$, $r.DFS=\done$ \label{alg:nlead13}
            }
            \ElseIf{$r.\sourcenode(v)=\yes$ in $r.F$ and $r.count\neq \delta_v-1$ with $v \in Q_{r}$  \label{alg:nlead14}}
                {
                    $r.count=r.count+1$, $r.\nextnode=v'$ and $\portext(v)=p'$ \label{alg:nlead15}
                }
            
            \ElseIf{ $v \in Q_{r}$ \label{alg:nlead16}}
            {$r.\nextnode=v'$ and $\portext(v)=p'$\label{alg:nlead17}}
            }
            
            \ElseIf{\activee~robot $r_a \in R[v]$ in \forward~phase \label{alg:nlead18}}
                {
                    \lIf{$v \in Q_r$ \label{alg:nlead19}}
                    {do nothing \label{alg:nlead20}}
                    \Else{
                    \textsc{Save($r,v$)}, and
                    $r.\groupsize=r.\groupsize+1$ \label{alg:nlead22}\\
                    \If{$r.\groupsize<2c$ \label{alg:nlead23}}
                    {
                       $r.\nextnode=v'$ and $\portent(v')=p'$ \label{alg:nlead24}
                    }
                
                    \Else
                    {
                        $r.state=\osci$ \label{alg:nlead26}\\
                        \If{$p' \neq parent(v)$\label{alg:nlead27}} 
                        {
                            $recent=true$ and $\portext= p'$ \label{alg:nlead28}
                        }
                        \lElse
                        {
                            $recent=false$ \label{alg:nlead30}
                        }
                    }
                }
            }

            }

        \If{$\exists$ \osci~robot $r_o \in R[v]$ with $v \in Q_{r_o}$ \label{alg:lead24}}
        {

            \If{$r_o.\sourcenode=\yes$ and $r_o.count=\delta_v-1$ or $r.\groupsize \in [2c,4c)$  \label{alg:lead25}}
            {
                 $r.move=1$, $r.recent=false$, $r.state=\osci$ and $\portext=p_a$ \label{alg:lead27}
            }
                 
            \Else
            {
                \If{\activee~robot $r_a \in R[v]$ in \forward~phase \label{alg:lead32}}
                {
                    $r.\nextnode=v'$ and $\portent(v)=p_a$ \label{alg:lead33}
                }
                \Else
                {
                    \If{$p' \neq parent(v)$ \label{alg:lead35}}
                    {
                       $r.\nextnode=v'$ and $\portext(v)=p'$ \label{alg:lead36}
                    }
                    \Else
                    {
                        $r$ expand its group \label{alg:lead38}\\
                        \If{after expand $r.\groupsize= [2c,4c]$ \label{alg:lead39}}
                        {
                            $r.state=\osci$, $r.recent=false$ \label{alg:lead40}
                        }
                        \Else
                        {
                            \textsc{Save($r,v)$}, $r.\groupsize= r.\groupsize +1$ \label{alg:lead42}\\
                             
                            \If{$r.\groupsize < 2c$ \label{alg:lead43}}
                            {
                                $r.\nextnode=v'$, $\portext(v)=p'$ \label{alg:lead44}
                            }
                            \Else
                            {$r.state=\osci$ and $r.recent=false$\label{alg:lead46}
                            }
                        }
                    }
                }     
            }
        }
        }
    \If{$round= ((dT + 1) + 12c + 1, (d + 1)T + 1)$ \label{alg:nlead51}}
    {
        oscillate and return to \grsource~node \label{alg:nlead52}
    }

    \If{$round==(d+1)T+1$ \label{alg:lead47}}
    {
        \lIf{$r.DFS=\done$ \label{alg:lead48}}
        {
            run Phase 2 \label{alg:lead49}
        }
    }
   
\end{algorithm}

The procedure proceeds in synchronous rounds. During exploration, an \osci~robot oscillates within its group: it moves from its source node to another designated node during $(dT+2)$ to $(dT+2)+(12c-1)$, pauses for one round, and returns to the source during $(dT+2)+(12c)$ to $(d+1)T$.

Similarly, the $osci\_head$ $r$ moves from its source node in $F_r$ to a child group $F_{r'}$ along an inter-group edge during $(dT+2)$ to $(dT+2)+(12c-1)$, reaching a node $v \in F_{r'}$. Upon arrival, $r$ remains at $v$ until round $(d+1)T$, ensuring that \osci~robot $r'$ of $F_{r'}$ encounters $r$ when it returns to its source. At round $(d+1)T+1$, $r$ records the information of $F_{r'}$ in a temporary data structure $F'_r$, and the DFS continues.  

In the next interval, from $(d'T+2)$ to $(d'T+2)+(12c-1)$ with $d'=d+1$, $r$ traverses the nodes of $F_{r'}$, exits through an inter-group edge, and proceeds to the child group $F_{r''}$. After waiting until $(d'+1)T$, it updates its record with the structural information of $F_{r''}$ and resumes traversal.  

     \begin{algorithm}[ht!]
    \caption{$Oscillating(r)$}
    \label{algo:osci}

    \lIf{$round==dT+1$  \label{alg:osci1}}
    {
        wait at the current node  \label{alg:osci2}
    }

    \If{$round=((dT+1),(dT+1)+12c]$ \label{alg:osci3}}
    {
        \If{$recent=false$ \label{alg:osci4}}
        {
            oscillate within its group \label{alg:osci5} \\
            \If{$\exists$ \activee~$r_a$ or \lead~$r_l \in R[v]$ \label{alg:osci6}}
            {
                wait at the current node \label{alg:osci7}
            }
        }
        \lElse
        {
            wait at the current node \label{alg:osci9}
        }
    }
    \If{$round==(dT+1)+12c+1$ \label{alg:osci10}}
    {
        \If{$r.recent=false$, $r.move=0$ \label{alg:osci11}}
        {
            \If{$r.\sourcenode(v)=\yes$ in $r.F$, $r.count=\delta_v-1$ with $v\in Q_r$ \label{alg:osci12}}
            {
                $r.count=r.count+1$, $r.DFS=\done$ \label{alg:osci13}
            }
            
            \Else
            {
                \If{$r.\sourcenode=\yes$ in $r.F$, $r.count\neq \delta_v-1$ with $v\in Q_r$ \label{alg:osci15}}
                {
                    $r.count=r.count+1$ \label{alg:osci16}
                }
            \If{$\exists$  \lead~robot $r_l \in R[v]$ \label{alg:osci17}}
            {
                shares the necessary information \label{alg:osci18}\\
                \If{$r.\groupsize<2c$ \label{alg:osci19}}
                {
                    $r.state=\lead$ and $\portext=p'$ \label{alg:osci21} \\
                }
                \Else
                {
                    \If{$v \in F_r$ \label{alg:osci23}}
                    {
                        \If{$p'\neq parent(v)$ \label{alg:osci24}}
                        {
                            $\portext= p'$ and $r.recent=true$ \label{alg:osci25}
                        }
                    }
                    \lElse
                    {
                        $r.move=1$, $r.\portext=parent(v)$ \label{alg:osci27}
                    }
                }
            }
            \Else
            {
                \If{$p' \neq parent(v)$ \label{alg:osci29}}
                {
                   $\portext= p'$ and $r.recent=true$ \label{alg:osci30}
                }
            }
        }
    }
   
    \ElseIf{$r.recent=true$ \label{alg:osci31}}
    {
        \If{$\nexists$ \osci~robot $r_o \in R[v]$ with $v \in Q_{r_0}$ \label{alg:osci32}} 
        {
            $r.gr\_Child[v'] = r.portext(v')$ in $r.F$, where $v'$ is previous traversed node \label{alg:osci33}    
        }
        $r.move=1$, $r.recent=false$ and
       $\portext=p$ \label{alg:osci34}
    }
    \ElseIf{$r.recent=false$ and $r.move=1$ \label{alg:osci35}}
    {
        $r.move=0$ and exit through \portext \label{alg:osci36}
    }
    }
    \If{$round=((dT+1)+12c+1, (d+1)T+1)$ \label{alg:osci37}}
    {
        \lIf{$recent=false$ \label{alg:osci38}}
        {
            oscillate and return to $gr\_source$ node \label{alg:osci39}
        }
        \lElse{wait at the current node \label{alg:osci41}}
        
    }
    \If{$round==(d+1)T+1$\label{alg:osci42}}
    {
        \lIf{$recent=true$ \label{alg:osci43}}
        {
            exit through \portext \label{alg:osci44}
        }
        \lElseIf{$r.DFS=\done$ \label{alg:osci45}}
        {
            run Phase 2 \label{alg:osci46}
        }
    }
\end{algorithm}


If $r$ reaches a leaf group $F_{r^l}$, it backtracks to the parent group together with the guard $r^l$, updating $r^l.gather= \yes$ (initially set to \no). While backtracking, $r$ systematically explores any unexplored child groups. Once all child groups of a node are visited, the traversal continues upward, carrying \osci~robots that have their $gather$ flag set to \yes. This DFS-based process guarantees that information from every group is collected and propagated. Ultimately, the $osci\_head$ accumulates the complete structural information of $G$.

\subsubsection{Phase 3: Memory-efficient dispersion}
After Phase 2, the robot $r$ perform a local computation at $S$, setting the variables $dsg\_gr$, $d$, $stmp\_gr$, $tmp\_gr$, $ctr$, $level$, and data structure $D$ (Section~\ref{ques:phas3}). The robots now follow a two-step cooperative strategy.

\begin{description}[leftmargin=0em]

\item[Stage 1:] The robots disperse according to their assigned groups, stored in the variables $tmp\_gr$ and $stmp\_gr$, following a DFS, as described in Section~\ref{des:gathering} for the $osci\_head$ robot. Robots having non-\textsc{Null} values of either $tmp\_gr$ or $stmp\_gr$ initiate DFS and traverse the groups by collaborative guidance. Each robot maintains a counter variable $round$ (initialized to~0) incremented by 1 in every round elapses. Each robot spends $12c$ rounds at each group.

During backward traversal from supernode $F^i$, where $F^i$ represents a group with group number $i$, once all child groups are visited, the robot $r$ with $r.stmp\_gr = F^i$ stays at the node, while the remaining robots backtrack and continue traversal. 



\item[Stage 2:] Robots initiate Stage~2 when $round = 12c \cdot k$ and subsequently reset $round = 0$, marking the completion of Stage~1. A robot $r$, with $r.level$ is $l_i$ performs its computation when $round = 12c \cdot l_{i-1}$. The robot $r$ in group $F^j$, where $r'.stmp\_gr = F^j$, executes computations based on data stored in $r'.D$. Specifically, $r$ either moves to a child group $F^{j'}$ to fulfill the pending demands, guided by the robot $r'$, or remains at its current group $F^i$.  

After $12c \cdot l_d$ rounds, where $l_d$ is the diameter of $\mathcal{T}$, each node hosts at most one robot of the same color and become idle.


    
    


\end{description}

\subsection{Analysis of the Algorithm}

\begin{lemma}\label{lem:1grexp}
Let $F_r$ be the group formed by the robot $r$. Let $Q_r$ and $E_{Q_r}$ be the sets of \required~nodes and \required~edges of $F_r$, respectively, and let $H_r$ and $E_{H_r}$ be the sets of \helper~nodes and \helper~edges of $F_r$, respectively. Define $U_r = Q_r \cup H_r$ as the set of all nodes in $F_r$, and let $E_r = E_{Q_r} \cup E_{H_r}$ be the set of all edges in $F_r$, where for any two vertices $v_i,v_j \in U_r$, $E_{Q_r} = \{(v_i,v_j) \mid v_i \in Q_r \text{ or } v_j \in Q_r\}, \text{and } E_{H_r} = \{(v_i,v_j) \mid v_i, v_j \in H_r\}$. Then the following properties hold for the group $F_r$:

\begin{enumerate}[noitemsep, topsep=0pt,leftmargin=1.2em]
    \item $(U_r, E_r)$ forms a tree rooted at the group source node $v \in Q_r$, i.e., $r.\grsource(v) = \yes$.
    
    
    
    \item The number of nodes in \Fr~satisfies $|U_r|\leq 6c$.
    
    
    \item Let $v$ be the group source node of $U_r$ (i.e., $r.\grsource(v)=\yes$), and let $u \in Q_r$ be the farthest \required~node from $v$. Starting from $v$, robot $r$ visits every node of $U_r$ at least once and reaches $u$ within at most $12c$ rounds.
         
\end{enumerate}
\end{lemma}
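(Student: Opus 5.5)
We prove the three properties in order, by induction on the sequence of operations that modify $F_r$: creation of the group at its source, forward expansion to an unvisited node, adding a \helper~node during backtracking, absorbing a subtree transferred by an \osci~robot $r_o$, and shrinking when $r$ hands a subtree to another \lead. The invariant $\mathcal{I}$ is that $(U_r,E_r)$ is a tree rooted at the node $v$ with $r.\grsource(v)=\yes$, and that the stored pointers $parent(\cdot)$ and $child[\cdot]$ record exactly its edges.

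\textbf{Property 1.} Initially $U_r=\{v\}$ and $\mathcal{I}$ holds trivially. In a forward expansion, the new node $w$ is unvisited, as certified by the $12c$-round waiting test of Q4. It is attached by the single edge through which the \activee~robot entered it, recorded as $parent(w)$. Adding a leaf to a tree gives a tree. A \helper~node is added only along the DFS edge from its recorded parent, so it is also a leaf attachment. For absorption, the transferred set is the subtree of $F_{r_o}$ rooted at the meeting node, and it is joined to $U_r$ through that one node. Two disjoint trees glued at a single shared edge or vertex form a tree. Shrinking removes a full subtree, which leaves a tree. Orientation toward $v$ is preserved because $dist(\cdot)$ and $parent(\cdot)$ are updated consistently with each step.

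\textbf{Property 2.} We bound $|Q_r|$ and $|H_r|$ separately. The \lead~code switches to \osci~as soon as $\groupsize$ reaches $4c$ (Lines~\ref{alg:nlead22}--\ref{alg:nlead30}). Absorption is capped at $g'\le 4c$ (Lines~\ref{alg:lead38}--\ref{alg:lead46}). Hence $|Q_r|\le 4c$.

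For \helper~nodes, we argue structurally. Take the minimal subtree spanning $Q_r$; every leaf of it is a \required~node. A \helper~node is added only to connect toward a further \required~node, so every \helper~node is an internal node of this subtree. We then charge each \helper~node to a distinct subsequent \required~addition in the same branch. This gives $|H_r|\le |Q_r|-1\le 4c-1$.

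This is the delicate step. The overview asserts the sharper bound $|H_r|\le 2c-1$, which would come from the fact that \helper~nodes are added only while $g<2c$ (Lines~\ref{alg:lead35}--\ref{alg:lead36}). We would therefore charge each \helper~node to a \required~node added while $g<2c$, giving $|H_r|\le 2c-1$ and so $|U_r|\le 4c+2c-1<6c$. The main obstacle is making this charging injective across interleaved absorb and shrink operations. Our first attempt is to show that shrinking never leaves a \helper~node as a leaf; otherwise we prune such leaves, which preserves connectivity.

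\textbf{Property 3.} Robot $r$ oscillates by a DFS (Euler tour) of the tree $(U_r,E_r)$ starting at $v$. By Property 1 the tree has $|U_r|-1$ edges, so a full tour takes $2(|U_r|-1)\le 12c-2$ rounds and visits every node.

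To ensure $u$ is reached within this window, we order the tour so that the branch containing $u$ is visited last along the root-to-$u$ path; equivalently, the tour ends at $u$. A tour that ends at a chosen node $u$ costs $2(|U_r|-1)-d(v,u)\le 12c$ rounds. Hence every node is visited and $u$ is reached within $12c$ rounds, consistent with the waiting time used in Q4.
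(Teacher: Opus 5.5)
\textbf{Property 2 is not established, and Properties 1 and 3 depend on it.}

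Your Property 1 is essentially the paper's acyclicity/connectivity argument, organized as an induction over the group-modifying operations. Your Euler-tour count in Property 3 is cleaner than the paper's per-node ``visited at most twice'' tally: $2(|U_r|-1)$ edge traversals, or $2(|U_r|-1)-d(v,u)$ for a tour ending at $u$. But the $12c$ bound rests on $|U_r|\le 6c$. So does the step in Property 1 where you certify a new node $w$ as unvisited via the Q4 waiting test, since that test is sound only if every guard completes its tour within $12c$ rounds. The three properties therefore have to be carried as one joint invariant, and the piece you leave open is exactly the size bound. You call the injectivity of your charging ``the main obstacle'' and stop there.

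The fallback $|H_r|\le |Q_r|-1$ is not justified either. Knowing that helper nodes are internal to the minimal subtree spanning $Q_r$ bounds only the branching nodes of that subtree (at most $|Q_r|-2$ of degree at least $3$). It says nothing about degree-$2$ nodes. One branch can carry a chain of several helper nodes on the way to a single far required node. This is the worst case the paper's proof itself highlights: the path from the meeting node to the required nodes inherited from $r_o$'s subtree runs through intermediate nodes. Such a chain defeats ``charge each helper node to a distinct subsequent required addition in the same branch,'' and it equally defeats your refined charging to required nodes added while $g<2c$. Even granting $|H_r|\le 4c-1$, you would only get $|U_r|\le 8c-1$ and a tour of up to $16c-4>12c$ rounds, so Property 3 would fail.

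The idea you are missing is the paper's \emph{localization} of helper nodes, used in place of a global charging over the interleaved absorb/shrink history. Helper nodes are created only while $g<2c$, when $r$ meets an oscillating robot $r_o$ at a required node $v$ of $r_o$'s group. They then lie inside the subtree of $v$ that $r_o$ hands over, which has at most $4c$ nodes. The helpers are counted as the part of that donor subtree that $r$ does not claim as required nodes. The paper's worst-case accounting ($g=2c-1$, a donor subtree of $4c$ nodes, $2c+1$ of them claimed as required) leaves at most $2c-1$ helpers, so $|U_r|\le 4c+(2c-1)<6c$. With that bound, your Euler tour gives at most $12c-2$ rounds.

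A minor slip in Property 1: a helper node added on a backtrack is attached to $U_r$ by the edge along which $r$ backtracked into it, i.e.\ from its DFS child, not from its recorded $parent(\cdot)$. The leaf-attachment conclusion is unaffected.
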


\begin{proof}~
\begin{enumerate}
    \item We show that $(U_r,E_r)$ forms a tree by proving that it is acyclic and connected. Let robot $r$ become the leader at node $ v$. By the algorithm, $r$ initializes its group $F_r$ by including $v$ as the source node.

\textbf{Acyclicity.} A vertex is added to $U_r$ exactly once: either upon its first visit during the \forward~traversal (as a \required~node) or during the \backtrack~phase when robot $r$ expands its group and incorporates vertices from another group as \helper~or \required~nodes. During the execution, robot $r$ may expand or shrink its group.

During the forward traversal, if robot $r$ encounters a vertex $u \in U_r$ that has already been visited (i.e., $u \in Q_r$ or $u \in H_r$), the \activee~robot detects this condition and, after the prescribed waiting time, backtracks without adding any edge. Hence, no edge is added between two previously visited vertices in the forward phase.

During the backtracking phase, robot $r$ may expand its group upon reaching a vertex $v \in Q_{r'}$ for some robot $r'$. The expansion follows a simple path, and for each newly added vertex, exactly one edge, connecting it to its predecessor on the path is added to $E_r$. Suppose, for contradiction, that this expansion creates a cycle. Then some newly added vertex $v'$ must be connected to a vertex $u \in U_r$ that was already present before the expansion. This would imply that the robot $r$ traversed the edge from $v'$ to $u$, establishing a parent-child relationship with $v'$ as the parent. However, during the backtracking phase, traversal is allowed only from a child to its parent; thus, the robot $r$ would have backtracked from $u$ to $v'$, a contradiction. Therefore, no cycle is formed during expansion.

Finally, when the robot $r$ shrinks its group, vertices and their incident edges are removed. Since edge deletions cannot create cycles, acyclicity is preserved.

Thus, in all cases, the subgraph $(U_r,E_r)$ maintained by robot $r$ is acyclic.

    \textbf{Connectivity.} Let $u, v \in U_r$ be two vertices in $F_r$. If the path between $u$ and $v$ consists solely of vertices belonging to $U_r$, then there exists a path between $u$ and $v$ using only edges in $E_r$, as whenever $r$ discovers a new vertex $v \in U_r$ from an already visited vertex $u \in U_r$, exactly one edge $(u,v)$ is added to $E_r$. Hence, $u$ and $v$ are connected in $F_r$.

    Suppose, for contradiction, that there exists a vertex $w \notin F_r$ lying on the path between $u$ and $v$. According to the algorithm, if the robot $r$ expands its group from $u$ during the forward phase and encounters a vertex $w \notin U_r$ that has already been visited, then $r$ immediately backtracks. This implies that the robot cannot proceed toward $v$, contradicting the assumption that $v \in U_r$.

    Alternatively, if $r$ expands its group during the \backtrack~phase and encounters such a vertex $w$, then the algorithm ensures that the vertices on the path from $w$ to $v$ are added as \helper~or \required~node. Consequently, there exists a path from $u$ to $v$ entirely within $F_r$.

    Finally, when robot $r$ shrinks its group, connectivity is preserved, as after the shrinking operation, either the vertex $v$ continues to belong to $F_r$, or it is removed along with its dependent subtree, ensuring that the remaining vertices in $F_r$ remain connected.
    
    Moreover, since the exploration maintains acyclicity, this path is unique. Therefore, the subgraph induced by $F_r$ is connected.


    Since $(U_r, E_r)$ is both acyclic and connected, it follows that it forms a tree rooted at the source node $v$.

    \item In a group $F_r$, the number of \required~nodes satisfies $2c \leq |Q_r| \leq 4c$. Therefore, $|Q_r| \leq 4c$, and it remains to show that the number of \helper~nodes satisfies $|H_r| \leq 2c$.
    
    By the algorithm, the \helper~nodes are introduced in $F_r$ (if needed) when a robot $r$ encounters a node that has already been visited and belongs to the group of another robot. Suppose robot $r$, with group size $g < 2c$, reaches a visited node $v$. If $g > 2c$, then $r$ has sufficient size and simply returns to its group without adding $v$ as a \helper~or a \required~node. Let us assume that the robot $r$ is backtracking to node $v$. If $r$ visits $v$ (where $v\notin Q_r$ or $v \in Q_r$) in the forward phase, it will do nothing and exit $v$ through the port it entered, independent of its size, as specified by the algorithm. Let $r_o$ be the \osci~robot that stores the information corresponding to $v$ and is present at $v$.

When $r$ reaches $v$, there are two possibilities:

\begin{itemize}
    \item[(i)] If $v$ has at least one unvisited child $v_1$, then $r$ proceeds in the \forward~phase to $v_1$. In this case, $v$ is included in $H_r$ as a \helper~node. At $v_1$, $r$ stores the information of $v_1$, and the group size increases to $g+1$. If $v_1$ has no unvisited children, then $r$ eventually backtracks to $v$ (as $r$ will be visiting in \forward~phase) according to the graph exploration algorithm. Once all children of $v$ have been explored, $r$ augments $F_r$ with the information of the subtree rooted at $v$. 

    \item[(ii)] If all children of $v$ have already been visited when $r$ arrives, then $r$ immediately receives the subtree information from $r_o$ and integrates it into $F_r$.
\end{itemize}

In both cases, the critical point is that the maximum amount of information that $r$ can receive from the subtree of $v$ (via $r_o$) is bounded by $4c$ nodes. The worst-case situation for the robot $r$ arises when the \required~nodes are farther away from $v$ (i.e., up to $4c$ hops), so that $r$ must traverse a larger number of intermediate nodes (\helper~nodes) in order to reach the corresponding nodes whose information is shared by $r_o$. Let the size of the subtree of $v$ corresponding to $r_o$, including $v$, be $4c$ (maximize size). Thus, the nodes relevant to $r$ may lie at most at a distance $4c$ from $v$, and in this extreme case, $r$ is forced to include more \helper~nodes than \required~nodes, since the path to each \required~node may pass through additional intermediate nodes. 

Now, if the group size is $g < 2c$, the group formation procedure continues. In particular, when $g = 2c-1$, robot $r$ attempts to store the minimum additional \required~nodes from the subtree of $v$ so that the \required~nodes are farther away from $v$. In this case, the robot $r_o$ contributes $2c+1$ nodes of information. Consequently, $r$ stores exactly $2c+1$ nodes, while the remaining $2c-1$ nodes serve as \helper~nodes that guide $r$ to reach the \required~nodes. Hence, $F_r$ contains at most $2c-1$ \helper~nodes.

 Hence proves that $|U_r|=|Q_r|+|H_r|\leq 6c$.

 \item Let $v$ be the group source node of $U_r$ , and let $u \in Q_r$ be the farthest \required~node from $v$. From part (ii), the group $F_r$ contains at most $4c$ \required~nodes and at most $2c-1$ \helper~nodes. Hence, the total number of nodes in $F_r$ is strictly less than $6c$, and $F_r$ forms a tree.

During the DFS traversal, the robot $r$ may visit each node multiple times due to the forward and backtrack phases. Specifically, each \required~node is visited at least once (forward phase) and at most twice (forward and possibly backtrack), and each \helper~node is visited at least once (when serving as a connector) and at most twice (forward and backtrack).

Therefore, in the worst case, traversing all \required~nodes requires at most $2 \cdot 4c = 8c$ rounds, and traversing all \helper~nodes requires at most $2 \cdot (2c-1) < 4c$ rounds. Summing these bounds, the total number of rounds required to visit all nodes in $F_r$ is at most $8c + 4c = 12c$.

Hence, starting from $v$, robot $r$ visits every node of $U_r$ at least once and reaches the farthest \required~node $u$ within at most $12c$ rounds. \qedhere
\end{enumerate}
\end{proof}

\begin{lemma}\label{lem:2finishedexpr}
Let $F_r$ be the group formed by robot $r$ (in either the \lead~or \osci~state), and let $S \in F_r$ be the source node of the given graph $G$. The graph exploration process ends when the variable $count$ maintained by $r$ satisfies $r.count = \delta_S$, where $\delta_S$ is the degree of the node $S$.

\end{lemma}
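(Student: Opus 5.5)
The plan is to prove two directions: (a) every increment of $r.count$ corresponds to the DFS finishing the exploration of one incident port of $S$, so $r.count \le \delta_S$ at all times; and (b) $r.count$ reaches $\delta_S$ exactly when the DFS started at $S$ has returned to $S$ through its last port. Together these show that the condition $r.count=\delta_S$ is reached if and only if the traversal is complete. Throughout I will treat the movement of the \activee~robots, viewed at the granularity of movement rounds (one move per period of $T=24c+2$ rounds), as a standard port-order DFS from $S$. The group bookkeeping by \lead~and \osci~robots only decides where robots settle and which edges are recorded. It never changes the exit-port rule $(p+1)\bmod \delta_v$ or the backtrack rule.

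\textbf{Step 1: faithful simulation of DFS.} I would first justify that, in every period, the \activee~robots at a node $v$ correctly decide whether $v$ was visited before. By Lemma~\ref{lem:1grexp}(3), the guard of the group containing $v$ as a \required~node reaches every node of its group within $12c$ rounds. Since an \activee~robot waits $12c$ rounds before deciding, it meets that guard if and only if $v$ is already visited. The only other possibility is a \lead~at $v$, which the case analysis of $Active(r)$ handles. Hence the \forward/\backtrack~decisions coincide with those of a sequential DFS. That DFS leaves $S$ through its ports in the order $0,1,\dots,\delta_S-1$, and after finishing the subtree through a port it returns to $S$ (or, for a non-tree edge, is immediately bounced back) before using the next port.

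\textbf{Step 2: counting returns to $S$.} Next I would check in each algorithm where $count$ changes: Lines~\ref{alg:nlead13}, \ref{alg:nlead15} for the \lead~and Lines~\ref{alg:osci13}, \ref{alg:osci16} for the \osci~robot. In every case $count$ is incremented exactly when an \activee~robot is present at $S$ in the \backtrack~phase (or is bounced at $S$). By Step 1, this happens exactly once per port of $S$. The guard of $S$ never leaves its group. Its oscillation lasts at most $12c$ rounds, and it waits on meeting an \activee~robot, so it witnesses every such return. Therefore after the $j$-th port has been fully explored, $r.count=j$. Induction on $j$ gives $r.count=\delta_S$ exactly after the last port, at which point the DFS has returned to the root with no unexplored port left. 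A DFS from $S$ in a connected graph has then visited every node, and the flag $DFS=\done$ is set (the test $count=\delta_v-1$ before incrementing).

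\textbf{Step 3: guard changes at $S$.} Finally I would handle the possibility that the guard of $S$ changes between \lead~and \osci~(group shrinking or expansion as in Q5). Since $S$ is the group source of the first group and is never transferred in a shrink, because only subtrees below a met node $v\neq S$ are handed over, the counter stays with the same robot. I expect the main obstacle to be Step 1's claim that group shrinking and expansion never cause a double count or a missed return at $S$. This requires arguing that the timing windows ($12c$ oscillation plus the computation round) align for the guard of $S$ even while it is in the \lead~state expanding its group. I would address it by appealing to the bound $|U_r|\le 6c$ from Lemma~\ref{lem:1grexp}(2) and the fact that all guards return to their group source by round $(d+1)T$.
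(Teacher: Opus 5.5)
Your proposal is correct and follows essentially the paper's argument: each of the $\delta_S$ ports of $S$ yields exactly one counted return of the \activee~robots to $S$ in the \backtrack~phase, so $count=\delta_S$ holds exactly when the DFS from $S$ has finished, and your Steps 1 and 3 make explicit the visited-detection (via Lemma~\ref{lem:1grexp}) and the stability of the counter's owner that the paper's short proof takes for granted. Make sure the parenthetical ``or is bounced at $S$'' in Step 2 means only the bounce-back after $S$ itself probes a non-tree port: forward-phase arrivals at $S$ along non-tree edges from descendants must not be counted (the test at Lines~\ref{alg:osci12}--\ref{alg:osci16} of $Oscillating(r)$ does not check the phase), otherwise the once-per-port count, and with it the termination test $count=\delta_S-1$, breaks.
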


\begin{proof}
    
Let $F_r$ denote the group formed by robot $r$, and let $S \in F_r$ be the source node of the graph $G$. The robot maintains a variable $count$ to record the number of incident edges of $S$ that have been completely explored.

According to the exploration algorithm, the robot $r$ increments $count$ whenever the \activee~robot leaves the source node $S$ through a port $p$ and later returns to $S$ through the same port during the backtracking phase, indicating that the corresponding incident edge has been completely explored. Since the degree of $S$ is $\delta_S$, there are exactly $\delta_S$ such incident edges.

Therefore, the exploration terminates precisely when $count = \delta_S$, which guarantees that all edges incident to $S$ have been explored and that robot $r$ has returned to the source node, thereby completing the exploration of the graph.
\end{proof}

As defined earlier, $F_r$ denotes the group formed by robot $r$. For ease of analysis, we classify groups according to their group number (\groupno). Let $W_1, W_2, \ldots, W_b$ be the collection of groups such that $W_i$ corresponds to the group of the robot $r$ with $r.gr\_num = i$. For convenience, we also refer to each group as a \emph{supernode}. Thus, $W_i$ represents the supernode (or equivalently group) $F_r$, where $r$ is the robot located at the source node $v$ of $F_r$ (i.e., $r.\grsource(v) = \yes$ for some $v \in F_r$) and $r.\groupno = i$. Similarly, we denote by $Q_i$ and $H_i$ the sets of \required~and \helper~nodes, respectively, of the supernode $W_i$. Let ${\cal W} = \{W_1, W_2, \ldots, W_b\}$ be the collection of all supernodes. Clearly, $W_1$ is the first group created during the traversal, which means $S \in Q_1 \subseteq W_1$ or equivalently $S\in \Fr$ for some robot $r$ such that $r.\sourcenode(S) = \yes$. From now on, we use the term supernode in place of group.  We now prove some properties with respect to the groups.

\sloppy

\begin{lemma}\label{lem:3endexpr}
At the end of the graph exploration process, the following properties hold for the set of groups/supernodes:
\begin{enumerate}[noitemsep, topsep=0pt,leftmargin=1.2em]
    \item The source node $S\in W_1$ corresponding to a robot $r$ (in \lead~or \osci~state) contains at least two robots. 
    \item For every node $v \in V$ of the graph $G$, there exists a unique supernode $W_i$ such that $v$ is a \required~node of $W_i$. Equivalently, the sets of required nodes form a partition of $V$, i.e., $Q_i \cap Q_j = \emptyset$ for all $i \neq j$, and $Q_1 \cup Q_2 \cup \cdots \cup Q_b = V$, where $Q_i$ denotes the set of \required~nodes of supernode $W_i$.

    \item The number of supernodes satisfies $|\mathcal{W}| \leq k/2$, $k$ is the total number of robots. (Equivalently, $b$ satisfies $b \leq k/2$.)

    
    \item Let $\mathcal{E}$ be the set of inter-group edges identified within the set $\mathcal{W}$ by our algorithm. Then the structure $\mathcal{T} = (\mathcal{W}, \mathcal{E})$ forms an oriented tree rooted at $W_1$, where $W_1$ is the supernode containing robot $r$ with $r.\sourcenode(S) = \yes$ for $S \in W_1$. \label{lem:5_suprtree}

\end{enumerate}
\end{lemma}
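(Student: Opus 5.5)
We prove the four items in order, each resting on the DFS invariants of Phase~1 and on Lemmas~\ref{lem:1grexp} and~\ref{lem:2finishedexpr}.

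\textbf{Item 1.} By Lemma~\ref{lem:2finishedexpr}, exploration ends exactly when the guard $r$ of $W_1$ (with $r.\sourcenode(S)=\yes$) has $r.count=\delta_S$. By the algorithm, this counter is incremented only in the round in which an \activee~robot returns to $S$ in the \backtrack~phase and is collocated with $r$. So in the terminating round there are two robots at $S$: the guard and the returning \activee~robot. It remains to show that some \activee~robot survives the whole traversal. Here we use item 3: at most $k/2$ robots ever become guards, so at least $k/2\ge 1$ robots remain \activee. We therefore prove item 3 before item 1, or we argue the two jointly by induction over group creations.

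\textbf{Item 2.} We argue disjointness and coverage separately. For disjointness, a node enters some $Q_i$ only in two ways. The first is at its first visit, which is detected as first by the $12c$ waiting rounds together with Lemma~\ref{lem:1grexp}(3): any guard owning $v$ reaches it within $12c$ rounds. The second is by a transfer during expansion or shrinking, in which the owner $r_o$ hands the subtree over and deletes it from its own $Q$. In both cases the node lies in exactly one $Q_i$ at every time, which is an invariant preserved by every rule. For coverage, the DFS over $G$ visits every node before $count$ reaches $\delta_S$, because the standard DFS argument applies once visited-detection is correct. At its first visit a node is saved by the current \lead, and transfers never drop a node.

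\textbf{Item 3.} Each group other than possibly the last one formed has at least $2c$ \required~nodes, since a \lead~switches to \osci~only when $\groupsize\ge 2c$ or at termination. A last group that is too small is merged by the expansion rule. Then item 2 gives $2c\,(b-1)\le n$, so $b\le n/(2c)+1$. With $k=\lceil n/c\rceil$ this yields $b\le k/2$ up to rounding, and we must handle the $+1$ carefully. I expect this counting step, in particular ruling out an undersized final group without violating the $4c$ upper bound, to be the main obstacle. I would try to show that an undersized group either absorbs an \osci~subtree (the Q5 mechanism) or is the root group, which absorbs the remainder.

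\textbf{Item 4.} An inter-group edge is recorded only when a $recent=true$ \osci~robot crosses from its own group into a node it does not own; such a crossing occurs exactly when a new group is initiated by the \activee~robots. We orient each such edge from the parent group to the child group. Every group except $W_1$ receives exactly one $gr\_Parent$ entry, set at creation in Line~\ref{alg:ac28}, and that parent has a smaller $\groupno$. Hence following parent pointers strictly decreases the group number and ends at $W_1$. This makes $\mathcal{T}$ connected and acyclic, with $b-1$ edges, and so an oriented tree rooted at $W_1$. Shrinking and expansion move nodes between groups but leave the groups and their recorded parent ports in place, and this must be checked.
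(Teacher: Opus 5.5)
Your items 2 and 4 follow the paper's proof. Item 2 uses first-visit detection plus DFS coverage for the partition, and item 4 uses creation order, via parent pointers to strictly older groups, for the tree. Your item 1 is the paper's counting argument (not all $k$ robots can become guards, because $2ck>n$), routed through item 3.

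The genuine gap is item 3, which you leave open; as you have ordered things, item 1 is then open too. From ``every group but one has at least $2c$ \required~nodes'' you only get $2c(b-1)<n\le ck$, i.e.\ $b<k/2+1$, not $b\le k/2$. The paper settles item 3 in one line by taking as given that every supernode ends with $2c\le|Q_i|\le 4c$ \required~nodes, so $2cb\le n\le ck$. You neither prove that lower bound for the group still open at termination nor replace it. Your proposed patches are also not mechanisms of the algorithm:
\begin{itemize}
\item the Q5 transfer fires only on a particular backtrack encounter with an \osci~guard;
\item no rule dissolves a group, and $W_1$ absorbs nothing at the end;
\item the termination branch of $Leader(r)$ (Line~\ref{alg:lead25}) turns a \lead~into an \osci~at $S$ regardless of its size.
\end{itemize}
Your suspicion is in fact justified. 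Take a path of $n=3c$ nodes explored from an endpoint, with $k=\lceil n/c\rceil=3$. The forward rule closes $W_1$ at exactly $2c$ \required~nodes, and a fresh \lead~opens $W_2$ at the next node, so $b=2>k/2$. The step you are missing therefore cannot be supplied from the stated rules. The paper's one-line argument rests on the same unverified size invariant.

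Item 1, however, does not need the exact bound. Suppose you establish that at every moment at most one group has fewer than $2c$ \required~nodes, namely the one whose guard is currently a \lead. This need not be the last group formed, since a shrunk \osci~reverts to \lead. Disjointness of the $Q_i$ then gives $2c(b-1)<n\le ck$, hence $b<k/2+1\le k$ throughout the traversal. So an \activee~robot always survives, and by Lemma~\ref{lem:2finishedexpr} its last return to $S$ meets the guard of $W_1$. This decouples item 1 from item 3. The single-undersized-group invariant must itself be checked against the shrink rule, though: as written, that rule can leave both the absorbing guard and the shrunk guard in the \lead~state.
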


\begin{proof}~

    \begin{enumerate}
    
        \item We will prove that, at the end of the exploration process, the source node contains at least two robots. Consider a graph $G$ consisting of $n$ nodes. During the exploration process, each settled robot is responsible for storing the information of at least $2c$ nodes and at most $4c$ nodes. Let us assume that all robots settle during the exploration, forming $k$ distinct groups. Each group then contains between $2c$ and $4c$ nodes.  

    Thus, the total number of nodes covered in this case satisfies $k \cdot 2c \;\leq\; n \;\leq\; k \cdot 4c$.

    Since the number of robots initially is $k = \lceil n/c \rceil$, it follows that $k \cdot 2c > n$.
    
    This inequality leads to a contradiction, as it implies that not all robots can settle during the exploration process. Therefore,  some robots must return to the source node $s$.  

    Moreover, since at least one robot is designated to store the information of the source node $s$ itself (in \osci~state), and at least one other robot (in \lead~state) necessarily returns to $s$ after completing the exploration, the source node $s$ will contain at least two robots at the end of the process. This proves the lemma.

    \item We prove this by contradiction. Suppose that there exists a node $v$ such that $v \notin Q_i$ for every robot $r$.  

By the algorithm, a node $v$ is included in $Q_i$ when robot $r$ visits $v$ for the first time during the exploration process. At the end of the exploration process, every node in the graph is eventually visited by some robot. If $v$ does not belong to any $Q_i$, it means that no robot has ever visited $v$ during the entire exploration. However, this contradicts the exploration algorithm, which guarantees that the traversal covers the entire graph and no node remains unvisited.  

Hence, our assumption is wrong, and it follows that for every node $v$, there exists at least one robot $r$ such that $v \in Q_i$ and $\bigcup_{i=1}^b Q_i = V$.

Now, we have to prove that $v \in Q_i$ is unique. According to the algorithm, an unvisited vertex $v$ is added to a group $W_i$ (as a \required~node) at the time it is first discovered. Concretely, once a vertex $v$ is placed into $Q_i$ the visited flag at $v$ prevents any subsequent group from claiming $v$ as a \required~node.

Consider a robot belonging to another group $W_j$, where $j \neq i$, that later reaches $v$. Upon reading that $v$ is visited, the robot follows the algorithm and either uses $v$ as a \helper~node or backtracks, depending on the traversal phase. If the group $W_j$ expands at $v$, then the group $W_i$ is compressing, ensuring that a single group preserves $v$ as a \required~node. Hence no vertex can belong to two distinct $Q_i$ and $Q_i \cap Q_j = \varnothing$ for all $i\neq j$.

Combining the above cases, we obtain that $\{Q_1,\dots,Q_b\}$ is a partition of $V$, as required.

\item During the exploration each settled group (supernode) $W_i$ is responsible for the information of between $2c$ and $4c$ required nodes. By part (iii) the sets $Q_1,\dots,Q_b$ form a partition of $V$, so every vertex of $G$ is assigned to exactly one group. Therefore, the total number of robots required to cover $n$ nodes is at least $k/2$, where each group $W_i$ covers $2c \le |Q_i| \le 4c$ nodes.

\item Without loss of generality let $W_1$ be the supernode that contains the global source node $S$ and the robot $r$ with $r.\sourcenode(S)=\yes$. We direct every inter-group edge from a parent toward the child group and child to parent,; thus $\mathcal T$ is an oriented graph. We prove that $\mathcal T$ is a rooted tree with root $W_1$ by establishing acyclicity and connectedness.

\textbf{Acyclicity.} Assume, for the sake of contradiction, that $\mathcal T$ contains a directed cycle $W_{i_1} \to W_{i_2} \to \cdots \to W_{i_k} \to W_{i_1}$ with $k \ge 2$. By the construction of the algorithm, an inter-group edge $(W_p, W_q) \in \mathcal E$ is created if and only if the group $W_q$ is newly formed by robots originating from the group $W_p$. Therefore, $W_p$ must exist before $W_q$ is created.

However, the existence of the edge $W_{i_k} \to W_{i_1}$ implies that $W_{i_1}$ is created after $W_{i_k}$, which contradicts the fact that $W_{i_1}$ is an ancestor of $W_{i_k}$ in the directed cycle. Hence, such a directed cycle cannot exist.

Moreover, during both the expansion and shrinking operations of a group $W_{i_\ell}$, the algorithm preserves the parent--child relationship among groups and never introduces an edge to a previously created ancestor group. Therefore, the acyclicity property is maintained throughout the execution.

Thus, $\mathcal T$ is acyclic.

\textbf{Connected.}
The exploration process starts with the source group $W_1$. Whenever a new group $W_j$ is formed, it is created from an existing group $W_i$ that has completed its group-formation procedure. During this transition, the robots explicitly record an inter-group edge connecting $W_i$ to $W_j$. Consequently, each newly formed group has a well-defined incident edge linking it to its predecessor.

By repeatedly following these recorded inter-group edges, any group $W_j$ can be reached from the source group $W_1$.

If a group $W_i$ subsequently shrinks, then according to the algorithm, it updates its inter-group edges so that connectivity with the expanding group $W_j$ is preserved. No group is disconnected during this process.

Therefore, every group remains connected to the source group, and the structure $\mathcal T$ is connected.

Since the underlying undirected graph of $\mathcal T$ is connected and the directed parent relation is acyclic with a unique root $W_1$, $\mathcal T$ is an oriented tree rooted at $W_1$.\qedhere
\end{enumerate} \end{proof}

\begin{lemma} \label{lem:5uniquedes}
Consider the tree ${\cal T} = (\mathcal{W}, \mathcal{E})$, where all robots, having collectively acquired the connectivity information of ${\cal T}$ (of $G$), are located at the source node $S\in W_1$. After local computation, each robot $r$ is assigned a unique destination supernode $W_p$, that is, $r.dsg\_gr = W_p$, such that there exists a node $v \in W_p$ satisfying $\mathsf{col}(r) = \mathsf{col}(v)$. 
\end{lemma}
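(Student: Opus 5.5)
We argue by induction on the order in which the local computation processes robots, namely increasing ID. The input to the computation is the information assembled at $S$ after Phase~2: by Lemma~\ref{lem:3endexpr}(2) the sets $Q_1,\dots,Q_b$ partition $V$, so the robots at $S$ jointly hold the list of \required\ nodes of every supernode $W_i$, each tagged with its color. Since all robots hold the same data and run the same deterministic rule, they all compute the same assignment. We therefore reason about this single global assignment.

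We maintain a multiset $A$ of \emph{unassigned} \required\ nodes, initially all of $V$. The invariant is as follows. After the $j$ smallest-ID robots are processed, each of them has a $dsg\_gr$ value $W_p$ and a witness node $v\in Q_p$ with $\mathsf{col}(v)$ equal to the robot's color. Witnesses are pairwise distinct and have been removed from $A$.

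In the inductive step, robot $r$ picks the smallest index $i$ such that $W_i$ contains a node of color $\mathsf{col}(r)$ still in $A$. It sets $r.dsg\_gr=W_i$ and removes one such node. The witness lies in $Q_i\subseteq W_i$, so the color condition holds. Distinctness of witnesses is preserved because the chosen node is taken out of $A$. Uniqueness of the destination supernode is immediate, since the rule outputs exactly one index. Together with distinct witnesses, this also gives the stronger statement needed later: no node of a supernode is promised to two robots, and $r.d=|R_i|$ never exceeds the number of matching \required\ nodes in $W_i$.

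The main obstacle is existence at each step. We must show that a node of color $\mathsf{col}(r)$ always remains in $A$. Since witnesses are distinct, this holds exactly when, for every color $c$, the number of robots of color $c$ is at most the number of nodes of color $c$. This is precisely the solvability condition for \cd: if it fails, no valid final configuration exists at all. We will therefore state explicitly that we assume a solvable instance, which is implicit in the problem definition. Under this hypothesis, a counting argument gives existence. When robot $r$ of color $c$ is processed, fewer than $|\{v:\mathsf{col}(v)=c\}|$ robots of color $c$ have already been assigned, so at least one node of color $c$ remains in $A$. Because every node lies in some $Q_i$ by Lemma~\ref{lem:3endexpr}(2), that node belongs to some supernode, and the minimum index $i$ is well defined. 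This completes the induction.
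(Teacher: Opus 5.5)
Your proof is correct and follows the same greedy, ID-ordered assignment argument as the paper, where uniqueness comes from marking each chosen required node as assigned, so each robot receives exactly one $dsg\_gr$ and no two robots share a witness node. Your existence step is more careful than the paper's: the paper only notes that some required node of color $\mathsf{col}(r)$ exists in some supernode, which does not guarantee that one is still unassigned when $r$ is processed, whereas your per-color count under the explicitly stated solvability condition (necessary for the lemma, and implicitly assumed by the paper) closes exactly that gap.
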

\begin{proof}
We prove the lemma in two parts: existence and uniqueness of the robot-to-supernode assignment.

By construction, each supernode $W_i$ contains a set of required nodes, and for every robot color, there exists at least one required node in some supernode whose color matches the robot. Let $r$ be an arbitrary robot. Since all robots have complete knowledge of the supernodes and their required nodes, there exists at least one supernode $W_p$ that contains a required node $v$ with $\mathsf{col}(v) = \mathsf{col}(r)$. Therefore, an assignment of $r$ to $W_p$ satisfying the color constraint exists.

Assume for contradiction that a robot $r$ is assigned to two supernodes $W_p$ and $W_q$, or that two robots $r_1$ and $r_2$ are assigned to the same required node $v$. By the assignment rule, a required node is marked as assigned at the moment a robot is allocated to it. Let $v \in W_p$ be the node assigned to $r_1$. When $r_2$ attempts the assignment, it sees that $v$ is already assigned and therefore cannot select it. Similarly, a robot cannot be assigned to two supernodes simultaneously, because assignment is performed atomically based on the first unassigned node matching the robot’s color. This contradicts the assumption. Hence, each robot is assigned exactly one supernode, and each required node is assigned to at most one robot.

Combining existence and uniqueness, we conclude that after local computation, each robot $r$ has a unique destination supernode $W_p$ such that there exists a node $v \in W_p$ with $\mathsf{col}(r) = \mathsf{col}(v)$. 
\end{proof}

\begin{lemma}\label{lem:6disperse}
Let $\mathcal{W_s} = {W_{i_1}, \ldots, W_{i_l}}\subseteq \mathcal{W}$ be the set of destination supernodes for which some robot $r$ satisfies $r.dsg\_gr = W_{i_j}$. After Stage 1 of Phase 3, the following hold:
\begin{enumerate}[noitemsep, topsep=0pt,leftmargin=1.2em]
        
    \item Let $r\in\mathcal{R}$ be any robot with $r.dsg\_gr = W_{i_j}$ for some $W_{i_j}\in\mathcal{W_s}$. Let $\pi(r)=(W_{q_1},W_{q_2},\dots,W_{q_t})$, where $W_{q_\ell}\in \mathcal{W}$ be the sequence of supernodes on the downward path from $W_{q_1}$ to $W_{i_j}$. Then, (a) if  $r.stmp\_gr = W_{q_1} \in \mathcal{W}-\{W_{i_j}\}$ or $r.stmp\_gr=\textsc{NULL}$ i.e., $r$ is at source $S$ ($W_{q_1}=W_1$), then there exists a path $\pi(r)$ from the supernode $W_{q_1}$ to $W_{i_j}$, where each supernode $W_{q_\ell}$ along the path contains at least one robot, allowing $r$ to reach its destination supernode $W_{i_j}$ efficiently. (b) If $r.stmp\_gr = W_{i_j}$, then $r$ is at destination supernode.

    
    \item Each supernode $W_{i_j} \in \mathcal{W_s}$ contains exactly one robot.

    \item For every robot $r \in W_{i_j}$ with $r.stmp\_gr = W_{i_j}$, the counter variable $ctr$ and the data structure $D$ is maintained, which stores the number of robots required in the subtree rooted at $W_{i_j}$ and the colors of robots needed for each child supernode, along with the corresponding \groupno~requesting additional robots.
\end{enumerate}
\end{lemma}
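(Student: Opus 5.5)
The plan is to argue all three parts by induction over the recursive local computation. That computation processes $\mathcal{T}$ bottom-up, from the leaf supernodes toward $W_1$. Stage~1 then executes the resulting assignment by a DFS on $\mathcal{T}$, in the same way as the $osci\_head$ traversal of Phase~2.

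First, I would fix the invariant produced by the local computation. For each $W_\eta$, after the step that handles $W_\eta$, exactly one of the following holds:
\begin{itemize}
\item $W_\eta$ is a destination supernode, and exactly one robot $r_m$ has $r_m.stmp\_gr=W_\eta$ (the minimum-ID robot among those with $dsg\_gr=W_\eta$);
\item $W_\eta$ lies on a path from some destination supernode to $W_1$, and exactly one robot has $stmp\_gr=W_\eta$ or $tmp\_gr=W_\eta$;
\item no robot is attached to $W_\eta$, and then no destination supernode lies in its subtree.
\end{itemize}
This follows by induction on height in $\mathcal{T}$, using the two cases of the computation. $R_\eta\neq\emptyset$ holds exactly when some descendant demands robots. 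The filling step for intermediate supernodes is always possible: by Lemma~\ref{lem:3endexpr}(3) we have $b\le k/2$, so enough robots remain with $ttmp\_gr=W_1$.

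Second, I would show that Stage~1 realises this invariant physically. The DFS on $\mathcal{T}$ reaches every supernode, since $\mathcal{T}$ is a rooted tree by Lemma~\ref{lem:3endexpr}(4). Each visit lasts $12c$ rounds, which by Lemma~\ref{lem:1grexp}(3) suffices to meet the robot at the group source. On the backtrack from $W_\eta$, the unique robot with $stmp\_gr=W_\eta$ stays. Robots with $tmp\_gr=W_\eta$ also stay during Stage~1 and mark the path; their return to $S$ afterwards is handled separately. With this, part~(2) is immediate: each $W_{i_j}\in\mathcal{W_s}$ holds exactly the robot $r_m$ and no other, because all other robots have $stmp\_gr\neq W_{i_j}$ and leave on backtrack.

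Part~(1b) follows directly, since $r.stmp\_gr=W_{i_j}$ means $r$ stayed at $W_{i_j}$.

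For part~(1a), take the unique downward path $\pi(r)$ in $\mathcal{T}$ from $W_{q_1}$ to $W_{i_j}$. Every $W_{q_\ell}$ on it is an ancestor of a destination supernode, so by the invariant it hosts a robot with $stmp\_gr$ or $tmp\_gr$ equal to $W_{q_\ell}$. That robot stores the port $gr\_Child$ toward the next supernode, so $r$ can be guided hop by hop. Lemma~\ref{lem:5uniquedes} supplies that $W_{i_j}$ is well defined and contains a colour-matching required node.

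Part~(3) comes from the definition $r_m.ctr=|R_\eta|+r_m.d$. By induction on the subtree, $|R_\eta|$ counts exactly the robots whose $ttmp\_gr$ was propagated up from the children, that is, the demand of the descendants. Adding $d$ gives the demand of the whole subtree. The structure $D$ records the per-child colours and \groupno{} values, collected from the same robots before they move.

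I expect the main obstacle to be the invariant itself, in particular showing that every supernode on a demand path ends up with exactly one marker robot. The computation's rules for $ttmp\_gr$ are loosely specified, and the intermediate case sets $ctr=|R_\eta|-1$ (the marker robot is itself drawn from $R_\eta$). I would handle this by carefully tracking the count of robots passing upward at each step, and then checking that the filling step covers every ancestor left unassigned.
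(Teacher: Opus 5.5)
Your route differs from the paper's. For (1a), the paper argues by contradiction from an asserted rule: each robot is stationed at its destination or at the nearest ancestor, so an empty supernode on $\pi(r)$ ``would have been taken'' as $r.stmp\_gr$. Parts (2) and (3) are then argued by construction. You instead set up a bottom-up invariant and single out the filling step as what puts a robot on every ancestor of a destination; the filling step is fed by the pool at $W_1$ and the bound $b\le k/2$ of Lemma~\ref{lem:3endexpr}(3). That is the mechanism the Phase~3 computation actually provides, so your (1a) is better founded than the paper's. This holds provided ``after Stage~1'' is read as the moment before the $tmp\_gr$ robots return to $S$.

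\textbf{Gap in (3): $R_\eta$ is residual, not total demand.} You repeatedly treat $R_\eta$ as the \emph{total} demand below $W_\eta$. You claim that $R_\eta\neq\emptyset$ exactly when some descendant demands robots, and in (3) that $|R_\eta|$ ``is the demand of the descendants''. Both claims are false, because every marker stationed at a child drops out of the upward propagation. Such a marker is the $stmp\_gr$ robot of a destination child, or the robot a non-destination child draws from its own $R$.
\begin{itemize}
\item Take $W_\eta$ with $d=1$ whose only child $W_c$ is a leaf destination with $d_c=2$. Then $r_1$ stays at $W_c$ and only $r_2$ propagates, so $|R_\eta|=1$ and $ctr=|R_\eta|+d=2$, while the subtree needs $3$ robots.
\item With $d_c=1$ instead, $R_\eta=\emptyset$ despite the descendant demand.
\end{itemize}
The first misstatement is harmless, since the filling step catches every path supernode lacking a marker. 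The second breaks your proof of (3). What the induction really gives is a residual count. A destination child contributes $|R_c|+d_c-1$ to $R_\eta$. A non-destination child contributes $|R_c|-1$ if $R_c\neq\emptyset$, and $0$ otherwise. Hence $|R_\eta|$ is the number of robots destined strictly below $W_\eta$ that are not already stationed strictly below it. So $ctr=|R_\eta|+d$ is the subtree's demand minus the robots already stationed strictly below $W_\eta$, not the subtree's total demand. Part (3) holds only in that sense, and your argument must track this count explicitly. The paper's one-line ``by construction'' proof asserts the same incorrect total-demand identity.

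\textbf{Gap in (2): the root.} Assignment goes to the smallest index, so $W_1\in\mathcal{W}_s$ whenever some robot's colour occurs in $Q_1$. The robots with $stmp\_gr=tmp\_gr=\textsc{Null}$ never leave $S$ during Stage~1, and there are at least $k-b\ge k/2$ of them. So ``all other robots leave on backtrack'' is false there, and $W_1$ hosts more than one robot. You must either exclude $W_1$ or read (2) as ``exactly one robot is stationed with $stmp\_gr=W_{i_j}$'', which is what your invariant actually delivers. The paper's proof has the same blind spot.

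A minor point: the $12c$ rounds per supernode in Stage~1 are what Lemma~\ref{lem:1grexp}(3) guarantees for traversing a group tree, not for meeting a guard. After Phase~2 no guard remains in any group.
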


\begin{proof}~

\begin{enumerate}

    \item Let $r \in \mathcal{R}$ be a robot with $r.dsg\_gr = W_{i_j} \in \mathcal{W_s}$. 

    \textbf{Case (a):} By Lemma~\ref{lem:5_suprtree}, $\mathcal{T}$ is a tree. Hence, there exists a unique path $\pi(r)$ in $\mathcal{T}$ from the supernode $W_{q_1}$ to $W_{i_j}$. We show that every supernode along the path $\pi(r)$ contains at least one robot.

    We prove this by contradiction. Suppose that $r.stmp\_gr = W_{q_1} \neq W_{i_j}$, or $r.stmp\_gr = \textsc{NULL}$ (i.e., $r$ is located at the source supernode $W_1$), and assume that there exists a supernode $W_{q_p}$ on the path from $W_{q_1}$ to $W_{i_j}$ that does not contain any robot.

    By the construction of the algorithm, a robot with a destination group $r.dsg\_gr = W_{i_j}$ is placed either at $W_{i_j}$ itself or at the nearest ancestor supernode of $W_{i_j}$ in the tree $\mathcal{T}$. Since $r.stmp\_gr = W_{q_1} \neq W_{i_j}$, robot $r$ cannot be placed at $W_{i_j}$. 

    However, if $W_{q_p}$ is empty, then $r$ should have updated its temporary group to $r.stmp\_gr = W_{q_p}$, contradicting the assumption that $W_{q_p}$ is empty. Therefore, every supernode along the path $\pi(r)$ contains at least one robot.

    \textbf{Case (b):} Suppose $r.stmp\_gr = W_{i_j}$. By definition, $r$ is already located at its destination supernode, and no further traversal is required.  

    In both cases, either $r$ is already at its destination or a path exists from its current location to $W_{i_j}$ through supernodes containing robots. Hence, every robot assigned a destination supernode has a guaranteed route to reach it, which proves the statement.


    \item Consider a supernode $W_{i_j} \in \mathcal{W_s}$. By the assignment procedure, exactly one robot $r$ is designated to have $r.stemp\_gr/ttemp\_gr = W_{i_j}$. The algorithm ensures that no other robot is assigned the same destination supernode, as each assignment is unique and conflicts are resolved by the minimum-ID rule.  
    Moreover, any robot that might traverse $W_{i_j}$ as an intermediate node (i.e., for traversal purposes) does not become a resident of $W_{i_j}$, because only robots whose destination is $W_{i_j}$ remain there. Therefore, at the end of Stage 1, exactly one robot occupies $W_{i_j}$.  
    Thus, each supernode $W_{i_j} \in \mathcal{W_s}$ contains exactly one robot.

    \item Let $r \in W_{i_j}$ be a robot with $r.stmp\_gr = W_{i_j}$. By construction, $r$ is the robot assigned to manage the destination supernode $W_{i_j}$. During the local computation of Phase 3, $r$ collects information from all child supernodes $W_{\eta_1}, \dots, W_{\eta_\delta}$ of $W_{i_j}$ regarding the number and colors of robots required to satisfy the demands of their respective subtrees.  
    
    The algorithm explicitly updates the counter variable $r.ctr$ to reflect the total number of robots required in the subtree rooted at $W_{i_j}$, including the demand of $W_{i_j}$ itself. Simultaneously, $r$ constructs and maintains a data structure $D$ that records, for each child supernode, the number of robots needed, the specific colors required, and the group numbers requesting additional robots.  

 Therefore, the counter variable $ctr$ and data structure $D$ are correctly maintained for every robot $r$ with $r.stmp\_gr = W_{i_j}$. \qedhere
\end{enumerate}
\end{proof}

\begin{lemma}\label{lem:7S2P3}
 For each $W_{i_j} \in \mathcal{W_s}$, let $d$ denote the number of robots whose destination supernode is $W_{i_j}$. At the end of Stage 2 of Phase 3, exactly $d$ robots are placed in $W_{i_j}$, and each node $v \in W_{i_j}$ hosts at most one robot $r$ with $\mathsf{col}(v) = \mathsf{col}(r)$.

\end{lemma}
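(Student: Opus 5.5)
We argue by induction on the level of supernodes in $\mathcal{T}$, processing Stage~2 top-down from $W_1$. The invariant is a flow-conservation claim. At the start of the Stage~2 window in which level-$\ell$ robots compute, which is $round = 12c\cdot \ell$, every supernode $W_\eta$ at level $\ell$ should hold exactly $ctr(W_\eta)$ robots beyond its resident $stmp\_gr$ robot. Here $ctr(W_\eta)$ is the total demand of the subtree rooted at $W_\eta$, minus what is already present there after Stage~1. Moreover, the multiset of colors of these robots should equal the multiset of colors demanded in that subtree, as recorded in the structure $D$ of the resident robot.

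\textbf{Base case.} At level $0$ the invariant follows from the local computation together with Lemma~\ref{lem:5uniquedes}. Every robot receives a unique $dsg\_gr$ containing an unassigned \required~node of its color, so the robots gathered at $W_1$ are exactly those demanded by the subtree of $W_1$, color by color.

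\textbf{Inductive step.} Consider a supernode $W_\eta$ at level $\ell$. By Lemma~\ref{lem:6disperse}(3), the resident robot's $D$ lists, for each child $W_{\eta_i}$, the number and colors of robots needed in the subtree of $W_{\eta_i}$, indexed by \groupno. The robots present at $W_\eta$ then partition themselves deterministically, by increasing ID within each color, into batches: one batch per child plus the robots whose $dsg\_gr = W_\eta$.

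Each batch moves along the unique inter-group edge to its child, which exists by Lemma~\ref{lem:3endexpr}(4). It is guided by the robot stationed at the child; Lemma~\ref{lem:6disperse}(1) guarantees every supernode on the path holds a robot. By Lemma~\ref{lem:1grexp}(3), $12c$ rounds suffice to cross a group of at most $6c$ nodes. Hence each batch arrives before the level-$(\ell+1)$ window opens, and the child receives exactly its subtree demand. This re-establishes the invariant one level lower.

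\textbf{Placement within a supernode.} Once all levels up to the diameter $l_d$ have been processed, each $W_{i_j}\in\mathcal{W_s}$ holds precisely the $d$ robots with $dsg\_gr=W_{i_j}$ (the resident one counted, per Lemma~\ref{lem:6disperse}(2)). Within $W_{i_j}$, each robot goes to the specific \required~node assigned to it in the local computation. These assignments are injective and color-matching by the uniqueness part of Lemma~\ref{lem:5uniquedes}, so every node hosts at most one robot, of matching color. The walk stays inside the group tree and takes at most $12c$ rounds by Lemma~\ref{lem:1grexp}(3).

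\textbf{Main obstacle.} The hardest step is showing that the $ctr$ values and $D$ entries are consistent. The robots already parked at $W_\eta$ after Stage~1 (those with $stmp\_gr$, $ttmp\_gr$, or $tmp\_gr$ set) must be subtracted correctly, so that no child is over- or under-supplied. In particular, the "$-1$" correction in case~2 of the local computation must be accounted for: there the resident robot itself belongs to a descendant's demand and must eventually move on. I would handle this by a bottom-up induction on the recursive definition of $ctr$. The claim to prove is that $ctr(W_\eta)$ plus the robots stationed in the subtree equals $\sum_{W_p \text{ in subtree}} |R_p|$. Combining this identity with the top-down flow invariant gives exact counts at every destination supernode.
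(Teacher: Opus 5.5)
Your top-down induction tracks only counts and color multisets, but the conclusions you draw from it concern which robots arrive where, and your batching rule does not connect the two. At $W_\eta$ you split the departing robots among the children ``by increasing ID within each color'' so as to meet the counts and colors in $D$. Nothing guarantees that a robot is thereby sent into the child subtree that contains its own $dsg\_gr$.

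For instance, let $W_2,W_3$ be children of $W_1$, each with two red \required~nodes, and let the red robots be $a<b<c<e$. The greedy assignment gives $a,b\mapsto W_2$ and $c,e\mapsto W_3$, with $a$ and $c$ parked after Stage~1, so $D$ at $W_1$ asks for one red robot per child. If the children are served in port order and the port towards $W_3$ comes first, then $b$ goes to $W_3$ and $e$ to $W_2$.

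From then on, your own stay rule (``the robots whose $dsg\_gr=W_\eta$'') no longer matches the demand in $D$. Both final claims then fail: that each $W_{i_j}$ holds precisely the $d$ robots with $dsg\_gr=W_{i_j}$, and that each robot walks to the \required~node assigned to it in the local computation. In the example, $b$ has no assigned node in $W_3$. The color-multiset invariant gives you the right number of robots of the right colors at each $W_{i_j}$, not the right robots.

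The repair is to route by identity, which is what the paper's much terser proof asserts when it says robots are routed ``toward their respective destination supernodes.'' Each robot at $W_\eta$ is forwarded to the unique child whose subtree contains its own $dsg\_gr$. The resident can compute this, because $D$ records, per child, the group numbers requesting robots. The invariant then becomes: at the level-$\ell$ window, the robots at $W_\eta$ are exactly those whose $dsg\_gr$ lies in the subtree of $W_\eta$, minus those parked after Stage~1 strictly below $W_\eta$. This is immediate by induction, and each $W_{i_j}$ ends with exactly its $d$ robots. The per-node claim then follows from the node-level injectivity established in the proof of Lemma~\ref{lem:5uniquedes}.

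Alternatively, keep anonymous color-based routing everywhere, including for the staying batch. Then finish with a local color matching of the arrivals to the $d$ reserved nodes of $W_{i_j}$, whose color multiset equals that of the arrivals.

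With identity routing, the $ctr$ bookkeeping you call the main obstacle is not needed at all. Note also that the identity you propose for it is off by one in case~1. There $ctr=|R_\eta|+r_m.d$, and $r_m.d$ already counts $r_m$, so $ctr$ plus the parked robots exceeds the subtree demand by one.
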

\begin{proof}

Let $W_{i_j} \in \mathcal{W_s}$ be a destination supernode, and let $d$ denote the number of robots with $r.dsg\_gr = W_{i_j}$. By construction, Stage 1 ensures that each robot assigned to $W_{i_j}$ holds information about its destination supernode, as well as the path from the source to $W_{i_j}$ via the $stmp\_gr$ and $tmp\_gr$ variables.  

During Stage 2, robots move along these paths according to the data structures $D$ maintained at each supernode. The DFS-based placement procedure guarantees that robots are routed toward their respective destination supernodes without conflicts. Since each robot is uniquely assigned to a supernode and respects the color of the nodes it occupies, no two robots of the same color are placed on the same node.  

Consequently, at the end of Stage 2:  

\begin{enumerate}
    \item Exactly $d$ robots are placed in $W_{i_j}$, as each robot assigned to $W_{i_j}$ reaches its destination without duplication or loss.
    \item Each node $v \in W_{i_j}$ hosts at most one robot $r$ such that $\mathsf{col}(v) = \mathsf{col}(r)$, ensuring that color-matching constraints are preserved.
\end{enumerate}  

Hence, the placement of robots in $W_{i_j}$ is correct, and the lemma follows.
\end{proof}

\begin{theorem}\label{thm:rooted-known}
\cd~can be solved deterministically in rooted initial configuration with $k>1$ robots having prior knowledge of $n$ in $O(n/k \cdot m)$ rounds using $O(n/k \cdot \log(k+\Delta))$ bits per robot.
\end{theorem}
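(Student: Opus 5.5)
The proof will assemble correctness from Lemmas~\ref{lem:1grexp}--\ref{lem:7S2P3}, then bound time phase by phase, and finally bound memory. Throughout, $c=\Theta(n/k)$ as fixed in Q3, so the per-visit waiting period $T=24c+2$ is $O(n/k)$.

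\textbf{Correctness.} By Lemma~\ref{lem:2finishedexpr}, Phase~1 terminates, and the source guard detects this once $count=\delta_S$. Lemma~\ref{lem:3endexpr} then gives three facts: the required sets $Q_1,\dots,Q_b$ partition $V$; $b\le k/2$; and the supernodes with inter-group edges form a rooted tree $\mathcal T$. Phase~2 is a DFS on $\mathcal T$ that carries each guard upward when its subtree is finished. All robots therefore end at $S$, together holding the connectivity of every node. Lemma~\ref{lem:5uniquedes} assigns each robot a destination supernode containing a free required node of its colour, and these assignments are distinct on required nodes. Lemma~\ref{lem:6disperse} guarantees that after Stage~1 every supernode on a root-to-destination path is occupied and carries correct $ctr$ and $D$. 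Lemma~\ref{lem:7S2P3} then gives the final placement: distinct nodes, matching colours. Because $k>1$, the impossibility of Theorem~\ref{the:single case} does not apply, and knowledge of $n$ (with $k$ read off at $S$) fixes $c$ and the schedule.

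\textbf{Time.} In Phase~1 the active robots perform one DFS over $G$. Each edge is traversed $O(1)$ times: twice for tree edges and at most four times for non-tree edges, as in Theorem~\ref{the:rootedgeneral}. Each traversal is followed by one period of $T=O(c)$ rounds, during which Lemma~\ref{lem:1grexp}(3) guarantees the relevant guard is met. This gives $O(m\cdot c)=O(n/k\cdot m)$. Phase~2 is a DFS on $\mathcal T$ with $b\le k/2$ supernodes, costing $O(c)$ per supernode visit, so $O(bc)=O(n)$. Stage~1 of Phase~3 is a similar DFS on $\mathcal T$, bounded by the explicit $12c\cdot k=O(n)$. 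Stage~2 lasts $12c\cdot l_d\le 12c\cdot b=O(n)$ rounds, plus $O(c)$ rounds for the final in-group walk. Since $m\ge n-1$, the total is $O(n/k\cdot m)$.

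\textbf{Memory.} A guard stores its group record $F$. By Lemma~\ref{lem:1grexp}(2) this has at most $6c$ entries. Each entry holds $O(1)$ ports and inter-group ports, which costs $O(\log\Delta)$ bits, plus counters and node numbers of $O(\log k)$ bits. That is $O(c\log(k+\Delta))=O(n/k\cdot\log(k+\Delta))$ bits. The main obstacle is the node's child list $child[v]$ and $gr\_Child[v]$, which could a priori have $\Delta$ entries. I would argue that only tree children inside $F_r$ are stored, so their total over the group is at most $|U_r|-1$. Likewise, the inter-group child ports over all groups number $b-1$, and each group stores only its own. A group can hold more than $O(c)$ such ports, so this needs care. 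My first attempt is to charge each child-group port to that child's required nodes; if that fails, store each such port at the child guard instead (as its $gr\_Parent$) and let the parent discover children during the Phase~2 DFS. The Phase~3 variables ($dsg\_gr$, $ctr$, $level$, and $D$ restricted to the children of one supernode) and the Phase~2 temporary $F'$ (one group at a time) fit in the same bound. The round counter needs $O(\log(n/k\cdot m))=O(\log(k+\Delta)+\log n)$ bits, which is absorbed since $n/k\ge1$ and $\log n=O(\log k+\log(n/k))$.
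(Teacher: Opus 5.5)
Your route is the paper's. Its proof only cites Lemmas~\ref{lem:1grexp}--\ref{lem:7S2P3} and asserts the bounds. Your correctness assembly and time count ($O(m)$ DFS steps times $T=O(c)$ in Phase~1, $O(bc)=O(n)$ afterwards) agree with it and are more explicit.

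\textbf{The gap: child-group ports.} The gap is in the memory bound, exactly where you flag a problem and then leave it open. A supernode can have $\Theta(k)$ children in $\mathcal T$. Take a hub $h$ with about $k/2$ pendant paths of exactly $2c$ nodes each. All but $O(1)$ of these paths close as separate groups whose only neighbour outside themselves is $h$. In $\mathcal T$ they are therefore all adjacent to the group $W$ holding $h$ as a required node, and all but at most one are children of $W$. So $W$'s guard stores $gr\_Child[h]$ with $\Theta(k)$ ports. Since $\Delta=\Theta(k)$ here, that is $\Theta(k\log k)$ bits, which exceeds $O(n/k\cdot\log(k+\Delta))=O(c\log k)$ whenever $k=\omega(\sqrt n)$.

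Charging each port to the child's required nodes cannot help: the bound is per robot, and the bits sit in the parent's guard, not the child's. Your fallback (keep only $gr\_Parent$ at the child and rediscover children by probing) is a change to the algorithm, so it must be re-verified. Phase~2 and both stages of Phase~3 would then have to probe every port leaving every group, at $O(c)$ rounds per probe and $O(m\cdot c)$ in total. That still fits $O(n/k\cdot m)$, but it replaces your $O(n)$ estimates and the fixed schedules $12c\cdot k$ and $12c\cdot l_d$ you relied on.

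\textbf{The same problem in $D$.} Right after spotting the many-children problem, you assert that $D$, ``restricted to the children of one supernode'', fits in the bound. It does not. $D$ is indexed by child supernodes, so in the hub example it has $\Theta(k)$ entries at $W$. Moreover, as the paper defines it, $D$ records per child the number and colours of robots needed and the requesting group numbers. That is information about whole subtrees, so even a root with a single child can need $\Theta(k\log k)$ bits.

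\textbf{What closing it requires.} You need a different representation. One option: renumber groups in DFS preorder during the Phase~2 traversal of $\mathcal T$, so every subtree is an interval of group numbers. Each guard then keeps $O(1)$ values: its own number, the last number in its subtree, and the port it is currently probing. Each robot routes itself by comparing its own $dsg\_gr$ with the interval of the child being probed.

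The paper's proof does not supply this either; it only asserts that each robot stores information about its own group. So this step has to be argued from scratch.
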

\begin{proof}
The theorem follows from Lemmas~\ref{lem:1grexp}-\ref{lem:7S2P3}. Consequently, $\cd$ is solved deterministically in a rooted initial configuration in $O(n/k\cdot m)$ rounds. Each robot $r$ stores only the information associated with its own group $W_i$, which requires $O(n/k\cdot \log(k+\Delta))$ bits of memory. Hence, $\cd$ can be solved deterministically within the stated time and memory bounds.

\end{proof}

\section{Arbitrary Graph: Rooted, \texorpdfstring{$k\leq n$}{k<n}, Unknown  \texorpdfstring{$n$}{n} \label{sec:rooted-unknown}}


In this section, we study the \cd~problem under a rooted configuration, where all $k$ robots are initially colocated at a single source node $S$. The total number of nodes $n$ in the graph is unknown. As in the rooted setting, all robots are initially colocated at a single node, enabling each of them to determine the total number of robots $k$. Thus, whether $k$ is initially known or unknown does not affect the model in the rooted configuration. 

The core idea is to enable robots to explore an arbitrary graph without prior knowledge of $n$ by \emph{iterative guessing} with \emph{DFS-based exploration} (as in Section~\ref{sec:rooted-known}). Since the known-$n$ strategy of Section~\ref{sec:rooted-known} relies on bounding DFS by $n$, robots begin with the assumption $n=k$ and initiate a DFS, where each node is assigned to at most one robot and groups are coordinated by the  \osci~robot at each group. If the traversal completes and all robots are placed without leaving nodes unexplored, the guess is correct and dispersion succeeds. Otherwise, the inability to settle all robots (or the discovery of unexplored nodes) reveals that the guess was too small. In this case, all robots backtrack and regroup at $S$, the assumed value of $n$ is doubled, and a new DFS attempt is initiated. By progressively doubling the guess, the robots eventually reach an interval $[2^i k, 2^{i+1}k]$ that contains the true $n$, ensuring that dispersion is completed in that iteration. This strategy guarantees correctness regardless of the initial underestimate of $n$, while incurring only a logarithmic overhead of $O(\log(n/k))$ in number of DFS trials compared to the known-$n$ setting. Once the graph exploration phase is completed, the gathering and dispersion steps from Section~\ref{sec:rooted-known} are applied, thereby completing \cd.

\begin{theorem}\label{thm:rooted-unknown}
\cd~can be solved deterministically in rooted initial configuration with $k>1$ robots having no prior knowledge of $n$ 
in $O(\log (n/k) \cdot n/k \cdot m)$ rounds using $O(n/k \cdot \log(k+\Delta))$ bits per robot. 
\end{theorem}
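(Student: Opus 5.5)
The plan is to reduce Theorem~\ref{thm:rooted-unknown} to Theorem~\ref{thm:rooted-known} by a doubling argument over guesses of $n$. In iteration $i=0,1,2,\dots$ the robots, all at $S$, use the guess $\hat n_i = 2^i k$ and set $c_i=\lceil \hat n_i/k\rceil = 2^i$. With this $c_i$ they run Phase~1 of Section~\ref{sec:rooted-known}, so each group holds between $2c_i$ and $4c_i$ \required~nodes, and the waiting window is $T_i = 24c_i+2$. The first thing I would establish is a \emph{failure-detection} property: if $\hat n_i < n$, then within a bounded number of rounds the robots either complete the DFS or learn, at $S$, that the guess was too small. I would make the DFS driven only by the counter at $S$ (Lemma~\ref{lem:2finishedexpr}) and add a guard: if the active robots run out, so that some group must be started but no unsettled robot is available, the last \lead~declares failure. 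Failure is signalled back to $S$ by a Phase~2-style DFS on the partially built tree $\mathcal T$, which collects every robot at $S$. In both outcomes, the time of one iteration is bounded by the Phase~1 cost: each of the $O(m)$ DFS edge traversals costs $T_i=O(c_i)$ rounds, and the Phase~2 collection costs the same order. So iteration $i$ takes $O(c_i m)$ rounds.

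Next comes correctness of the successful iteration. Let $i^*$ be the first index with $2^{i^*}k \ge n$, so that $c_{i^*} = \Theta(n/k)$ and $k c_{i^*}\cdot 2 > n$. The counting argument in part~1 of Lemma~\ref{lem:3endexpr} then shows that at least two robots end at $S$ and that no robot shortage occurs, so the DFS terminates normally. Lemmas~\ref{lem:1grexp} and~\ref{lem:3endexpr} then give the partition into supernodes and the tree $\mathcal T$. After that, Phases~2 and~3 run unchanged, and Lemmas~\ref{lem:5uniquedes}, \ref{lem:6disperse} and~\ref{lem:7S2P3} give a valid \cd. I also need to argue that an earlier iteration cannot terminate ``successfully'' while missing nodes. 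This holds because successful termination needs $count=\delta_S$, which by Lemma~\ref{lem:2finishedexpr} means the DFS explored all of $G$. Such an iteration is then simply a correct run with smaller groups, and memory permitting, I would accept it.

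For complexity, I would sum over the iterations. Each iteration costs $O(2^i m)$, and summing for $i\le i^*$ gives $O(2^{i^*} m)=O(n/k\cdot m)$. This is in fact better than the claimed $O(\log(n/k)\cdot n/k\cdot m)$, which follows a fortiori from at most $i^*+1 = O(\log(n/k))$ iterations, each costing at most $O(n/k\cdot m)$. Memory per robot is dominated by the final iteration: a group record of $O(c_{i^*})$ nodes with $O(\log(k+\Delta))$ bits each gives $O(n/k\cdot\log(k+\Delta))$. Group data from a failed iteration is discarded at the restart, and the iteration counter adds only $O(\log\log(n/k))$ bits.

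The main obstacle is the failure-detection step. When $\hat n_i<n$, the Phase~1 algorithm as written may leave the last \lead~stuck in an endless expansion with no \activee~robot to return to $S$, and the robots elsewhere do not know when to stop. My first attempt would use the fact that a group that cannot reach $2c_i$ required nodes with zero remaining active robots is detectable locally by its \lead. That \lead~would then walk back along the recorded inter-group parent ports $gr\_Parent$ to $S$, notifying each guard it passes. Each notified guard would wait one window and then follow, in the manner of the Phase~2 gathering. I would need to check that this walk is bounded by $O(c_i\cdot |\mathcal W|)\subseteq O(c_i m)$ rounds and that the synchronization windows $T_i$ stay consistent, which is why all robots switch $c_i$ only at $S$.
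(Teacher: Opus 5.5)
Your proposal follows the paper's own argument. Start from the guess $n=k$ and run the known-$n$ group-forming DFS. Read the last \activee~robot turning \lead~with the traversal unfinished as evidence that the guess is too small, regather everyone at $S$ and double; success at the first guess $\ge n$ then follows from the known-$n$ lemmas. Your walk back along $gr\_Parent$ followed by a Phase~2-style DFS over the partial $\mathcal{T}$ is just a more explicit version of the paper's ``all robots return to $S$ by backtracking along the DFS traversal.''

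The one real difference is the time accounting. The paper charges every iteration the final cost $O(n/k\cdot m)$ and multiplies by the $O(\log(n/k))$ iterations. Your geometric sum of the $O(2^i m)$ per-iteration costs gives $O(n/k\cdot m)$, which is sharper and implies the stated bound.
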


\begin{proof}
The algorithm follows an iterative guessing strategy combined with DFS-based exploration.

Initially, the robots assume that $n = k$ and perform a bounded DFS starting from the source node $S$. During this traversal, one robot is assigned to each group, ensuring that every group contains at least $2c$ and at most $4c$ nodes. If the traversal completes successfully, that is, if $count = \delta_S-1$ and some robots return to the source node $S$, then dispersion is achieved by Lemma \ref{lem:2finishedexpr}.

Otherwise, suppose that during the exploration only one \activee~robot remains and it detects the existence of unexplored nodes. This implies that the traversal is incomplete. In this situation, the remaining \activee~robot becomes a \lead, and no \activee~robot returns to $S$, which contradicts Lemma \ref{lem:2finishedexpr}. Hence, the current estimate underestimates the actual value of $n$. In this case, all robots return to the source node $S$ by backtracking along the DFS traversal, double the guessed value of $n$, and repeat the DFS process.

Since the guessed value of $n$ doubles in each iteration, after $O(\log (n/k))$ iterations, the estimate becomes sufficiently accurate, and the DFS completes successfully. Each DFS attempt requires $O(n/k \cdot m)$ rounds, as established in the known-$n$ case (Theorem \ref{thm:rooted-known}). Therefore, the total round complexity of the algorithm is $O(\log(n/k) \cdot n/k \cdot m)$.

The memory requirement per robot remains the same as in the known-$n$ setting, since robots store only local DFS state and coordination information. Consequently, each robot uses $O(n/k \cdot \log (k+\Delta))$ bits of memory.

Hence, the dispersion problem in a rooted configuration with unknown $n$ can be solved deterministically within the stated time and memory bounds.
\end{proof}

\section{Arbitrary Graph: General, \texorpdfstring{$1<k\leq n$}{1<k<n} \label{sec:general-known}}

We study \cd~for general initial configurations, that is, there exists a source node $s$ in the graph $G$ that initially hosts at least two robots. In any general initial configuration, a node hosting exactly one robot is called a $single$ source, while a node containing at least two robots is a $multi$ source. Our approach builds on the multi-source DFS framework of Kshemkalyani and Sharma~\cite{kshemkalyani2025near}, which is applied to classical dispersion and adapted to preserve the group structure required for \cd.

\textbf{High level idea:} A robot located at a $single$ source remains at its current node, since a single robot without node memory cannot explore the graph, as discussed in Section \ref{sec:lower-bound-paragraph}. In contrast, robots at a $multi$-source initiate a DFS traversal, forming bounded-size groups coordinated by a designated \osci~robot (as in Section~\ref{sec:lower-bound-paragraph}). When two DFS traversals meet at a node or encounter a $single$-source robot, a subsumption step is executed. The traversal with more settled robots survives, while the smaller traversal is collapsed and its robots are reassigned to the surviving DFS; ties are broken by the minimum {\it treelabel}. Multiple simultaneous collisions are resolved via pairwise subsumptions. Over time, the number of concurrent DFS traversals decreases, ensuring that eventually a single traversal remains. 
 
\subsection{Special Case: When \texorpdfstring{$k= n$}{k=n} }  

When $k=n$ and both $k$ and $n$ are known, each source starts a DFS forming groups of size $[2c,4c]$, and subsumption ensures that the dominant traversal continues exploration.

\textbf{Algorithm:} Each robot initializes a {\it treelabel} as the minimum-ID robot at its source and a \groupno~counter recording the number of settled robots. The $single$-source robot changes its state to $wait$ and remains at its current node until it is discovered and merged into a passing DFS traversal. Each $multi$-source initiates a bounded DFS traversal, where DFS heads traverse the graph, forming groups of size $[2c,4c]$. When a DFS head reaches a node, robots wait a fixed interval to detect collisions with other DFS traversals. Subsumption rules are applied deterministically, where the traversal is guided by the robots responsible for the groups, and the traversal with more settled robots survives, and the others are merged. Each robot in a subsumed traversal is reassigned to the surviving DFS, which may revisit each node of the collapsed traversal at most once. When a surviving DFS encounters a $single$-source robot, that robot is immediately merged into the traversal and becomes part of its settled group.

\begin{theorem}\label{thm:general-k=n}
\cd~can be solved deterministically in general initial configuration for $k=n$ with $k>1$ robots having prior knowledge of $n$ and $k$ in $O(n/k \cdot m)$ rounds using $O(n/k \cdot \log k)$ bits per robot.
\end{theorem}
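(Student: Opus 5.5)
We will prove Theorem~\ref{thm:general-k=n} by reducing the general configuration to a single surviving DFS of the kind analysed in Section~\ref{sec:rooted-known}. Once that DFS exists, we reuse Phases 2 and 3 of that section essentially unchanged. Since $k=n$, every color class has the same number of robots as nodes. The $single$-source robots simply wait, so the only moving entities are DFS traversals started at $multi$ sources. Each such traversal runs Phase~1 of Section~\ref{sec:rooted-known} with group sizes in $[2c,4c]$ and intermediate-round length $T=24c+2=O(n/k)$.

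\textbf{Step 1: local invariants.} We first show that Lemma~\ref{lem:1grexp} holds verbatim for each group of each traversal. The group operations (expand, shrink, helper insertion) are local to one traversal and unaffected by other traversals except at subsumption. Hence every group is a tree of at most $6c$ nodes, and its guard revisits all of it within $12c$ rounds. This lets a DFS head, after its $T$-round wait, detect whether the current node belongs to its own traversal, to another traversal, or hosts a waiting single robot.

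\textbf{Step 2: subsumption.} Next we argue correctness of subsumption.
\begin{itemize}
\item Comparing traversals by number of settled robots, with ties broken by \emph{treelabel}, is a strict total order. Hence in any pairwise collision exactly one traversal survives, and simultaneous collisions reduce to a sequence of pairwise ones.
\item The collapsed traversal's guards hand over their group records. Every node a guard records is then re-adopted by the survivor as a \required\ node, either directly or by a revisit of the collapsed tree.
\item Applying the argument of Lemma~\ref{lem:3endexpr} to the merged structure, the required sets still partition the visited nodes, and the inter-group edges still form a tree.
\end{itemize}
The number of traversals strictly decreases at each subsumption. Therefore, eventually a single DFS remains; it absorbs all waiting single robots and completes. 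Its termination is detected at its root group by the counter of Lemma~\ref{lem:2finishedexpr}. We then apply Lemmas~\ref{lem:5uniquedes}--\ref{lem:7S2P3}. Because $k=n$, the assignment is a perfect color-preserving bijection, so every node receives a matching robot.

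\textbf{Step 3: time and memory.} A traversal spends $O(T)=O(n/k)$ rounds per edge step, and a DFS crosses each edge $O(1)$ times, giving $O(n/k\cdot m)$ rounds for one DFS. The main obstacle is bounding the extra work caused by subsumptions. We charge each revisit of a collapsed traversal's nodes to the edges that collapsed traversal had already explored. The survivor always has at least as many settled robots, so the size-based rule lets us argue, as in~\cite{kshemkalyani2025near}, that each node is revisited $O(1)$ times overall rather than once per merge. If this charging fails, we would fall back to the weaker statement that each collapsed edge is re-traversed at most once, which still totals $O(m)$ edge steps. Phases 2 and 3 add $O(n/k\cdot m)$ rounds, as in Theorem~\ref{thm:rooted-known}. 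For memory, each guard stores $O(c)=O(n/k)$ node records. With $k=n$ we have $\Delta<n=k$, so each record takes $O(\log k)$ bits, and the treelabel and counters add another $O(\log k)$. This gives $O(n/k\cdot\log k)$ bits per robot.
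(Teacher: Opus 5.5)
Your plan is the paper's plan. Both run a DFS with $[2c,4c]$ groups from every multi-source, resolve collisions by subsumption on the number of settled robots with treelabel tie-breaking, let singletons wait, and then use the rooted Phases~2--3. In places you are more explicit than the paper: you note the perfect colour bijection for $k=n$, and that $\Delta<n=k$ gives $\log(k+\Delta)=O(\log k)$.

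\textbf{Time bound.} The gap is the step you flag yourself in Step~3, and your fallback does not close it.
\begin{itemize}
\item The comparison rule alone only gives the doubling bound. Each time a node's traversal is collapsed, the traversal it joins has at least twice as many settled robots. So the rule guarantees only that a node is absorbed $O(\log k)$ times, not $O(1)$ times. With re-traversal of absorbed regions this costs an extra logarithmic factor. That is the overhead size-based subsumption incurs for \dpn; the cited work avoids it with a more refined collapse mechanism, not with the comparison rule. Since $n/k=1$ here, the theorem claims $O(m)$ rounds, so this factor is exactly what must be ruled out.
\item ``Each collapsed edge is re-traversed at most once'' is false as stated. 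If $A$ is collapsed into $B$ and $B$ later into $C$, the edges $A$ explored are re-traversed by $B$ and again by $C$.
\item To get $O(n/k\cdot m)$ you must import that specific collapse mechanism. You must also check it survives the group setting, where a collapsed traversal consists of guards holding $2c$--$4c$ records and helper nodes, and its inter-group edges must be re-rooted into the survivor's tree.
\end{itemize}
The paper's proof does not supply this either. It asserts that each node is revisited at most once. It also claims at most $O(n/k)$ traversals survive sequentially, which is unjustified since there can be $\Theta(n)$ multi-sources.

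\textbf{Memory bound.} Your count of ``$O(c)$ records of $O(\log k)$ bits'' ignores that records hold port lists.
\begin{itemize}
\item $child[v]$ is harmless: at most $6c-1$ entries per group.
\item $gr\_Child[v]$ is not bounded by the group size. On a star, the group whose required set contains the centre is adjacent in ${\cal T}$ to all $\Theta(n)$ other groups. Its guard then stores $\Theta(n)$ child ports, i.e.\ $\Theta(n\log n)$ bits.
\item The Phase~3 structure $D$ keeps per-child demands and has the same problem.
\end{itemize}
This is inherited from the paper's design, but it means $O(\log k)$ is not established as you argue it. Child groups would have to be rediscovered by scanning ports rather than stored.

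\textbf{Termination of merging.} ``The number of traversals strictly decreases, hence eventually one remains'' needs an argument that collisions keep occurring. A traversal stalls once its last robot becomes a guard. The paper itself remarks, for $k<n$, that independent DFSs need not meet incidentally. For $k=n$, the common $c$ and $T$ make encounters detectable, and a counting argument over disjoint explored regions should force a collision. That argument still has to be given.
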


\begin{proof}
Consider a connected graph $G=(V,E)$ with $|V|=n$ and $k=n>1$ robots in a general initial configuration, where each robot knows $n$ and $k$. Each robot initializes a \textit{treelabel} and a \groupno~counter equal to the number of settled robots at its source.

Each multi-source robot initiates a bounded DFS traversal, forming groups of size $[2c,4c]$. When two DFS traversals collide, a deterministic subsumption rule is applied: the traversal with more settled robots survives, and the other is merged. Since group sizes strictly increase, the number of active traversals decreases monotonically, ensuring termination. Single-source robots remain in the $wait$ state until discovered and are then merged into a passing DFS.

Each subsumed traversal may cause nodes to be revisited, but each node is revisited at most once. Thus, each DFS explores $O(m)$ edges, and since at most $O(n/k)$ DFS traversals survive sequentially, the total running time is $O(n/k \cdot m)$ rounds.

Each robot stores a treelabel, a group counter, and DFS state, requiring $O(n/k \cdot \log k)$ bits of memory.

Hence, \cd~is solved deterministically in general initial configurations within the stated time and memory bounds.
\end{proof}

\subsection{When \texorpdfstring{$k< n$}{k<n}}
We consider the setting where robots may or may not know $n$ and $k$. 

\textbf{Algorithm:} Each source $S_p$ initiates a DFS with an initial guess $n = k_p$, forming bounded-size groups coordinated by \osci~robots. The main challenge is that independent DFSs may never meet at arbitrary nodes due to differences in group sizes and unsynchronized oscillations between groups of different DFSs. Consequently, merging cannot rely on incidental encounters during exploration.  

To guarantee convergence, each source retains a $passive$ robot. Since each DFS traversal is guaranteed to eventually visit every node, any traversal reaching a source will encounter its $passive$ robot and merge with the traversal originating there. However, an additional challenge arises when sources contain exactly two robots.

Let us consider the source where exactly two robots are present at a source $s$. To guarantee convergence, one robot remains at $s$ in a $passive$ state while the other initiates exploration. However, a single robot cannot explore an anonymous graph, as discussed earlier in Section \ref{the:single case}. 

This leads us to two possible cases, depending on the initial configuration of sources.

\textbf{Case 1: If there exists at least one source initially hosting more than two robots:} Suppose there exists a source $S$ that initially contains at least three robots. In this setting, the DFS initiated at $S$ can proceed normally, leaving one $passive$ robot at the source. Whenever this DFS encounters another traversal or a $single$-source robot, merging occurs according to the standard subsumption rule: the traversal with more settled robots continues, and the smaller traversal is absorbed. As a result, all robots eventually gather at the dominant traversal's source and continue exploration as a single unified DFS.

If some other source initially contains exactly two robots, those two robots start a DFS without leaving behind a $passive$ robot, since a single robot cannot explore. Their traversal will eventually reach the source $S$, where the $passive$ robot remains, and will merge into the DFS originating from $S$. 

Therefore, the existence of even one source with at least three robots guarantees global convergence: all DFS processes eventually merge into one, which explores the entire graph and subsequently performs dispersion.

This scenario requires only $O(n/k \cdot\log (k + \Delta))$ bits of memory per robot, matching the memory bound of the general algorithm.

\textbf{Case 2: If there does not exist at least one source initially hosting more than two robots:} Consider a source $S$ that initially contains exactly two robots. Since a single robot cannot explore, both robots initiate independent bounded DFS traversals, leaving no $passive$ robot at the source. There are two possible cases:

\begin{itemize}
    \item If either of the two robots of source $S$ encounters another DFS traversal or a $single$-source robot, merging occurs according to the standard subsumption rule: the traversal containing more settled robots continues, while the smaller traversal is absorbed. Once merging results in a source with more than two robots, the process switches to Case 1: one robot becomes $passive$ at the source, and the remaining robots initiate a new DFS.

    \item If no encounters take place during the exploration, this situation indicates that every source (if it exists) initially contained exactly two robots and that the DFS traversals originating from different sources remained completely disjoint throughout their execution. Since none of the traversals interacted or merged with any other traversal or $single$-source robot, each pair of robots independently completes its bounded exploration and acquires full knowledge of the graph $G$. The two robots at the source $S$ then disperse to distinct nearest nodes based on a matching between node colors and robot colors. If a node is already occupied by another robot, a robot uses its stored knowledge to move to the next nearest available node. In the event that two or more robots reach the same node simultaneously, tie-breaking is performed using their unique IDs.

\end{itemize}

This process ensures successful dispersion, but it requires each robot to store complete graph knowledge, resulting in a memory requirement of $O(n \log (k + \Delta))$ bits per robot.





\begin{theorem}\label{thm:general-unknown}
\cd~can be solved deterministically in general initial configuration in $O(\log(n/k)\cdot n/k \cdot m)$ rounds. The algorithm requires (i) $O(n/k \cdot \log(k+\Delta))$ bits of memory per robot if some source initially hosts at least three robots, and (ii) $O(n \cdot \log(k+\Delta))$ bits of memory per robot otherwise.
\end{theorem}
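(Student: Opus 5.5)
The plan is to reduce the general configuration to the rooted case of Theorem~\ref{thm:rooted-unknown}. The reduction runs the multi-source subsumption framework of~\cite{kshemkalyani2025near} on top of the group-based DFS of Section~\ref{sec:rooted-known}, and then splits according to the two cases in the algorithm description.

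\textbf{Correctness and the invariants used.} First, I would establish the invariants that every surviving traversal inherits from the rooted analysis. Each DFS started at a multi-source $S_p$ with guess $n=k_p$ builds groups satisfying Lemma~\ref{lem:1grexp}. At any moment, the union of all traversals' required sets is a partial partition of $V$, as in Lemma~\ref{lem:3endexpr}(2). This holds because a subsumed traversal hands its groups to the survivor, and the survivor only re-visits collapsed nodes.

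Second, I would prove termination of merging via a potential argument. Each subsumption strictly decreases the number of live traversals and strictly increases the number of settled robots of the survivor, with ties broken by treelabel. Hence after at most (number of sources) $-1\le k$ subsumptions a single traversal remains. In Case~1, the passive robot left at a source with at least three robots guarantees this. Every traversal that runs long enough eventually reaches that source, because its exploration is complete within its current guess, or else it doubles the guess. So an encounter is forced and no traversal can run forever in isolation.

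Once a single traversal remains and holds all $k$ robots, the argument of Theorem~\ref{thm:rooted-unknown} applies verbatim. The doubling guesses terminate after $O(\log(n/k))$ iterations. Lemma~\ref{lem:2finishedexpr} and Lemma~\ref{lem:3endexpr} give the supernode tree $\mathcal{T}$. Phases~2 and~3, via Lemmas~\ref{lem:5uniquedes}--\ref{lem:7S2P3}, complete \cd.

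\textbf{Case 2.} Here every source has at most two robots. If some encounter occurs, the merge produces a source with more than two robots and we fall back to Case~1. Otherwise, every pair explores alone. By the same doubling argument, each robot eventually stores the whole graph, which costs $O(n\log(k+\Delta))$ bits. Each robot then deterministically selects the nearest unoccupied color-matching node, breaking ties by ID. Since every robot knows $G$ fully, a robot blocked at an occupied node can recompute a new target. Progress holds because every collision settles the minimum-ID robot.

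\textbf{Complexity.} For time, a single DFS attempt costs $O(n/k\cdot m)$ rounds, by the per-node waiting of $T=O(c)$ rounds over $O(m)$ edge traversals. Each subsumption re-visits each node of the collapsed traversal at most once. I would charge these re-visits to the collapsed traversal, so that all merges together cost $O(n/k\cdot m)$ per guess level. Multiplying by the $O(\log(n/k))$ guess levels gives $O(\log(n/k)\cdot n/k\cdot m)$. For memory, each robot stores one group of $O(c)=O(n/k)$ nodes with $O(\log(k+\Delta))$ bits per record in Case~1, versus the full graph in Case~2.

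\textbf{Main obstacle.} The hardest step will be the time accounting across merges combined with guess doubling. Different traversals run with different guesses $k_p$, so after a merge the survivor may have to restart with a larger guess. Re-forming groups for the new guess could therefore multiply the cost. My first approach is to argue that the survivor's guess is at least the maximum of the merged guesses. I would then charge each doubling to the survivor, which yields a geometric sum dominated by the final level. Failing that, I would bound the total number of (merge, restart) events by $O(\log(n/k))$ per surviving lineage. The forced-encounter claim in Case~1, namely that unsynchronized oscillating groups still meet at the passive robot, is the second delicate point. I would handle it by noting that the passive robot never moves, so a complete exploration necessarily visits it.
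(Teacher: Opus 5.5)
Your proposal follows the paper's proof essentially step for step. It uses the same split into the case with a source of at least three robots, whose passive robot forces every sufficiently long traversal into a subsumption, and the all-pairs case. The termination argument is the same: each phase either strictly reduces the number of live traversals or doubles the guess. So are the reduction to the rooted unknown-$n$ algorithm once one traversal remains, the greedy full-map dispersion with ID tie-breaking when no encounter occurs, and the bounds: $O(n/k\cdot m)$ rounds per phase times $O(\log(n/k))$ phases, with group-sized versus full-map memory. One extra you add, that the required sets of concurrent traversals always form a partial partition of $V$, is not needed once the survivor is treated as a rooted instance, and should be dropped. Traversals with different group sizes and unsynchronized oscillations need not detect each other's groups, which is exactly why the paper introduces passive robots.
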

\begin{proof}
    
We prove the theorem by contradiction. Assume, for the sake of contradiction, that the algorithm does not solve the \cd~problem under the stated conditions. Then at least one of the following must hold:

(i) the algorithm does not terminate, or  
(ii) it terminates but fails to achieve correct dispersion.

Assume that the algorithm does not terminate. Then infinitely many DFS phases are executed without achieving convergence.

Consider any DFS phase. Either at least two DFS traversals intersect, or they do not. If two DFS traversals intersect, then by the subsumption rule, exactly one traversal survives, and the other is absorbed. Hence, the total number of active DFS traversals strictly decreases. Since the number of sources is finite, this process cannot continue indefinitely. Therefore, infinitely many phases without termination is impossible in this case.

Otherwise, suppose that no DFS traversals intersect in a given phase. Then each traversal completes its bounded exploration and returns to its source. By the algorithm, the exploration capacity is increased exponentially in the next phase. Since the graph $G$ is finite, after finitely many phases, the exploration range of at least one traversal becomes large enough to intersect another traversal or reach a source containing a passive robot. This again forces subsumption and reduces the number of DFS traversals.

Thus, in every phase, either the number of traversals strictly decreases or the exploration capacity increases. Both quantities are bounded, implying that the algorithm must terminate. This contradicts the assumption of non-termination.

Assume now that the algorithm terminates but does not achieve dispersion. If the execution reaches a configuration with exactly one active DFS traversal, then the situation reduces to the rooted case. By correctness of the rooted algorithm (by Lemma \ref{sec:rooted-unknown}), dispersion is achieved, contradicting the assumption.

Otherwise, suppose the algorithm terminates with multiple DFS traversals that never merged. This can occur only if every source initially contains exactly two robots and no traversal ever intersects another. In this case, each DFS eventually completes its exploration and acquires complete knowledge of the graph. Using this information, each pair of robots disperses to distinct nodes via deterministic tie-breaking. Hence, dispersion is still achieved, contradicting the assumption.

Since both non-termination and incorrect termination lead to contradictions, the algorithm must terminate and correctly solve the \cd~problem.

Each DFS phase requires $O(n/k \cdot m)$ rounds, and the exploration capacity doubles in each phase, yielding $O(\log(n/k))$ phases. Thus, the total round complexity is $O(\log(n/k)\cdot n/k \cdot m)$. The memory bounds follow directly from whether robots store partial or complete graph information, as stated in the theorem. Hence, the theorem holds.
\end{proof}

\section{Arbitrary Graph: Dispersed, \texorpdfstring{$k\leq n$}{k<n}, Known \texorpdfstring{$n$}{n}}\label{sec:dispersed-known}

\subsection{Special Case: When \texorpdfstring{$k= n$}{k=n}  \label{sec:dispersed-k=n}}

 In this subsection, we study the \cd~problem under dispersed initial configurations, where each node of the graph initially hosts at most one robot. Thus, $k=n$ robots are placed on $n$ distinct nodes of an arbitrary graph $G$, and both $k$ and $n$ are known to all robots.

 Our approach builds upon the ``meeting protocol for two adjacent agents'' in \cite{chand2025brief}, which guarantees that two adjacent agents meet within $O(\log L)$ rounds, where $L$ is the ID of highest ID robot. Over time, the initially dispersed configuration evolves into a general configuration, in which at least one node hosts two or more robots.

 \textbf{\emph{Meeting protocol for two adjacent agents}:} Consider two robots $r_u$ and $r_v$ initially located at adjacent nodes $u$ and $v$ of the graph, where $u$ is connected to $v$ via an outgoing port $p_u$ at $u$. The goal is for robot $r_u$ to meet its neighbour $r_v$ through port $p_u$.  Each robot first pads its ID to a $\log L$-bit binary string and then appends its bitwise complement to the left, thereby forming a new ID of length $2 \log L$ bits, where $\log L$ is the length of the largest ID robot. The protocol proceeds bit by bit, from the least significant bit (LSB) to the most significant bit (MSB). For each bit position $i$, if the $i$-th bit of $r_u$'s new ID is $1$, the robot moves from node $u$ to node $v$ in the first round of the pair and returns to $u$ in the second; if the bit is $0$, the robot remains at $u$ for both rounds. Since all robots have distinct IDs, there exists at least one bit position at which one robot's new ID has a $1$ while the other has a $0$. In that round, one robot moves while the other remains still, ensuring that the two robots meet. 

 Thus, the protocol completes in $4 \log L$ rounds, with two rounds devoted to each bit position. The protocol is entirely local and requires no knowledge of the state or behaviour of $r_v$. A robot may invoke this procedure using any specific port it intends to traverse.

 Note that in \cite{chand2025brief}, the robots are assumed to have prior knowledge of $\log L$. In our setting, the robots know the value of $n$, and each robot with an ID in $[1,k^{O(1)}]$ and therefore is assumed to know its value.

 \textbf{Algorithm:} Initially, all robots are in a dispersed configuration, and each robot knows the value of $n$. Every robot $r$ positioned at a node $u$ initiates the \emph{Meeting Protocol for Two Adjacent Agents} by attempting to visit its neighbor $v$ through port 0. During the execution of this protocol, if robot $r$ meets another robot $r'$ at node $v$, it forms a group with $r'$ and thereafter follows $r'$, while $r'$ continues executing the algorithm. By repeatedly applying this protocol across all adjacent pairs, robots located on neighboring nodes are guaranteed to meet. After at most $4 \log k$ rounds, the initially dispersed configuration transforms into a general configuration in which at least one node hosts two or more robots.

 Once such a general configuration is obtained, we apply the algorithm developed for the general initial setting (Section~\ref{sec:general-known}), which guarantees deterministic \cd~for arbitrary placements with $k =n$.


\begin{theorem}\label{thm:dispersed_k=n}
\cd~can be solved deterministically in dispersed initial configuration with $k=n$ robots having prior knowledge of $n,k$ in $O(\log(k)+ n/k \cdot m)$ rounds using $O(n/k \cdot \log k)$ bits per robot.


\end{theorem}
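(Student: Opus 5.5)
The proof is a reduction in two stages. First, the meeting protocol converts the dispersed configuration into a general one within $O(\log k)$ rounds. Then Theorem~\ref{thm:general-k=n} solves \cd\ from that general configuration in $O(n/k\cdot m)$ rounds with $O(n/k\cdot\log k)$ bits. The time bound is the sum of the two stages, and the memory bound is the maximum of the two.

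\textbf{Stage 1: reaching a general configuration.} Since $k=n$ and the configuration is dispersed, every node holds exactly one robot. Because $G$ is connected with $n\geq 2$, every robot at a node $u$ has an occupied neighbour $v$ through port $0$. Every robot runs the meeting protocol of \cite{chand2025brief} on port $0$. IDs lie in $[1,k^{O(1)}]$ and $k$ is known, so each robot can compute $\log L=O(\log k)$. The padded-and-complemented IDs are distinct and all contain the same number of $1$ bits. Hence there is a bit position where $r_u$'s bit is $1$ and $r_v$'s bit is $0$. In the first round of that pair, $r_u$ moves to $v$ while $r_v$ stays, so the two become collocated.

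One subtlety must be addressed: $r_v$ is simultaneously running the protocol on its own port $0$, which may lead to a third node $w$. So the statement to prove is not that a specific pair meets. It is that at some round within the first $4\log L$ rounds, some node hosts at least two robots.

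I would argue this by contradiction. Suppose no two robots are ever collocated during these rounds. Consider the bit position $i$ where $r_u$'s bit is $1$ and $r_v$'s bit is $0$. In the first round of pair $i$, $r_u$ arrives at $v$ and $r_v$ is still at $v$, which is a collision. This contradiction holds regardless of what the other robots do. When a collision occurs, the robots collocate and one follows the other (the rule stated in the algorithm). The configuration is then general and the protocol stops. This stage costs $O(\log k)$ rounds and $O(\log k)$ bits.

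\textbf{Stage 2: invoking the general algorithm.} At the moment the first collision happens, the robots at that node detect it and switch to the algorithm of Section~\ref{sec:general-known}. Robots that are still alone become $single$ sources in $wait$ state.

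The main obstacle is consistency. The switch is not globally synchronized, and some robots may be in mid-protocol, away from their home node, when it happens. I would handle this with a fixed schedule. Every robot runs the protocol for exactly $4\log L$ rounds, a quantity every robot can compute. Each robot's trajectory returns it to its home node at the end of every bit pair, except for robots that became followers. So at round $4\log L+1$ the configuration is a well-defined general configuration, and all robots start the general algorithm simultaneously.

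Applying Theorem~\ref{thm:general-k=n} to this configuration, with $k=n$ and both known, gives \cd\ in $O(n/k\cdot m)$ further rounds using $O(n/k\cdot\log k)$ bits. The total is $O(\log k+n/k\cdot m)$ rounds and $O(n/k\cdot\log k)$ bits per robot, as claimed.
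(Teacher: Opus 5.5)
Your proposal is correct and follows the paper's route: you run the adjacent-agent meeting protocol on a fixed port for $O(\log k)$ rounds to force some node to host two robots, then hand the resulting configuration to the Section~\ref{sec:general-known} algorithm for $k=n$. Your write-up is tighter than the paper's in three places. The equal-weight padded IDs give a $(1,0)$ bit position, which handles all robots executing the protocol simultaneously. The fixed $4\log L$-round schedule gives every robot a common start for the second stage. Finally, you cite Theorem~\ref{thm:general-k=n} rather than Theorem~\ref{thm:general-unknown}, as the paper does, and only Theorem~\ref{thm:general-k=n} actually delivers the stated $O(\log k + n/k\cdot m)$ time and $O(n/k\cdot\log k)$ memory.
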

\begin{proof}
We prove correctness, termination, and complexity of the algorithm.

Initially, the robots are in a dispersed configuration, with exactly one robot per node and with full knowledge of $n$ and $k=n$. Each robot independently initiates the \emph{Meeting Protocol for Two Adjacent Agents} with one of its neighbors. By the properties of the protocol, any two robots located at adjacent nodes are guaranteed to meet within $O(\log L)$ rounds.

Since the graph $G$ is connected, repeated executions of the meeting protocol ensure that at least one pair of adjacent robots meets, forming a group of size at least two. Once such a meeting occurs, the configuration is no longer dispersed and becomes a general configuration, in which at least one node hosts multiple robots.

After reaching a general configuration, the robots execute the deterministic algorithm for the general initial setting described in Section \ref{sec:general-known}. By Theorem \ref{thm:general-unknown}, this algorithm correctly solves the \cd~problem for arbitrary initial placements with known $n$, ensuring that all robots are placed on distinct nodes and that each node hosts at most one robot. Hence, correctness follows.


The meeting phase requires at most $O(\log L)$ rounds to transform the dispersed configuration into a general configuration. Once this occurs, the algorithm for the general setting requires $O(\log(n/k) \cdot n/k \cdot m)$ rounds. Since $k=n$ in the dispersed setting, the total round complexity is $O(\log(L)+ \log(n/k) \cdot n/k \cdot m)$. Hence, the time and memory complexity follow from Theorem \ref{thm:general-unknown}.
\end{proof}

\subsection{When \texorpdfstring{$k< n$}{k<n}}

We study the \cd~problem under dispersed initial configurations for $k<n$ robots, where each node of the graph initially hosts at most one robot. Thus, $k$ robots occupy $k$ distinct nodes of an arbitrary graph $G$, and both $k$ and $n$ are known to all robots.

The high-level idea is to transform the dispersed configuration into a general configuration, i.e., a configuration in which at least one node hosts two or more robots. Once such a source exists, the \cd~problem becomes solvable using the techniques developed for general configurations.

This transformation is closely related to the classical deterministic gathering problem. In particular, the work \cite{molla2023fast} addresses the setting in which $k$ robots are placed on the nodes of an $n$-node graph, with the objective of having all $k$ robots meet at a single node and terminate. They present deterministic algorithms that achieve gathering with detection on arbitrary graphs.

\paragraph{Summary of results from prior work \cite{molla2023fast}:} The authors first present a universal exploration sequence (UXS)-based gathering algorithm that requires $\tilde{O}(n^5)$ rounds. They then focus on undispersed configurations (where some nodes initially host multiple robots), and propose a more efficient two-phase algorithm. In the first phase, a group of robots collaboratively constructs a map of the graph in $O(n^3)$ rounds, while in the second phase, this map is used to route all robots to a designated gathering node.

They then address the dispersed configuration by first attempting to convert it into an undispersed configuration. The resulting time complexity depends on the minimum distance between any two robots. If the closest pair of robots is at distance $i$, the robots meet and form an undispersed configuration in $O(n^i \log n)$ rounds. Subsequently, all robots gather at a designated node in $\tilde{O}(n^5)$ rounds.

Hence, the following complexity trade-offs hold for gathering all robots at a single node from a dispersed configuration:
\begin{itemize}
    \item The gathering completes in $O(n^3)$ rounds if the initial configuration is undispersed, or if it is dispersed with at least two robots located at distance $2$ from each other.
    \item The gathering completes in $O(n^i \log n)$ rounds if the initial configuration is dispersed with at least two robots located at distance $i$, for $i \in \{3,4,5\}$.
    \item Otherwise, the gathering requires $\tilde{O}(n^5)$ rounds.
\end{itemize}

Consequently, the following bounds hold as a function of the number of robots $k$, for arbitrary initial configurations:
\begin{enumerate}
    \item If $k \ge \lfloor n/2 \rfloor + 1$, the gathering completes in $O(n^3)$ rounds.
    \item If $k \ge \lfloor n/3 \rfloor + 1$, the gathering completes in $O(n^4 \log n)$ rounds.
    \item Otherwise, the gathering requires $\tilde{O}(n^5)$ rounds.
\end{enumerate}

Importantly, the algorithm assumes knowledge of the graph size $n$ but does not require knowledge of the number of robots $k$.

\textbf{Algorithm:} Our approach builds upon the results of \cite{molla2023fast}. Since the robots know $k$ or $n$, they can gather at a single node by directly applying the deterministic gathering routines from that work. In particular, if the closest pair of robots is at distance $i \in \{1,2,3,4,5\}$, then the gathering completes in $O(n^{i} \log n)$ rounds. Once such a general configuration is obtained, we invoke the algorithm designed for the general initial setting (Section \ref{sec:general-known}).

If no such pair of robots exists, the gathering procedure brings all robots to a single node in $\tilde{O}(n^{5})$ rounds. After this rooted configuration is formed, we apply the algorithm developed for the rooted initial setting (Section \ref{sec:rooted-known}).

\begin{theorem}\label{thm:dispersed_k<n}
The \cd~problem can be solved deterministically in a dispersed initial configuration with $k<n$ robots and prior knowledge of $n$, using $O(M^{*} + n/k\cdot \log(k+\Delta))$ bits of memory per robot, where $M^{*}$ denotes the memory required to implement a Universal Exploration Sequence (UXS). The time complexity is: (i) $O(n^3 + n/k\cdot m)$ rounds if $i=1,2$, where $i$ is the distance between the closest pair of robots; (ii) $O(n^{i}\log n + n/k\cdot m)$ rounds if $i \in \{3,4,5\}$; (iii) $\tilde{O}(n^5)$ rounds otherwise.
\end{theorem}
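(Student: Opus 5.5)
The proof is a reduction: first collapse the dispersed configuration into a non-dispersed one using the gathering machinery of \cite{molla2023fast}, then hand the result to the algorithms already analyzed in Sections~\ref{sec:rooted-known} and~\ref{sec:general-known}. Correctness, time and memory are argued separately for each stage, and the bounds are added. Knowledge of $n$ (and hence of an upper bound on $\log L$, since IDs lie in $[1,k^{O(1)}]$ and $k\le n$) is exactly what the UXS-based routines of \cite{molla2023fast} require. So every robot can compute the sequence it follows. This costs $O(M^{*})$ bits, and no knowledge of $k$ is needed.

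\textbf{Stage 1 (breaking dispersion).} Let $i$ be the distance between the closest pair of robots. I would invoke the three trade-offs summarized above from \cite{molla2023fast}.
\begin{itemize}
\item For $i\in\{1,2\}$, two robots meet and form an undispersed configuration within $O(n^3)$ rounds.
\item For $i\in\{3,4,5\}$, this takes $O(n^i\log n)$ rounds.
\item Otherwise, the full gathering routine brings all $k$ robots to one node in $\tilde O(n^5)$ rounds, producing a rooted configuration.
\end{itemize}
Robots must know when to stop Stage~1 and switch algorithms. I would fix the switch time as the known worst-case bound for each sub-protocol, which is computable from $n$. Robots that have been absorbed simply follow their group until that time.

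\textbf{Stage 2 (solving \cd).} In cases (i)–(ii) the configuration is now general, with some node hosting at least two robots. I would apply the algorithm of Section~\ref{sec:general-known} with $n$ known. Since $n$ is known, no doubling of the guess is required, so a single DFS phase suffices. By the argument behind Theorem~\ref{thm:general-unknown}, this gives correctness and $O(n/k\cdot m)$ rounds. In case (iii) the configuration is rooted, and I would apply Theorem~\ref{thm:rooted-known} directly, again in $O(n/k\cdot m)$ rounds. That term is absorbed into $\tilde O(n^5)$ because $m\le n^2$. Adding the stages gives $O(n^3+n/k\cdot m)$, $O(n^i\log n+n/k\cdot m)$, and $\tilde O(n^5)$ respectively. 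The memory is the sum of the two stages, $O(M^{*})$ plus the $O(n/k\cdot\log(k+\Delta))$ of the rooted or general algorithm, and the two stages' memory need not coexist beyond a constant-size phase flag.

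\textbf{Main obstacle.} The delicate step is the general-configuration branch. Theorem~\ref{thm:general-unknown} only guarantees $O(n/k\cdot\log(k+\Delta))$ memory when some source hosts at least three robots. Stage~1 may create only pairs, which would push memory to $O(n\log(k+\Delta))$. I would address this first by running the gathering routine of \cite{molla2023fast} to completion rather than stopping at the first meeting: its second phase routes all robots to one node using the map built by the meeting group, still within $O(n^3)$ (resp.\ $O(n^i\log n)$) rounds for small $i$. Stage~2 then always starts from a rooted configuration, and Theorem~\ref{thm:rooted-known} applies uniformly. If that phase's cost turned out to exceed the claimed bound, the fallback would be to argue that a merged pair can recruit a third robot during the general-case DFS before leaving a passive robot behind.
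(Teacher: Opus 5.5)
Your proposal is correct and uses the same reduction as the paper: gather with the routine of Molla et al., then run an earlier algorithm. Your final version of Stage~2, however, takes a genuinely different and tighter route.

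The paper's proof branches. If gathering yields a general configuration, it invokes the general-configuration algorithm through Theorem~\ref{thm:general-unknown}. If it yields a rooted one, it invokes Theorem~\ref{thm:rooted-known}. It then simply asserts that the subsequent execution preserves the asymptotic bounds.

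You correctly observe that Theorem~\ref{thm:general-unknown} does not support that assertion. It carries an extra $\log(n/k)$ factor in time. It also gives only $O(n\log(k+\Delta))$ memory unless some source holds at least three robots, and a Stage~1 that stops at a first meeting may produce only a pair.

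You instead always run the gathering to completion. Per the summary of that work used in the paper, this still costs $O(n^3)$, $O(n^i\log n)$, or $\tilde O(n^5)$ in the three cases. You then only ever need Theorem~\ref{thm:rooted-known}, which delivers exactly the $O(n/k\cdot m)$ rounds and $O(n/k\cdot\log(k+\Delta))$ bits in the statement. The paper's branching gains nothing in return, since its stated time terms are already those of full gathering. With this route you can discard two things: your first version of Stage~2, whose claim that known $n$ reduces the general algorithm to a single DFS phase is not established by Theorem~\ref{thm:general-unknown}; and the vague ``recruit a third robot'' fallback.

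There are two details to fix.
\begin{enumerate}
\item The switch from Stage~1 to Stage~2 must be triggered by the detection guarantee of gathering with detection, not by a fixed worst-case time per sub-protocol. Robots do not know $i$, and a robot far from the first meeting cannot learn that it happened. A single global deadline would therefore have to be the $\tilde O(n^5)$ one, which destroys cases (i) and (ii).
\item Like the paper, you charge all of Stage~1 to $O(M^{*})$. This is legitimate only if $M^{*}$ is read as the memory of the whole gathering routine. Your argument explicitly uses the map built by the meeting group, which in general costs more to store than a UXS.
\end{enumerate}
Separately, for $k=1$ there is no closest pair and Theorem~\ref{thm:rooted-known} does not apply; the single-robot UXS argument from Section~\ref{sec:lower-bound-paragraph} covers it within case (iii).
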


\begin{proof}

Initially, the robots are in a dispersed configuration with $k<n$ robots placed on distinct nodes of a connected graph $G$, and all robots know $n$ and $k$. The goal is to transform this dispersed configuration into a configuration for which the \cd~problem is already solvable.

The algorithm first applies a deterministic gathering procedure from \cite{molla2023fast}. This procedure guarantees that all robots gather at a single node or that at least one node hosts multiple robots, depending on the distance between the closest pair of robots. In either case, the resulting configuration is no longer dispersed.

If the gathering procedure results in a general configuration, that is, a configuration in which at least one node hosts two or more robots, we invoke the algorithm for general initial configurations (Section~\ref{sec:general-known}). By Theorem~\ref{thm:general-unknown}, this algorithm correctly solves the \cd~problem.

If the gathering procedure results in a rooted configuration, in which all robots gather at a single node, we apply the algorithm for rooted initial configurations (Section~\ref{sec:rooted-known}). By the correctness of that algorithm, dispersion is achieved deterministically.

Thus, in all cases, the algorithm produces a correct colorful dispersion.

From \cite{molla2023fast}, if the closest pair of robots is at distance $i \in \{2,3,4,5\}$, the gathering phase completes in $O(n^{i}\log n)$ rounds. If no such pair exists, the gathering phase completes in $\tilde{O}(n^{5})$ rounds. Once the gathering phase terminates, the resulting configuration is either 
rooted or general. In both cases, the subsequent execution preserves the asymptotic bounds, and thus the claimed time and memory complexities follow.
\end{proof}

\section{Arbitrary Graph: Unknown, \texorpdfstring{$k\leq n$}{k<n}}\label{sec:unknown-confi}

For known $n$, if the configuration is \emph{rooted} and $k$ is known, the robots can immediately detect this and solve the problem using the algorithm of Section~\ref{sec:rooted-known}.  
For all other situations, the robots first gather at a single node, thereby creating a rooted configuration, using the result of Molla {\it et al.}~\cite{molla2023fast}, deterministic gathering with detection is possible on arbitrary graphs using only the knowledge of $n$, via universal exploration sequences. 
When the initial configuration is already rooted but $k$ is unknown, the robots explore the graph to enable correct identification. Once a rooted configuration is formed and $n$ is known, the robots apply the rooted algorithm (Section~\ref{sec:rooted-known}) to solve  \cd.

If unknown $n$ but $k$ is known, robots can still detect a rooted configuration. They execute the rooted algorithm using exponential guessing of $n$, as described in Section~\ref{sec:rooted-unknown}.  
The remaining configurations, where neither $n$ nor the configuration type is known, remains as an open problem.
Recall that, no initial configuration (known or unknown) as well as known/unknown $n,k$ denied solution for \dpn.

\begin{theorem}\label{thm:unknown-config}
\cd~can be solved deterministically on an arbitrary graph $G$ with $k\le n$ robots, where the initial configuration is unknown but the robots have prior knowledge of $n$, using $O(M^{*} + n/k\cdot \log(k+\Delta))$ bits of memory per robot, where $M^{*}$ denotes the memory required to implement a UXS. The time complexity is: (i) $O(n^{3} + n/k\cdot m)$ rounds if $k \ge \lfloor n/2 \rfloor + 1$; (ii)  $O(n^{4}\log n + n/k\cdot m)$ rounds if $k \ge \lfloor n/3 \rfloor + 1$; (iii) $\tilde{O}(n^5)$ rounds otherwise.
\end{theorem}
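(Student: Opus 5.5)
The plan is a reduction: first bring the robots into a rooted configuration using the gathering-with-detection routine of Molla et al.~\cite{molla2023fast}, then invoke the rooted algorithm of Section~\ref{sec:rooted-known}, i.e., Theorem~\ref{thm:rooted-known}. Because the configuration type is unknown, the robots cannot branch on it up front. So every robot will unconditionally run the UXS-based gathering procedure, which needs only knowledge of $n$ (not of $k$ or of the configuration type). A configuration that is already rooted is simply a degenerate case of that procedure: the robots are already co-located, and gathering with detection terminates with them together. After gathering terminates with detection, all $k$ robots are at a single node $S$. They can count themselves there, so $k$ becomes known, and the configuration is rooted with known $n$. This is exactly the hypothesis of Theorem~\ref{thm:rooted-known}. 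The case $k=1$ is handled by Theorem~\ref{the:single case}, since a lone robot with known $n$ just runs a UXS.

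The key steps, in order, are as follows.

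\begin{enumerate}
\item Argue that gathering with detection from \cite{molla2023fast} terminates with all robots at one node from any initial placement (rooted, general, or dispersed). All robots must also agree on a common round at which gathering is declared finished, so that Phase 1 of the rooted algorithm starts synchronously at $S$.
\item Apply Lemmas~\ref{lem:1grexp}--\ref{lem:7S2P3} through Theorem~\ref{thm:rooted-known} to obtain correctness of \cd\ from $S$.
\item Add the two costs. The gathering time follows the trade-offs listed before Theorem~\ref{thm:dispersed_k<n}. If $k\ge\lfloor n/2\rfloor+1$, pigeonhole forces two robots to share a node or to be at distance at most $2$, which gives $O(n^3)$ rounds. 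If $k\ge\lfloor n/3\rfloor+1$, the closest pair is at distance at most $4$ (roughly, balls of radius $1$ around the robots along a spanning-tree walk cannot all be disjoint), which gives $O(n^4\log n)$ rounds. Otherwise the bound is $\tilde O(n^5)$.
\item The rooted phase contributes $O(n/k\cdot m)$ rounds, and this cost is absorbed into $\tilde O(n^5)$ in case (iii).
\item Memory is the maximum of the two phases, namely the UXS memory $M^*$ plus the $O(n/k\cdot\log(k+\Delta))$ bits from Theorem~\ref{thm:rooted-known}, which together give the stated bound.
\end{enumerate}

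I expect the main obstacle to be the distance claims in step 3: converting the lower bound on $k$ into a guarantee on the distance between the closest pair. For case (i) I would try a direct pigeonhole argument on a Hamiltonian-like DFS ordering of a spanning tree. For case (ii) I would instead quote the corresponding bound from \cite{molla2023fast} as the paper's summary already does, rather than re-derive it.

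A secondary subtlety is the synchronized start after gathering. I would handle it by having robots wait until the worst-case gathering time bound, which is computable from $n$ alone. After that, all robots switch simultaneously to the rooted algorithm with $n$ known.
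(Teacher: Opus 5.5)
Your reduction is the same as the paper's. You gather with detection via \cite{molla2023fast} using only $n$, count the co-located robots to learn $k$, and then run the rooted algorithm of Theorem~\ref{thm:rooted-known}, paying $M^{*}$ plus the rooted memory. Running gathering unconditionally, rather than first branching on ``rooted and $k$ known'' as the paper does, is fine. So is treating $k=1$ separately via Theorem~\ref{the:single case}, since the rooted theorem needs $k>1$; this is slightly cleaner than the paper's proof.

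One step, however, fails as written: your synchronization fix. Waiting until the worst-case gathering time computable from $n$ alone means waiting $\tilde O(n^5)$ rounds in every execution, which destroys bounds (i) and (ii). No such wait is needed. With detection, at termination all robots are at a single node and know gathering has occurred. They can therefore agree locally to start Phase~1 in the next round and reset their round counters together.

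Separately, your pigeonhole sketches are mismatched between the two cases.
\begin{itemize}
\item For (i), a cyclic ordering with consecutive nodes within distance $2$ is a Hamiltonian cycle of $G^2$, which need not exist (e.g., when $G$ is a tree that is not a caterpillar). A DFS-type ordering of a spanning tree in general only guarantees consecutive nodes within distance $3$, which would give $O(n^3\log n)$ rather than $O(n^3)$.
\item The radius-$1$ ball argument you attach to (ii) is exactly what proves (i). If all pairwise distances are at least $3$, the closed neighbourhoods of the robot positions are disjoint and each has at least $2$ nodes, so $k\le n/2$.
\item For (ii) you need radius-$2$ balls. Pairwise distance at least $5$ makes them disjoint, and each contains at least $3$ nodes when $n\ge 3$, so $k\le n/3$. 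Hence $k\ge\lfloor n/3\rfloor+1$ forces a pair within distance $4$.
\end{itemize}
Alternatively, as the paper does, simply quote the $k$-based bounds of \cite{molla2023fast}.
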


\begin{proof}
We prove that \cd~can be solved deterministically when the robots have prior knowledge of $n$, regardless of the initial configuration.

If the initial configuration is rooted and $k$ is known, the robots immediately execute the rooted algorithm of Section~\ref{sec:rooted-known}, which guarantees correct dispersion.

Otherwise, the robots first apply the deterministic gathering-with-detection algorithm of Molla \textit{et al.}~\cite{molla2023fast}, which works on arbitrary graphs using only the knowledge of $n$ and universal exploration sequences. Upon termination, all robots are gathered at a single node and can detect this event, thereby creating a rooted configuration.

If the configuration is rooted but $k$ is unknown, the robots explore the graph to enable correct identification of the rooted configuration. Once rootedness is confirmed and $n$ is known, they execute the rooted algorithm of Section~\ref{sec:rooted-known}.

If $k$ is known but $n$ is unknown, the robots detect rootedness and execute the rooted algorithm using exponential guessing of $n$, as described in Section~\ref{sec:rooted-unknown}. Correctness follows since the guessed value eventually exceeds the true $n$.

After a rooted configuration is obtained with a correct value of $n$, dispersion is completed by the rooted algorithm, settling exactly one robot at each occupied node.

The memory required per robot is $O(M^{*} + n/k \cdot \log(k+\Delta))$, where $M^{*}$ is the memory needed to implement a universal exploration sequence. The stated time bounds follow from the complexities of deterministic gathering and rooted dispersion in the respective cases.

Hence, \cd~can be solved deterministically on an arbitrary graph with unknown initial configuration when $n$ is known.
\end{proof}



\section{Conclusion and Discussion}

 In this work, we studied the \cd~problem under rooted, dispersed, and general configurations, considering different assumptions on the robots’ prior knowledge of $n$ and $k$. The results are summarized in Table~\ref{table:table-result}. 

 We also examined the more challenging setting in which the initial configuration is unknown. When robots know $k$, they can determine whether the configuration is rooted. However, in the absence of a rooted configuration or without knowledge of $k$, a single robot cannot distinguish between dispersed and general configurations, resulting in several open cases for unknown initial configurations. Furthermore, once a dispersed configuration is achieved, the \cd~problem becomes solvable for both rooted and general configurations. These findings highlight the intrinsic connection between configuration identification, graph exploration, and the solvability of colorful dispersion.

 An important direction for future work is to resolve the remaining open cases by identifying minimal additional assumptions, such as limited node memory, partial knowledge of graph parameters, or the availability of additional robots, that enable configuration identification and graph exploration. Another important direction is to see whether time and/or memory complexities of our proposed algorithms could be improved to match the respective time/memory lower bounds. Closing these gaps, as well as extending our results to randomized or fault-tolerant settings, constitute natural avenues for future research.

\bibliographystyle{IEEEtranS}

\end{document}